\documentclass[twoside]{article}
\usepackage{amsmath, amssymb, amsthm}
\usepackage{mathtools}
\usepackage{graphicx}
\usepackage{hyperref}
\usepackage{geometry}
\usepackage{array}
\usepackage[T1]{fontenc}
\usepackage[utf8]{inputenc}
\usepackage{tikz}
\usetikzlibrary{arrows.meta,calc,positioning,fit,backgrounds}

\tikzset{
    setUniverse/.style={draw=black, thick, rounded corners, inner sep=12pt},
    setClosure/.style={draw=black, double, very thick, rounded corners, inner sep=8pt},
    setLog/.style={draw=black!75, ultra thick, rounded corners, inner sep=10pt},
    setAtoms/.style={draw=black, thick, rounded corners, inner sep=16pt},
    setRedun/.style={draw=black, ultra thick, dashed, rounded corners, inner sep=16pt},
    setRecon/.style={draw=black, ultra thick, dotted, rounded corners, inner sep=12pt},
    elemA/.style={circle, draw=black, semithick, minimum size=3.8mm, inner sep=0pt, fill=black!15},
    elemA2/.style={circle, draw=black, semithick, minimum size=3.8mm, inner sep=0pt, fill=black!30},
    elemA3/.style={circle, draw=black, semithick, minimum size=3.8mm, inner sep=0pt, fill=black!45},
    elemJ/.style={circle, draw=black, semithick, minimum size=3.8mm, inner sep=0pt, fill=white,
                  path picture={\draw[black,semithick] (path picture bounding box.south west)--(path picture bounding box.north east)
                                (path picture bounding box.north west)--(path picture bounding box.south east);}},
    elemC/.style={rectangle, draw=black, semithick, minimum size=3.4mm, inner sep=0pt, fill=gray!35, rounded corners=0.6pt},
    elemH/.style={circle, draw=black, thick, dashed, minimum size=4.2mm, inner sep=0pt, fill=black!15},
    relDer/.style={-{Stealth[length=2.2mm,width=1.8mm]}, thick, draw=black, shorten >=1.5pt, shorten <=1.5pt},
    relGen/.style={-{Latex[length=2.4mm,width=1.9mm]}, very thick, draw=black!70, dash pattern=on 4.5pt off 2.5pt, shorten >=1.5pt, shorten <=1.5pt}
}

\tikzset{
  box/.style={
    draw, rounded corners, align=center,
    inner sep=5pt, minimum height=10mm,
    text=black
  },
  rdbox/.style={
    draw, dashed, rounded corners, align=center,
    inner sep=4pt, minimum height=8mm,
    fill=gray!30, text=black,
    font=\scriptsize
  },
  arr/.style={->, >=Latex, thick, shorten >=2pt, shorten <=2pt},
  outarr/.style={arr, rounded corners},
  rdarr/.style={->, >=Latex, dashed, thin, draw=gray!70,
                shorten >=1pt, shorten <=1pt},
  rdflow/.style={->, >=Latex, dashed, thin, draw=black,
                 shorten >=1pt, shorten <=1pt, rounded corners},
  rdbi/.style={<->, >=Latex, dashed, thin, draw=black,
               shorten >=1pt, shorten <=1pt, rounded corners}
}

\newcommand{\inhArc}[3][]{
    \draw[inh,#1] #2 -- #3; 
    \node[inhmark] at ($#3!0.03!#2$) {};
}

\usepackage{pgfplots}
\pgfplotsset{compat=1.18}
\usepgfplotslibrary{groupplots}
\usepackage{booktabs}
\usepackage{enumitem}

\DeclareMathOperator{\Fr}{Fr}

\newtheorem{theorem}{Theorem}[section]
\newtheorem{definition}[theorem]{Definition}
\newtheorem{corollary}[theorem]{Corollary}
\newtheorem{lemma}[theorem]{Lemma}
\newtheorem{proposition}[theorem]{Proposition}
\newtheorem{assumption}[theorem]{Assumption}

\newcommand{\Cn}{\mathrm{Cn}}
\newcommand{\Atom}{\mathrm{Atom}}
\newcommand{\sem}{\mathrm{sem}}

\newcommand{\E}{\mathbb{E}}
\newcommand{\1}{\mathbf{1}}

\title{Closure-Preserving Rate--Distortion for Reversible Logging}
\author{Jianfeng~Xu \\ Koguan School of Law, China Institute for Smart Justice,\\
School of Computer Science, Shanghai Jiao Tong University,\\ Shanghai 200030, China. \\ \texttt{xujf@sjtu.edu.cn}}
\date{}

\begin{document}
\maketitle

\begin{abstract}
We study semantic compression of reversible-execution evidence for rollback reasoning. 
A run is a finite fact base; rollback semantics are modeled by a monotone closure operator induced by function-free Horn rules. 
A single edit replaces one fact by another; fidelity is the Jaccard discrepancy of the resulting closures, yielding a finite-alphabet distortion for rate–distortion analysis. 
A deterministic deletion scan decomposes the log into an irredundant core—preserving the closure—and a redundant remainder. 
Under admissible reconstructions (facts entailed by the original log), redundant facts are distortion-invisible, reducing the semantic rate–distortion function to a core-only optimization scaled by the core probability mass. 
At zero distortion, the optimal rate is a hypergraph entropy induced by overlaps of zero-distortion reconstruction sets on the core. 
We introduce a rollback-task loss based on a rollback observable, deriving parallel endpoint and factorization laws. 
The framework is instantiated on reversible causal nets and discussed in the event-structure view, showing how reversing disciplines yield different cores and compression frontiers. 
Numerical evaluation uses Blahut–Arimoto to design single-letter test channels and Monte Carlo reconstruction to assess end-to-end degradation at the log level.
\end{abstract}

\paragraph{Keywords.}
reversible computation; causal-consistent reversibility; reversible logging;
semantic rate--distortion; Datalog/Horn closure; hypergraph entropy.

\section{Introduction}
\label{sec:intro}

Reversible computation studies models in which executed actions may be undone,
motivated by fault recovery, debugging, and reversible views of concurrent or
biochemical dynamics \cite{toffoli1977computation,fredkin1982conservative,
danos2004reversible,ulidowski2018reversing}.
In concurrent settings, rollback is inherently \emph{discipline-sensitive}:
which backward steps are legal depends on causal dependencies, conflict,
prevention evidence, and sometimes reverse-causality constraints
\cite{lanese2024axiomatic,aubert2025independence,melgratti2024reversible}.
Different reversing disciplines demand different auxiliary evidence to be retained.

This paper asks a quantitative version of that requirement:
\emph{how many bits are necessary to store execution evidence so that a
reconstruction preserves rollback-relevant consequences?}
Classical rate--distortion theory provides a natural framework: given a source
and a fidelity criterion, it characterizes the minimum rate needed to achieve a
prescribed average distortion \cite{shannon1959coding,cover2006elements,
csizsar2011information}.
Rollback fidelity, however, is not naturally symbolwise (e.g., Hamming distance on
logged atoms): a local edit can enable or disable many rollback judgments through
the underlying semantics, while some facts may be redundant because they are
derivable from others.
A purely symbolwise distortion therefore over-penalizes semantically redundant
facts and fails to reflect global rollback effects.

We align compression with rollback semantics through a closure-based interface.
A single run is represented as a finite fact base \(S_O\subseteq\mathbb{S}\),
and rollback-relevant meaning is modeled by a monotone closure operator
\(\Cn_{\mathrm{rev}}\) induced by an effective rule system (e.g., function-free
Horn rules / Datalog) \cite{abiteboul1995foundations,krotzsch2025modern}.
A designated rollback observable is then a projection of closure consequences:
\[
\mathsf{RB}(S)=\Cn_{\mathrm{rev}}(S)\cap\mathbb{Q}_{\mathrm{rb}},
\]
for a fixed query vocabulary \(\mathbb{Q}_{\mathrm{rb}}\).
We study the effect of single-symbol edits of the fixed log \(S_O\): replacing
one fact \(s\in S_O\) by a reconstruction symbol \(\hat{s}\in\hat{\mathbb{S}}\)
forms an edited log
\(S_O[s\leftarrow \hat{s}]=(S_O\setminus\{s\})\cup\{\hat{s}\}\).
Distortion is measured by the induced Jaccard-type discrepancy between closures,
yielding a bounded finite-alphabet distortion suitable for Shannon rate--distortion
analysis.

Our information-theoretic object is the rate--distortion function of i.i.d.\ draws
from a fixed reference log: fix \(S_O\), sample \(S\sim P_O\) supported on \(S_O\),
and design a test channel \(P_{\hat{S}\mid S}\) over a reconstruction alphabet
\(\hat{\mathbb{S}}\).
This isolates the semantic redundancy structure of a given run and yields clean
endpoint and factorization laws; it is not a probabilistic model over complete
executions.

Closure-preserving fidelity makes redundancy explicit.
Many logged facts are consequences of others under \(\Cn_{\mathrm{rev}}\) and can
be deleted or replaced without changing rollback semantics.
A deterministic deletion scan (Definition~\ref{def:core}) decomposes \(S_O\) into
an irredundant \emph{core} \(A\) and a redundant remainder \(J\).
Under an admissible reconstruction alphabet
\(\hat{\mathbb{S}}\subseteq\Cn_{\mathrm{rev}}(S_O)\), every redundant fact is
distortion-invisible, so both the zero-distortion limit and the full
rate--distortion curve reduce to a core-only optimization scaled by the core mass
\(P_A=P_O(A)\).
This ``core reduction'' principle, closely related to closure-based
rate--distortion for deductive sources \cite{xu2026rate}, yields a quantitative
compression frontier that depends on the rollback discipline through
\(\Cn_{\mathrm{rev}}\).

We instantiate \(\Cn_{\mathrm{rev}}\) on reversible causal nets (RCN) and
reversible prime event structures (rPES) by defining discipline-indexed monotone
closures for causal, cause-respecting, and inverse-causal semantics
\cite{melgratti2024reversible,melgratti2025relating}.
This makes discipline sensitivity quantitative: different disciplines induce
different cores, hence different rate--distortion trade-offs.
We complement the theory with a numerical evaluation pipeline that matches the
single-log viewpoint: for fixed logs and discipline-indexed closures, we construct
finite alphabets and distortion/loss matrices, compute Blahut--Arimoto optimal
single-letter test channels \cite{blahut1972computation,arimoto1972algorithm},
and lift the resulting channels to set-valued logs by independent per-fact
reconstruction followed by set union.
Because set semantics introduce nonlinearities (duplicates collapsing under
union), we report end-to-end log-level rollback degradation via Monte Carlo
reconstruction.

The paper makes four contributions.

\begin{enumerate}
    \item \textbf{Semantic rate--distortion for reversible logging.}
    We formulate reversible logging as a Shannon rate--distortion problem under a
    closure-preserving fidelity criterion, adapting the general deductive-source
    framework of \cite{xu2026rate} to the reversible setting by instantiating
    \(\Cn_{\mathrm{rev}}\) and defining a single-symbol edit distortion
    \(d_{\Cn}\) aligned with rollback semantics.

    \item \textbf{Rollback-task loss and task-aligned RD.}
    We introduce a rollback-task loss \(\ell_{\mathrm{rb}}\) that counts how many
    rollback-relevant judgments change under an edit, prove a core-only
    factorization for \(R_{\mathrm{rb}}(L)\), and establish a perfect-rollback
    hypergraph-entropy law \(R_{\mathrm{rb}}(0)=P_A H_{\Gamma_0^{\mathrm{rb}}}(\pi_A)\).
    We further show that perfect rollback in this task sense is never harder than
    perfect closure preservation, \(R_{\mathrm{rb}}(0)\le R_{\mathrm{sem}}(0)\),
    with strict separation possible under reduced query sets.

    \item \textbf{Discipline-indexed RCN/rPES instantiation.}
    Building on the RCN/rPES correspondence \cite{melgratti2024reversible}, we
    define monotone Horn closures indexed by reversing disciplines and prove that
    causal/cause-respecting closures yield frontier-shaped cores, while
    inverse-causal closures can expand the core to include causal ancestors.
    Different disciplines thus induce different compression frontiers.

    \item \textbf{End-to-end numerical evaluation at the log level.}
    We implement a pipeline that combines Blahut--Arimoto single-letter
    optimization with Monte Carlo reconstruction of set-valued logs, quantifying
    discipline sensitivity, utility at matched rate points, and rate savings
    under fixed quality targets.
\end{enumerate}

Figure~\ref{fig:revcomp-rd-rect-loop} gives a high-level overview of the overall semantic interface and the rate--distortion design loop employed in this paper. It shows how the reversible logging pipeline, the closure-based fidelity target \(\Cn_{\mathrm{rev}}\), the two distortion measures (\(d_{\Cn}\) for closure preservation and \(\ell_{\mathrm{rb}}\) for rollback tasks), and the canonical core reduction \(S_O=A\uplus J\) are integrated into a finite-alphabet RD problem that can be evaluated numerically (e.g., via Blahut--Arimoto).

\begin{figure}[t]
\centering
\makebox[\linewidth][c]{
\begin{tikzpicture}[font=\small]
\def\hgap{10mm}
\def\vgap{43mm}     
\def\rdgap{5mm}    
\def\rdup{5mm}     
\def\sidegap{10mm}

\node[box, text width=28mm] (run)  {Program\\ execution};
\node[box, text width=28mm, right=\hgap of run] (inst) {Instrumentation\\ \& event capture};
\node[box, text width=28mm, right=\hgap of inst] (log)  {Event log\\ (trace)};

\node[box, text width=28mm] (rb)     at ($(run)+(0,-\vgap)$)   {Rollback\\ recovery};
\node[box, text width=28mm] (replay) at ($(inst)+(0,-\vgap)$)  {Replay\\ \& analysis};
\node[box, text width=28mm] (store)  at (log |- rb.center) {Log management\\ store / index / (optional) compress};

\node[rdbox, text width=28mm, anchor=north] (rd_run)
  at ($(run.south)+(0,-\rdgap)$) {Fidelity target\\(\(\Cn_{\mathrm{rev}}\)-invariance)};
\node[rdbox, text width=28mm, anchor=north] (rd_inst)
  at ($(inst.south)+(0,-\rdgap)$) {Distortion / loss\\(\(d_{\Cn}\) or \(\ell_{\mathrm{rb}}\))};
\node[rdbox, text width=36mm, anchor=north] (rd_log)
  at ($(log.south)+(0,-\rdgap)$) {Core reduction\\(\(A=\Atom_{\mathrm{rev}}(S_O)\)) \& BA};

\node[rdbox, text width=28mm, anchor=south] (rd_rb)
  at ($(rb.north)+(0,\rdup)$) {Fidelity check\\(consistency)};
\node[rdbox, text width=28mm, anchor=south] (rd_replay)
  at ($(replay.north)+(0,\rdup)$) {Reconstruction\\for queries};
\node[rdbox, text width=28mm, anchor=south] (rd_store)
  at ($(store.center |- rb.north)+(0,\rdup)$) {Rate budget\\(keep / tier)};

\draw[arr] (run) -- (inst);
\draw[arr] (inst) -- (log);
\draw[outarr] (log.east) -- ++(\sidegap,0) |- (store.east);
\draw[arr] (store.west) -- (replay.east);
\draw[arr] (replay.west) -- (rb.east);
\draw[outarr] (rb.west) -- ++(-\sidegap,0) |- (run.west);

\draw[rdarr] (run.south)  -- (rd_run.north);
\draw[rdarr] (inst.south) -- (rd_inst.north);
\draw[rdarr] (log.south)  -- (rd_log.north);

\draw[rdarr] (rd_rb.south)     -- (rb.north);
\draw[rdarr] (rd_replay.south) -- (replay.north);
\draw[rdarr] (rd_store.south)  -- (store.north);

\draw[rdflow] (rd_run.east)  -- (rd_inst.west);
\draw[rdflow] (rd_inst.east) -- (rd_log.west);

\draw[rdbi] (rd_log.south) -- (rd_store.north);

\draw[rdflow] (rd_store.west)   -- (rd_replay.east);
\draw[rdflow] (rd_replay.west)  -- (rd_rb.east);

\draw[rdflow] (rd_rb.north) -- (rd_run.south);

\draw[rdbi] (rd_inst.south) -- (rd_replay.north);

\end{tikzpicture}
}
\caption{Semantic interface for closure-preserving reversible logging.
The outer solid loop is the reversible pipeline (execution \(\rightarrow\) instrumentation \(\rightarrow\) log \(\rightarrow\) storage/indexing \(\rightarrow\) replay \(\rightarrow\) rollback \(\rightarrow\) execution).
The inner dashed loop is the rate--distortion (RD) design layer specialized to this paper: the fidelity target is invariance of the reversible semantic closure \(\Cn_{\mathrm{rev}}\); distortion is measured by the closure-based distortion \(d_{\Cn}\) (or the rollback-task loss \(\ell_{\mathrm{rb}}\)); and coding design exploits the canonical core reduction \(S_O=A\uplus J\) with \(A=\Atom_{\mathrm{rev}}(S_O)\) before numerical evaluation (e.g., Blahut--Arimoto) of the resulting finite-alphabet RD problem.}
\label{fig:revcomp-rd-rect-loop}
\end{figure}

The rest of the article proceeds as follows.
Section~\ref{sec:prelim} defines the fact-base interface, reversible closure, and rollback observables.
Section~\ref{sec:zero} develops semantic rate--distortion under closure-preserving fidelity.
Section~\ref{sec:rb} connects semantic distortion to rollback safety and introduces rollback-task rate--distortion.
Section~\ref{sec:rcn-theory} instantiates discipline-indexed closures on RCN/rPES and analyzes induced cores.
Section~\ref{sec:rcn-exp} reports numerical evaluation combining Blahut--Arimoto design with Monte Carlo reconstruction.
Section~\ref{sec:related} reviews related work, and Section~\ref{sec:conclusion} concludes.

\section{Preliminaries: Fact Bases, Proof Systems, and Rollback Semantics}
\label{sec:prelim}

This section fixes the semantic interface used throughout the paper.
A single execution history is represented as a finite fact base \(S_O\subseteq\mathbb{S}\);
rollback-relevant meaning is captured by a monotone closure operator \(\Cn_{\mathrm{rev}}\)
induced by an effective proof system.
In the rest of the paper, preserving the full closure \(\Cn_{\mathrm{rev}}(S_O)\) serves as a sufficient fidelity criterion for rollback,
while \(\mathsf{RB}(S)\) (Definition~\ref{def:rb-capability}) captures the minimal rollback observable of interest.
Our instantiations follow standard causal-consistent reversibility frameworks (e.g., RCCS/LTSI and reversible Petri-net/event-structure accounts)
and reversible message-passing logs; see, e.g., \cite{lanese2024axiomatic,melgratti2024reversible,oguchi2025revmigo}.

Throughout, rollback-relevant semantics are captured through (projections of)
closure consequences \(\Cn_{\mathrm{rev}}(S)\).
Accordingly, we adopt closure preservation
\(\Cn_{\mathrm{rev}}(\hat{S})=\Cn_{\mathrm{rev}}(S)\) as a principled and
sufficient semantic fidelity criterion for reversible logging, aligning the
reversible setting with closure-based compression.

\subsection{Fact universe and log fact bases}
\label{sec:fact-universe}

Fix a finite set \(\mathcal{E}\) of \emph{forward events}. Let \(\underline{\mathcal{E}}\) be a disjoint set of corresponding
\emph{reverse events}; we write \(\underline{e}\) for the reverse of \(e\in\mathcal{E}\).
In Petri-net settings, events may be transition occurrences; in process-calculus settings, events may be equivalence classes of
transitions (as in LTSI/event-based accounts); in message-passing settings, events may be log primitives.

We assume each event identifier \(e\in \mathcal{E}\cup \underline{\mathcal{E}}\) is represented by a corresponding constant symbol,
so that event occurrences and relations can be recorded as ground facts in \(\mathbb{S}\).

We work with a model-agnostic interface in which the target reversible model induces relations on events, possibly partially defined:
\begin{itemize}
  \item \emph{Causality} \(<\ \subseteq \mathcal{E}\times \mathcal{E}\);
  \item \emph{Conflict} \(\#\ \subseteq \mathcal{E}\times \mathcal{E}\);
  \item \emph{Reverse-causality} \(\prec\ \subseteq \mathcal{E}\times \underline{\mathcal{E}}\) (``\(e\) is a cause of undoing \(u\)'');
  \item \emph{Prevention} \(\triangleright\ \subseteq \mathcal{E}\times \underline{\mathcal{E}}\) (``presence of \(e\) blocks undoing \(u\)'');
  \item Optionally, \emph{sustained causation} \(\ll\ \subseteq \mathcal{E}\times \mathcal{E}\), used in axiomatic frameworks to tie
        \(<\) to prevention constraints.
\end{itemize}
The concrete meaning of these relations is instantiated in Section~\ref{sec:rcn-theory}.

\paragraph{Fact signature and ambient universe.}
Let \(\Sigma_{\mathrm{rev}}\) be a finite relational signature containing at least the predicate symbols:
\[
\mathsf{Fwd}(\cdot),\ \mathsf{Rev}(\cdot),\ \mathsf{Occurs}(\cdot),\ 
\mathsf{Cause}(\cdot,\cdot),\ \mathsf{Conflict}(\cdot,\cdot),\ 
\mathsf{RevCause}(\cdot,\cdot),\ \mathsf{Prevent}(\cdot,\cdot).
\]
(Concrete instantiations may add, e.g., \(\mathsf{Key}(k,\cdot,\cdot)\), FIFO order facts, place/marking facts, etc.)
Fix a finite set of constants containing all event identifiers for the execution under study (and any auxiliary identifiers such as keys).
Let \(\mathbb{S}\) denote the set of all \emph{ground atoms} over \(\Sigma_{\mathrm{rev}}\) and these constants.
Since both the signature and the constant set are finite, the ambient universe \(\mathbb{S}\) is finite.

\begin{definition}[Log fact base]
\label{def:log-fact-base}
A (single-run) \emph{log fact base} is any finite set \(S_O\subseteq\mathbb{S}\).
Intuitively, \(S_O\) contains:
\begin{enumerate}
  \item \emph{Dynamic history facts}, e.g., \(\mathsf{Occurs}(e)\) for executed forward events \(e\),
        and possibly model-dependent facts such as ``a send with key \(k\) occurred''.
  \item \emph{Static semantic facts} needed for rollback reasoning, e.g., \(\mathsf{Cause}(e,e')\),
        \(\mathsf{RevCause}(e,\underline{u})\), \(\mathsf{Prevent}(e,\underline{u})\), and \(\mathsf{Conflict}(e,e')\),
        when these are not globally fixed by the model.
\end{enumerate}
\end{definition}

For readability, we allow two equivalent presentations:
\begin{itemize}
  \item \textbf{All-in-log:} \(S_O\) contains both dynamic and static facts relevant to the run.
  \item \textbf{Fixed background:} a fixed finite set \(\mathcal{B}\subseteq\mathbb{S}\) encodes the static semantics for the program/model,
        and \(S_O\) contains only dynamic history facts; then reasoning is performed over \(\mathcal{B}\cup S_O\).
\end{itemize}
All definitions below work in either view; when needed we will write \(\Cn_{\mathrm{rev}}(S)\) as shorthand for
\(\Cn_{\mathrm{rev}}(\mathcal{B}\cup S)\) in the ``fixed background'' view.

\begin{assumption}[Finite universe]
\label{assm:finiteness}
For every execution under study, the logged fact base \(S_O\subseteq\mathbb{S}\) is finite, and the ambient universe
\(\mathbb{S}\) of ground atoms (over a finite relational signature and a finite constant set) is finite.
\end{assumption}

Because \(\mathbb{S}\) is finite, any function-free Horn closure over \(\mathbb{S}\) is finite and can be computed by iterative
saturation; a precise statement is given in Lemma~\ref{lem:finite-closure-computable} of Section~\ref{sec:closure}.

\subsection{A monotone reversible proof system and semantic closure}
\label{sec:closure}

We assume an effective monotone proof system \(\mathsf{PS}_{\mathrm{rev}}\) over ground facts.
In the concrete instantiations in this paper, \(\mathsf{PS}_{\mathrm{rev}}\) is given as a function-free Horn rule system (Datalog),
so that \(\Cn_{\mathrm{rev}}(S)\) is the least fixed point of the associated immediate-consequence operator.
The concrete rules are model-dependent, but conceptually they (i) propagate derived consequences of history facts and background structure,
and (ii) derive rollback-relevant judgments (e.g., blocking evidence and precedence obligations).
Crucially, we require that the induced closure operator satisfy the Tarski axioms (Assumption~\ref{assm:tarski} below).
For background on Datalog/Horn-rule inference and least-fixed-point semantics, see standard database references (e.g., \cite{abiteboul1995foundations}).

\begin{definition}[Reversible semantic closure]
\label{def:cn-rev}
Fix a monotone proof system \(\mathsf{PS}_{\mathrm{rev}}\) over \(\mathbb{S}\).
For any finite \(S\subseteq \mathbb{S}\), define
\[
\Cn_{\mathrm{rev}}(S) \coloneqq \{ s\in \mathbb{S} : S \vdash_{\mathsf{PS}_{\mathrm{rev}}} s \}.
\]
\end{definition}

When \(\mathsf{PS}_{\mathrm{rev}}\) is given as function-free Horn rules (Datalog), we take \(\Cn_{\mathrm{rev}}(S)\) to be the least fixed point
of the immediate-consequence operator induced by the rules, which coincides with Horn derivability
\cite{abiteboul1995foundations,dantsin2001complexity}.

\begin{lemma}[Finite Horn closure is effectively computable]
\label{lem:finite-closure-computable}
Assume \(\mathbb{S}\) is finite and \(\mathsf{PS}_{\mathrm{rev}}\) is a function-free Horn rule system over \(\mathbb{S}\).
Then for every \(S\subseteq\mathbb{S}\), the closure \(\Cn_{\mathrm{rev}}(S)\) is finite and can be computed by iterative saturation,
i.e., by repeatedly applying rules until a fixed point is reached.
\end{lemma}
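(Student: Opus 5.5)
We prove the statement by the standard Kleene-iteration argument for the immediate-consequence operator. The plan has three parts.

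First, we fix the operator. Let \(\mathcal{R}\) be the finite set of function-free Horn rules. Because the signature and constant set are finite (Assumption~\ref{assm:finiteness}), each rule has only finitely many ground instances over the constants, and the grounding \(\mathrm{ground}(\mathcal{R})\) is a finite set of ground Horn clauses over \(\mathbb{S}\). For \(X\subseteq\mathbb{S}\) we set
\[
T_S(X)\coloneqq S\cup\{h : (h\leftarrow b_1,\dots,b_k)\in\mathrm{ground}(\mathcal{R}),\ b_1,\dots,b_k\in X\}.
\]
This operator is monotone on the finite lattice \((2^{\mathbb{S}},\subseteq)\), and \(T_S(X)\) is computable from \(X\) because only finitely many ground instances need to be checked.

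Second, we run the iteration \(X_0=S\), \(X_{i+1}=T_S(X_i)\). By induction and monotonicity, \(X_i\subseteq X_{i+1}\). Every \(X_i\) lies inside \(\mathbb{S}\), and a strictly increasing chain of subsets of \(\mathbb{S}\) has length at most \(|\mathbb{S}|\). So the chain stabilizes after at most \(|\mathbb{S}|-|S|\) strict steps, giving \(X_n=X_{n+1}\eqqcolon X^\ast\). This iteration is exactly the saturation procedure described in the statement, and it terminates. Finiteness of \(\Cn_{\mathrm{rev}}(S)\) is immediate, since it is a subset of the finite set \(\mathbb{S}\).

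Third, we show \(X^\ast=\Cn_{\mathrm{rev}}(S)\).
\begin{itemize}
\item \(X^\ast\) is a fixed point, and it is the least one containing \(S\). If \(Y\supseteq S\) and \(T_S(Y)\subseteq Y\), then induction gives \(X_i\subseteq Y\) for every \(i\).
\item Every element of \(X^\ast\) is Horn-derivable from \(S\). We argue by induction on \(i\): each new atom comes from a rule instance whose body atoms are already derivable.
\item Every derivable atom lies in \(X^\ast\). We argue by induction on the length of the derivation, using closure of \(X^\ast\) under \(T_S\).
\end{itemize}
Together these give the cited coincidence of the least fixed point with derivability, so \(X^\ast=\Cn_{\mathrm{rev}}(S)\) in the sense of Definition~\ref{def:cn-rev}.

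The main point needing care is the grounding step. Function-freeness is what guarantees no new constants, and hence a finite ground program; without it, \(\mathbb{S}\) could not bound the iteration. I would state this explicitly. The rest is routine.
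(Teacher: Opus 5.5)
Your proposal is correct and follows the same route as the paper: iterate the monotone immediate-consequence operator from \(S\), use finiteness of \(\mathbb{S}\) to bound the number of strict additions, and identify the resulting fixed point with Horn derivability. The only difference is that you prove the finiteness of the grounding and the fixed-point/derivability coincidence explicitly, and you build \(S\) into the operator as \(T_S\); the paper instead cites these standard facts.
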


\begin{proof}
Because \(\mathbb{S}\) is finite and \(\Cn_{\mathrm{rev}}(S)\subseteq \mathbb{S}\), any increasing sequence of derived fact sets stabilizes
after at most \(|\mathbb{S}|\) strict additions. For Horn rules, the immediate-consequence operator is monotone, hence iterating it from \(S\)
reaches the least fixed point in finitely many steps, which equals the set of derivable facts (see, e.g., \cite{abiteboul1995foundations,dantsin2001complexity}).
\end{proof}

\begin{assumption}[Tarski closure axioms / sufficient conditions]
\label{assm:tarski}
The operator \(\Cn_{\mathrm{rev}}\) satisfies:
(i) \(S\subseteq \Cn_{\mathrm{rev}}(S)\) (reflexivity),
(ii) \(S\subseteq T \Rightarrow \Cn_{\mathrm{rev}}(S)\subseteq \Cn_{\mathrm{rev}}(T)\) (monotonicity),
(iii) \(\Cn_{\mathrm{rev}}(\Cn_{\mathrm{rev}}(S))=\Cn_{\mathrm{rev}}(S)\) (idempotence).

A sufficient (and typical) condition is that \(\mathsf{PS}_{\mathrm{rev}}\) is a function-free Horn rule system
(e.g., Datalog) and \(\Cn_{\mathrm{rev}}(S)\) is its least fixed point.
\end{assumption}

\begin{lemma}[Horn closure implies Tarski axioms]
\label{lem:horn-tarski}
If \(\mathsf{PS}_{\mathrm{rev}}\) is a monotone Horn inference system and \(\Cn_{\mathrm{rev}}(S)\) is defined as the set of all
facts derivable from \(S\) (equivalently, the least fixed point of the immediate-consequence operator),
then \(\Cn_{\mathrm{rev}}\) satisfies the three Tarski closure axioms in Assumption~\ref{assm:tarski}.
\end{lemma}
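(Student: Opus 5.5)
We prove the three axioms directly from the derivation-based definition of \(\Cn_{\mathrm{rev}}\) in Definition~\ref{def:cn-rev}, and use the least-fixed-point characterization only where it shortens the argument. Fix the rule set \(\mathcal{R}\) of \(\mathsf{PS}_{\mathrm{rev}}\). For \(S\subseteq\mathbb{S}\), a derivation of \(s\) from \(S\) is a finite sequence \(s_1,\dots,s_n=s\) in which each \(s_i\) is either in \(S\) or is the head of a ground instance of a rule in \(\mathcal{R}\) whose body atoms all occur among \(s_1,\dots,s_{i-1}\). Equivalently, let \(T_S(X)=S\cup\{\text{heads of ground rule instances with body}\subseteq X\}\). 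Then \(\Cn_{\mathrm{rev}}(S)=\bigcup_k T_S^k(\emptyset)\), which is the least fixed point of \(T_S\) and equals the set of derivable facts, as quoted before Lemma~\ref{lem:finite-closure-computable}.

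\emph{Reflexivity.} For \(s\in S\), the one-element sequence \(s\) is a derivation, so \(s\in\Cn_{\mathrm{rev}}(S)\).

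\emph{Monotonicity.} Suppose \(S\subseteq T\). Any derivation from \(S\) is also a derivation from \(T\), because each step is either a premise in \(S\subseteq T\) or a rule application whose justification does not depend on the premise set. Hence \(\Cn_{\mathrm{rev}}(S)\subseteq\Cn_{\mathrm{rev}}(T)\). This uses exactly the monotone, negation-free character of Horn rules: adding premises never blocks a rule instance.

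\emph{Idempotence.} Reflexivity applied to \(\Cn_{\mathrm{rev}}(S)\) gives the inclusion \(\Cn_{\mathrm{rev}}(S)\subseteq\Cn_{\mathrm{rev}}(\Cn_{\mathrm{rev}}(S))\). For the reverse inclusion, take a derivation \(t_1,\dots,t_m\) of \(t\) from \(\Cn_{\mathrm{rev}}(S)\). Each premise step \(t_i\in\Cn_{\mathrm{rev}}(S)\) has its own derivation from \(S\). Splice each such derivation in place of the corresponding premise step, keeping the order so that every body atom still precedes its use. The result is a derivation of \(t\) from \(S\), so \(t\in\Cn_{\mathrm{rev}}(S)\).

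Idempotence also has a fixed-point proof. Let \(C=\Cn_{\mathrm{rev}}(S)\). Since \(C\) is closed under every rule instance and contains itself, \(T_C(C)=C\). Any fixed point \(Y\) of \(T_C\) contains \(C\) by the definition of \(T_C\), so \(C\) is the least fixed point of \(T_C\), that is, \(\Cn_{\mathrm{rev}}(C)=C\).

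The only step needing care is the splicing argument for idempotence: the concatenated sequence must remain well ordered, and finiteness of all derivations makes the splicing terminate. Both hold trivially here, so we expect no real obstacle.
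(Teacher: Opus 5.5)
Your proof is correct, but it takes a different route from the paper's. The paper's proof works entirely through the immediate-consequence operator. Reflexivity holds because the iteration starts from \(S\). Monotonicity holds because the operator is monotone. Idempotence holds because \(\Cn_{\mathrm{rev}}(S)\) is already a fixed point, so ``closing it again adds no facts''. Knaster--Tarski is cited for existence of the least fixed point.

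You instead argue proof-theoretically with finite derivation sequences. Monotonicity becomes the observation that whether a rule instance applies does not depend on the premise set. Idempotence becomes a cut-style splicing of derivations. This gives a self-contained argument that needs no fixed-point theorem and applies to any finitary, negation-free rule system, not only Datalog. The paper's route is shorter and matches the saturation view of Lemma~\ref{lem:finite-closure-computable}.

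Your alternative fixed-point proof of idempotence is in spirit the paper's step (iii), but it is more precise. Because you index the operator as \(T_S\), you make explicit that what must be shown is that \(C=\Cn_{\mathrm{rev}}(S)\) is the \emph{least} fixed point of \(T_C\). This holds because every fixed point of \(T_C\) contains \(C\), and \(T_C(C)=C\) since \(C=T_S(C)\) already contains the heads of all rule instances with body in \(C\). The paper's unindexed \(T_{\mathsf{PS}}\) leaves this implicit.

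Likewise, the paper's monotonicity step tacitly needs \(T_S(X)\subseteq T_{S'}(X')\) whenever \(S\subseteq S'\) and \(X\subseteq X'\), followed by induction on the iterates. Your derivation argument avoids that bookkeeping altogether.
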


\begin{proof}
For Horn rules, let \(T_{\mathsf{PS}}\) be the immediate-consequence operator. It is monotone, hence by the Knaster--Tarski fixed-point theorem
its least fixed point \(\mathsf{lfp}(T_{\mathsf{PS}})\) exists \cite{tarski1955lattice}.
By definition, \(\Cn_{\mathrm{rev}}(S)\) is obtained by iterating \(T_{\mathsf{PS}}\) from \(S\) to the least fixed point, so:
(i) \(S\subseteq \Cn_{\mathrm{rev}}(S)\) holds because the iteration starts from \(S\);
(ii) monotonicity holds because \(T_{\mathsf{PS}}\) is monotone in its input;
(iii) idempotence holds because \(\Cn_{\mathrm{rev}}(S)\) is already a fixed point of \(T_{\mathsf{PS}}\), hence closing it again adds no facts.
\end{proof}

\paragraph{Remark (negative enabling conditions and monotone closures).}
Operational presentations of reversibility often phrase enabling conditions
negatively (e.g., ``\(\underline{u}\) is enabled only if no preventer is
present'').
Such conditions are non-monotone if encoded directly as logical inference.
We therefore require \(\Cn_{\mathrm{rev}}\) to be a \emph{monotone} closure
operator and encode \emph{evidence of blocking} as positive derived facts,
e.g.,
\[
\mathsf{Prevent}(e,\underline{u}) \wedge \mathsf{Occurs}(e)
\Rightarrow \mathsf{Blocked}(\underline{u}),
\]
and choose rollback-relevant observables \(\mathbb{Q}_{\mathrm{rb}}\) so that
rollback capability is expressed through such positive evidence (e.g., blockers
and precedence obligations).
This keeps \(\Cn_{\mathrm{rev}}\) within the standard monotone Horn/Datalog
setting while still capturing rollback constraints.

\subsection{Rollback semantics and closure invariance}
\label{sec:rollback-equiv}

The role of \(\Cn_{\mathrm{rev}}\) is to provide a \emph{semantic interface} for rollback.
We select a designated set \(\mathbb{Q}_{\mathrm{rb}}\subseteq \mathbb{S}\) of \emph{rollback-relevant} ground facts
and define rollback capability as the closure projected to \(\mathbb{Q}_{\mathrm{rb}}\).
Typical examples include
\[
\mathsf{Blocked}(\underline{e}),\quad
\mathsf{MustUndoBefore}(\underline{e},\underline{u}),\quad
\mathsf{Maximal}(e),\quad
\mathsf{NonMax}(e),
\]
depending on the chosen reversible model.
Since \(\mathbb{S}\) is finite (Assumption~\ref{assm:finiteness}) and \(\mathbb{Q}_{\mathrm{rb}}\subseteq \mathbb{S}\), the query set \(\mathbb{Q}_{\mathrm{rb}}\) is finite.

\begin{definition}[Rollback capability as a closure observable]
\label{def:rb-capability}
Define the rollback capability induced by a log \(S\subseteq\mathbb{S}\) as
\[
\mathsf{RB}(S)\ \coloneqq\ \Cn_{\mathrm{rev}}(S)\ \cap\ \mathbb{Q}_{\mathrm{rb}}.
\]
\end{definition}

\begin{definition}[Rollback-equivalence (semantic target)]
\label{def:rollback-equiv}
Two logs \(S,T\subseteq \mathbb{S}\) are \emph{rollback-equivalent}, written \(S \equiv_{\mathrm{rb}} T\), if
\[
S \equiv_{\mathrm{rb}} T
\quad\Longleftrightarrow\quad
\mathsf{RB}(S)=\mathsf{RB}(T).
\]
\end{definition}

The next statement explains when rollback-equivalence coincides with full closure equivalence.

\begin{proposition}[When rollback-equivalence matches full closure equivalence]
\label{prop:rb-vs-closure}
If \(\mathbb{Q}_{\mathrm{rb}}=\mathbb{S}\), then for all finite \(S,T\subseteq\mathbb{S}\),
\[
S \equiv_{\mathrm{rb}} T \quad\Longleftrightarrow\quad \Cn_{\mathrm{rev}}(S)=\Cn_{\mathrm{rev}}(T).
\]
More generally, \(\Cn_{\mathrm{rev}}(S)=\Cn_{\mathrm{rev}}(T)\) always implies \(S\equiv_{\mathrm{rb}}T\) for any choice of \(\mathbb{Q}_{\mathrm{rb}}\). 
In particular, full closure equivalence is a sufficient condition for rollback-equivalence for any \(\mathbb{Q}_{\mathrm{rb}}\),
while the converse need not hold when \(\mathbb{Q}_{\mathrm{rb}}\subsetneq \mathbb{S}\).
\end{proposition}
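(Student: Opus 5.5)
The plan is to unfold Definitions~\ref{def:rb-capability} and~\ref{def:rollback-equiv} and argue directly at the level of sets; no closure axioms beyond the fact that \(\Cn_{\mathrm{rev}}(S)\subseteq\mathbb{S}\) are needed. I will prove the three claims in order: the general sufficiency direction, the equivalence when \(\mathbb{Q}_{\mathrm{rb}}=\mathbb{S}\), and a counterexample showing the converse can fail.

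\textbf{General sufficiency.} Suppose \(\Cn_{\mathrm{rev}}(S)=\Cn_{\mathrm{rev}}(T)\). Intersecting both sides with the fixed set \(\mathbb{Q}_{\mathrm{rb}}\) gives \(\mathsf{RB}(S)=\Cn_{\mathrm{rev}}(S)\cap\mathbb{Q}_{\mathrm{rb}}=\Cn_{\mathrm{rev}}(T)\cap\mathbb{Q}_{\mathrm{rb}}=\mathsf{RB}(T)\). Hence \(S\equiv_{\mathrm{rb}}T\), for any choice of \(\mathbb{Q}_{\mathrm{rb}}\).

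\textbf{The case \(\mathbb{Q}_{\mathrm{rb}}=\mathbb{S}\).} By Definition~\ref{def:cn-rev}, \(\Cn_{\mathrm{rev}}(S)\subseteq\mathbb{S}\), so \(\mathsf{RB}(S)=\Cn_{\mathrm{rev}}(S)\cap\mathbb{S}=\Cn_{\mathrm{rev}}(S)\), and likewise for \(T\). Rollback-equivalence is then literally closure equality. Together with the sufficiency step, this gives the biconditional.

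\textbf{Failure of the converse when \(\mathbb{Q}_{\mathrm{rb}}\subsetneq\mathbb{S}\).} Since this is a ``need not hold'' claim, one counterexample suffices. Take the empty Horn rule system, so that \(\Cn_{\mathrm{rev}}(S)=S\) by Lemma~\ref{lem:horn-tarski}. Pick any atom \(a\in\mathbb{S}\setminus\mathbb{Q}_{\mathrm{rb}}\), which exists because the inclusion is strict. Set \(S=\emptyset\) and \(T=\{a\}\). Then \(\mathsf{RB}(S)=\emptyset=\{a\}\cap\mathbb{Q}_{\mathrm{rb}}=\mathsf{RB}(T)\), yet \(\Cn_{\mathrm{rev}}(S)\neq\Cn_{\mathrm{rev}}(T)\).

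\textbf{Main obstacle.} The only delicate point is the phrasing of this last step. ``Need not hold'' should be read as the existence of some rule system witnessing the failure, which is the example above. It should not be read as a claim that the converse fails for every rule system: for a fixed system in which everything outside \(\mathbb{Q}_{\mathrm{rb}}\) is determined by \(\mathbb{Q}_{\mathrm{rb}}\)-facts, the converse could still hold. I would state this reading explicitly. If a witness with nontrivial rules is preferred, one could instead add a rule \(a\Rightarrow b\) with \(b\notin\mathbb{Q}_{\mathrm{rb}}\) and use the same construction.
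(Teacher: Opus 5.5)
Your proof is correct and matches the paper's argument for the two substantive claims: the paper likewise observes that \(\mathsf{RB}(S)=\Cn_{\mathrm{rev}}(S)\) when \(\mathbb{Q}_{\mathrm{rb}}=\mathbb{S}\), and it obtains the general implication by intersecting both closures with \(\mathbb{Q}_{\mathrm{rb}}\). The paper gives no witness for the ``need not hold'' remark, so your empty-rule-system counterexample is a useful addition, and your caveat that the converse can still hold for particular rule systems is accurate. One small correction: \(\Cn_{\mathrm{rev}}(S)=S\) for the empty system follows directly from derivability with no rules, not from Lemma~\ref{lem:horn-tarski}, which only gives \(S\subseteq\Cn_{\mathrm{rev}}(S)\).
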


\begin{proof}
If \(\mathbb{Q}_{\mathrm{rb}}=\mathbb{S}\), then \(\mathsf{RB}(S)=\Cn_{\mathrm{rev}}(S)\) by Definition~\ref{def:rb-capability},
so the equivalence is immediate from Definition~\ref{def:rollback-equiv}. The general implication
\(\Cn_{\mathrm{rev}}(S)=\Cn_{\mathrm{rev}}(T)\Rightarrow \mathsf{RB}(S)=\mathsf{RB}(T)\) follows by intersecting both sides with \(\mathbb{Q}_{\mathrm{rb}}\).
\end{proof}

In applications (Section~\ref{sec:rcn-theory}), we choose \(\mathbb{Q}_{\mathrm{rb}}\) so that \(\mathsf{RB}(S)\)
captures the rollback-relevant consequences that are representable by positive evidence in a monotone setting
(e.g., blocking evidence and precedence obligations).
Thus, preserving \(\Cn_{\mathrm{rev}}(S_O)\) is a sufficient condition for preserving rollback capability,
while preserving \(\mathsf{RB}(S_O)\) is the minimal requirement when only rollback observables matter.
With this interface in place, we can apply closure-preserving rate--distortion ideas in the spirit of \cite{xu2026rate}.

\begin{assumption}[Adequacy of the rollback observable]
\label{assm:rb-adequacy}
For the chosen discipline and rollback query vocabulary \(\mathbb{Q}_{\mathrm{rb}}\), equality
\[
\mathsf{RB}(S)=\mathsf{RB}(T)
\]
implies coincidence of the rollback judgments of interest in the target application
(e.g., the enabled reverse steps and the precedence obligations queried by the debugger or recovery mechanism).
\end{assumption}

Without Assumption~\ref{assm:rb-adequacy}, the results of Section~\ref{sec:rb} should be read as statements about
preservation of the selected rollback observable \(\mathsf{RB}(\cdot)\), rather than as a full operational equivalence theorem
for the underlying reversible semantics.
Under Assumption~\ref{assm:rb-adequacy}, these observable-preservation results specialize to rollback-safety guarantees
for the judgments represented by \(\mathbb{Q}_{\mathrm{rb}}\).

\section{Semantic Rate--Distortion under Closure-Preserving Fidelity}
\label{sec:zero}

This section instantiates a Shannon-style rate--distortion problem where \emph{distortion} is measured by preservation of reversible semantic closure.
The construction follows the spirit of closure-based compression for finite deductive systems (cf.\ \cite{xu2026rate}),
with the reversible closure \(\Cn_{\mathrm{rev}}\) playing the role of deductive closure.

\subsection{Closure fidelity and single-symbol distortion}
\label{sec:distortion}

We adopt the ``single stored log'' viewpoint: a finite log \(S_O\subseteq\mathbb{S}\) is given, and a random symbol is sampled from it.

\begin{assumption}[Symbol source from a fixed log]
\label{assm:source}
Fix a reference log \(S_O\subseteq\mathbb{S}\).
Let \(S\sim P_O\) be a random \emph{fact symbol} taking values in \(S_O\).
\end{assumption}

\paragraph{Modeling scope.}
The source variable \(S\sim P_O\) is a \emph{single-letter surrogate} extracted from one fixed reference log \(S_O\).
It is not a probabilistic model over complete executions.
Accordingly, the operational reading of \(R_{\sem}(D)\) is: first fix one execution log \(S_O\); then study the asymptotically
optimal coding rate for i.i.d.\ draws of log symbols from \(P_O\), while evaluating distortion relative to the same reference log \(S_O\).
This isolates the semantic redundancy structure of a fixed log and yields a finite-alphabet rate--distortion problem.

\begin{assumption}[Admissible reconstruction alphabet]
\label{assm:admissible}
The reconstruction alphabet \(\hat{\mathbb{S}}\subseteq\mathbb{S}\) is nonempty and \emph{admissible} in the sense that
\[
\hat{\mathbb{S}} \subseteq \Cn_{\mathrm{rev}}(S_O).
\]
\end{assumption}

Admissibility means reconstructions are drawn from facts already semantically entailed by the original log.
This matches the reversible-logging intuition: we only reconstruct facts that are semantically compatible with the original run.

\begin{definition}[Closure Jaccard fidelity]
\label{def:fidelity}
For finite \(S,T\subseteq\mathbb{S}\), define
\[
\mathsf{F}_{\Cn}(S,T) \coloneqq
\frac{|\Cn_{\mathrm{rev}}(S)\cap \Cn_{\mathrm{rev}}(T)|}{|\Cn_{\mathrm{rev}}(S)\cup \Cn_{\mathrm{rev}}(T)|},
\quad \text{with } 0/0 \coloneqq 1.
\]
The convention \(0/0\coloneqq 1\) corresponds to treating two empty closures as perfectly matching.
\end{definition}

\begin{definition}[Single-symbol closure distortion]
\label{def:distortion}
For \(s\in S_O\) and \(\hat{s}\in \hat{\mathbb{S}}\), define
\[
d_{\Cn}(s,\hat{s})
\coloneqq 1-\mathsf{F}_{\Cn}\bigl(S_O,\ (S_O\setminus\{s\})\cup\{\hat{s}\}\bigr).
\]
\end{definition}

The quantity \(d_{\Cn}(s,\hat s)\) is defined \emph{relative to the fixed reference log} \(S_O\): changing \(S_O\) changes the edited log
\((S_O\setminus\{s\})\cup\{\hat s\}\) and hence the induced closures being compared.
When helpful, we may write \(d_{\Cn}^{S_O}(s,\hat s)\) to make this dependence explicit.
This log dependence is essential for the core decomposition \(S_O=A\uplus J\) and for the strict zero-distortion invisibility of the redundant
part \(J\) under admissibility (Proposition~\ref{prop:redundant-zero}).

\begin{lemma}[Basic properties of \(d_{\Cn}\)]
\label{lem:dist-basic}
For all \(s\in S_O\) and \(\hat{s}\in\hat{\mathbb{S}}\), \(0\le d_{\Cn}(s,\hat{s})\le 1\).
Moreover, \(d_{\Cn}(s,\hat{s})=0\) if and only if
\(\Cn_{\mathrm{rev}}(S_O)=\Cn_{\mathrm{rev}}((S_O\setminus\{s\})\cup\{\hat{s}\})\).
\end{lemma}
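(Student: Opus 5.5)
The plan is to reduce both claims to elementary facts about the Jaccard ratio of two finite sets. Throughout, write \(X\coloneqq\Cn_{\mathrm{rev}}(S_O)\) and \(Y\coloneqq\Cn_{\mathrm{rev}}((S_O\setminus\{s\})\cup\{\hat s\})\). Both are subsets of the finite universe \(\mathbb{S}\) (Assumption~\ref{assm:finiteness}), so \(|X\cap Y|\) and \(|X\cup Y|\) are finite nonnegative integers. Then \(d_{\Cn}(s,\hat s)=1-\mathsf{F}_{\Cn}\) with \(\mathsf{F}_{\Cn}=|X\cap Y|/|X\cup Y|\), using the convention \(0/0\coloneqq 1\) of Definition~\ref{def:fidelity}.

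\textbf{Bounds.} I split into two cases according to whether \(X\cup Y\) is empty.
If \(X\cup Y=\emptyset\), the convention gives \(\mathsf{F}_{\Cn}=1\), so \(d_{\Cn}=0\in[0,1]\).
Otherwise \(|X\cup Y|\ge 1\). Since \(X\cap Y\subseteq X\cup Y\), we have \(0\le|X\cap Y|\le|X\cup Y|\), hence \(\mathsf{F}_{\Cn}\in[0,1]\) and \(d_{\Cn}\in[0,1]\).

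\textbf{Characterization of zero distortion.} Again I treat the two cases separately.

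\emph{Reverse direction.} Suppose \(X=Y\). If \(X=\emptyset\), the convention gives \(\mathsf{F}_{\Cn}=1\). If \(X\ne\emptyset\), then \(X\cap Y=X\cup Y=X\), so the ratio is \(|X|/|X|=1\). In both cases \(d_{\Cn}=0\).

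\emph{Forward direction.} Suppose \(d_{\Cn}=0\), i.e.\ \(\mathsf{F}_{\Cn}=1\).
If \(X\cup Y=\emptyset\), then \(X=Y=\emptyset\) directly.
Otherwise the denominator is a positive finite integer, so \(|X\cap Y|=|X\cup Y|\). Combined with \(X\cap Y\subseteq X\cup Y\) and finiteness, this forces \(X\cap Y=X\cup Y\). That in turn gives \(X\subseteq X\cup Y=X\cap Y\subseteq Y\), and symmetrically \(Y\subseteq X\), so \(X=Y\).

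No closure axioms beyond finiteness are needed; Assumption~\ref{assm:tarski} plays no role here.

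The only delicate point is the \(0/0\) convention. I must check that the empty-union case is handled consistently in both directions: empty closures count as equal and have fidelity \(1\). Note that with reflexivity and a nonempty \(S_O\), \(X\) is never empty, so this case is degenerate anyway. The other step requiring care is the use of finiteness to pass from equal cardinalities to set equality.
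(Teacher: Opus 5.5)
Your proof is correct and follows the same route as the paper. The paper simply notes that both claims are immediate from Definitions~\ref{def:fidelity} and~\ref{def:distortion} under the \(0/0\coloneqq 1\) convention. You have written out the case analysis it leaves implicit: the empty-union case, and equal finite cardinalities forcing \(X\cap Y=X\cup Y\) and hence \(X=Y\).
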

\begin{proof}
Both claims follow directly from Definition~\ref{def:fidelity} (Jaccard similarity lies in \([0,1]\) under the \(0/0\coloneqq 1\) convention)
and Definition~\ref{def:distortion}.
\end{proof}

By Lemma~\ref{lem:dist-basic}, \(d_{\Cn}(s,\hat{s})=0\) exactly when the single-symbol replacement preserves the full closure.

The distortion \(d_{\Cn}(s,\hat{s})\) measures the semantic impact of modifying a log by \emph{one} symbol while keeping the remaining facts fixed.
This is the natural ``local'' perturbation in closure-based compression: although the edit is local at the syntactic level, its effect is global
at the semantic level through \(\Cn_{\mathrm{rev}}\). This matches the reversible-logging viewpoint where a small change in the recorded history
may enable/disable many rollback-relevant consequences.

\subsection{Irredundant core and redundant part}
\label{sec:core}

\begin{assumption}[Public canonical order]
\label{assm:order}
Encoder and decoder share a public canonical total order \(\preceq\) over \(\mathbb{S}\).
\end{assumption}

\begin{definition}[Irredundant core via deletion scan]
\label{def:core}
Define \(A=\Atom_{\mathrm{rev}}(S_O)\) by the deterministic deletion scan:
start from \(A\leftarrow S_O\); scan \(s\in S_O\) in \(\preceq\)-order and delete \(s\) if
\(s\in \Cn_{\mathrm{rev}}(A\setminus\{s\})\).
Let \(J\coloneqq S_O\setminus A\).
\end{definition}

\begin{theorem}[Core properties]
\label{thm:core-properties}
Under Assumptions~\ref{assm:tarski} and~\ref{assm:order}, the core \(A\) satisfies:
\begin{enumerate}
  \item \(\Cn_{\mathrm{rev}}(A)=\Cn_{\mathrm{rev}}(S_O)\).
  \item \(\forall a\in A,\ a\notin \Cn_{\mathrm{rev}}(A\setminus\{a\})\) (irredundancy).
  \item \(A\) is uniquely determined by \((S_O,\Cn_{\mathrm{rev}},\preceq)\).
\end{enumerate}
\end{theorem}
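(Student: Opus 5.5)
The plan is to formalize the deletion scan as a finite sequence of intermediate sets and to prove each item by an invariant or a monotonicity argument. Enumerate \(S_O=\{s_1\preceq s_2\preceq\cdots\preceq s_n\}\) and let \(A_0=S_O\). For each \(i\), set \(A_i=A_{i-1}\setminus\{s_i\}\) if \(s_i\in\Cn_{\mathrm{rev}}(A_{i-1}\setminus\{s_i\})\), and \(A_i=A_{i-1}\) otherwise. Then \(A=A_n\), and the sequence is decreasing: \(A_0\supseteq A_1\supseteq\cdots\supseteq A_n\).

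For item 1, I prove the invariant \(\Cn_{\mathrm{rev}}(A_i)=\Cn_{\mathrm{rev}}(S_O)\) by induction on \(i\). The base case \(i=0\) is trivial. For the step, suppose \(s_i\) is deleted and write \(B=A_{i-1}\setminus\{s_i\}\), so \(s_i\in\Cn_{\mathrm{rev}}(B)\). Reflexivity gives \(B\subseteq\Cn_{\mathrm{rev}}(B)\), hence \(A_{i-1}=B\cup\{s_i\}\subseteq\Cn_{\mathrm{rev}}(B)\). Monotonicity and idempotence then give \(\Cn_{\mathrm{rev}}(A_{i-1})\subseteq\Cn_{\mathrm{rev}}(B)\). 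The reverse inclusion follows from monotonicity because \(B\subseteq A_{i-1}\). So closure is preserved at every deletion step, and the case where \(s_i\) is kept is trivial.

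For item 2, take \(a\in A\), say \(a=s_i\). Since \(a\) survived step \(i\), we have \(a\notin\Cn_{\mathrm{rev}}(A_{i-1}\setminus\{a\})\). Because the sequence is decreasing, \(A\setminus\{a\}\subseteq A_{i-1}\setminus\{a\}\). Monotonicity then gives \(\Cn_{\mathrm{rev}}(A\setminus\{a\})\subseteq\Cn_{\mathrm{rev}}(A_{i-1}\setminus\{a\})\), so \(a\notin\Cn_{\mathrm{rev}}(A\setminus\{a\})\). This downward-monotonicity step is the only point that needs care: a test passed against a larger set must persist after later deletions, and monotonicity is exactly what makes "not derivable from the larger set" imply "not derivable from the smaller set".

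For item 3, the scan is a deterministic procedure. Its input is \(S_O\). The order of the steps is fixed by \(\preceq\) (Assumption~\ref{assm:order}). Each deletion test is a membership question about \(\Cn_{\mathrm{rev}}\), which is a well-defined function on subsets and is effectively decidable by Lemma~\ref{lem:finite-closure-computable} in the Horn case. By induction on \(i\), each \(A_i\) is therefore a function of \((S_O,\Cn_{\mathrm{rev}},\preceq)\), and in particular so is \(A=A_n\). I expect no real obstacle here; the main subtlety in the whole argument is the persistence argument in item 2.
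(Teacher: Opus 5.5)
Your proposal is correct and follows the paper's proof essentially step for step. It uses the same indexed deletion sequence, the same reflexivity/monotonicity/idempotence induction for closure preservation, and the same monotonicity argument that non-derivability from the larger working set persists after later deletions (you argue this directly, the paper by contradiction). Uniqueness then comes from determinism of the scan; the appeal to Lemma~\ref{lem:finite-closure-computable} there is harmless but not needed, since uniqueness only requires \(\Cn_{\mathrm{rev}}\) to be a well-defined function.
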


\begin{proof}
Let \(s_1\prec s_2\prec \cdots \prec s_m\) be the \(\preceq\)-increasing enumeration of \(S_O\).
Set \(A^{(0)}\coloneqq S_O\). For \(k=1,\dots,m\), define \(A^{(k)}\) by
\[
A^{(k)}\coloneqq
\begin{cases}
A^{(k-1)}\setminus\{s_k\}, & \text{if } s_k\in \Cn_{\mathrm{rev}}\!\bigl(A^{(k-1)}\setminus\{s_k\}\bigr),\\
A^{(k-1)}, & \text{otherwise.}
\end{cases}
\]
Let \(A\coloneqq A^{(m)}\).

\noindent\textbf{(1)} If \(s_k\) is deleted, then by assumption
\(s_k\in \Cn_{\mathrm{rev}}(A^{(k)})\). Since \(A^{(k-1)}=A^{(k)}\cup\{s_k\}\subseteq \Cn_{\mathrm{rev}}(A^{(k)})\),
monotonicity and idempotence yield
\[
\Cn_{\mathrm{rev}}(A^{(k-1)})\subseteq \Cn_{\mathrm{rev}}(\Cn_{\mathrm{rev}}(A^{(k)}))
=\Cn_{\mathrm{rev}}(A^{(k)}).
\]
The reverse inclusion \(\Cn_{\mathrm{rev}}(A^{(k)})\subseteq \Cn_{\mathrm{rev}}(A^{(k-1)})\) follows from
\(A^{(k)}\subseteq A^{(k-1)}\) and monotonicity. Hence
\(\Cn_{\mathrm{rev}}(A^{(k)})=\Cn_{\mathrm{rev}}(A^{(k-1)})\).
If \(s_k\) is kept, then \(A^{(k)}=A^{(k-1)}\) and the closure is trivially unchanged.
By induction over \(k\), \(\Cn_{\mathrm{rev}}(A)=\Cn_{\mathrm{rev}}(S_O)\).

\noindent\textbf{(2)} Fix \(a\in A\). Suppose towards a contradiction that
\(a\in \Cn_{\mathrm{rev}}(A\setminus\{a\})\).
Let \(A^{(t)}\) be the working set at the moment when \(a\) is scanned.
Since the scan only deletes elements, we have \(A\subseteq A^{(t)}\), hence
\(A\setminus\{a\}\subseteq A^{(t)}\setminus\{a\}\).
By monotonicity,
\[
a\in \Cn_{\mathrm{rev}}(A\setminus\{a\})
\ \Rightarrow\
a\in \Cn_{\mathrm{rev}}(A^{(t)}\setminus\{a\}),
\]
so \(a\) would be deleted when scanned, contradicting \(a\in A\).
Therefore \(a\notin \Cn_{\mathrm{rev}}(A\setminus\{a\})\).

\noindent\textbf{(3)} The scan is deterministic given \((S_O,\Cn_{\mathrm{rev}},\preceq)\), hence \(A\) is unique.
\end{proof}

\begin{proposition}[Redundant facts have zero distortion against any admissible reconstruction]
\label{prop:redundant-zero}
Assume Assumption~\ref{assm:admissible}. For any \(j\in J\) and any \(\hat{s}\in \hat{\mathbb{S}}\),
\[
d_{\Cn}(j,\hat{s})=0.
\]
\end{proposition}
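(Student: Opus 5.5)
By Lemma~\ref{lem:dist-basic}, it suffices to prove the closure equality
\[
\Cn_{\mathrm{rev}}\bigl((S_O\setminus\{j\})\cup\{\hat{s}\}\bigr)=\Cn_{\mathrm{rev}}(S_O)
\]
for every $j\in J$ and $\hat{s}\in\hat{\mathbb{S}}$. We argue by two inclusions, using only the Tarski axioms of Assumption~\ref{assm:tarski}, the core properties of Theorem~\ref{thm:core-properties}, and the admissibility condition of Assumption~\ref{assm:admissible}. Write $T\coloneqq (S_O\setminus\{j\})\cup\{\hat{s}\}$.

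\textbf{Upper inclusion.} Admissibility gives $\hat{s}\in\Cn_{\mathrm{rev}}(S_O)$. Reflexivity gives $S_O\setminus\{j\}\subseteq S_O\subseteq\Cn_{\mathrm{rev}}(S_O)$. Hence $T\subseteq\Cn_{\mathrm{rev}}(S_O)$. Applying monotonicity and then idempotence yields
\[
\Cn_{\mathrm{rev}}(T)\subseteq\Cn_{\mathrm{rev}}(\Cn_{\mathrm{rev}}(S_O))=\Cn_{\mathrm{rev}}(S_O).
\]

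\textbf{Lower inclusion.} This is where the core enters. Since $j\in J=S_O\setminus A$, we have $j\notin A$, and therefore $A\subseteq S_O\setminus\{j\}\subseteq T$. By monotonicity and Theorem~\ref{thm:core-properties}(1),
\[
\Cn_{\mathrm{rev}}(S_O)=\Cn_{\mathrm{rev}}(A)\subseteq\Cn_{\mathrm{rev}}(T).
\]
Combining the two inclusions gives the closure equality, so $\mathsf{F}_{\Cn}(S_O,T)=1$ and hence $d_{\Cn}(j,\hat{s})=0$.

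\textbf{Main obstacle.} The only delicate point is the lower inclusion. One might try to argue directly that $j\in\Cn_{\mathrm{rev}}(S_O\setminus\{j\})$, but that is not what the deletion scan certifies. The scan only shows that $j$ is derivable from the working set at the moment $j$ is scanned, and that set may already be smaller than $S_O\setminus\{j\}$. Routing the argument through the containment $A\subseteq S_O\setminus\{j\}$ and property (1) sidesteps this issue cleanly. It also makes clear that the inserted symbol $\hat{s}$ plays no role in the lower bound; admissibility is needed only for the upper inclusion.
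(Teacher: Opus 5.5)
Your proof is correct and is essentially the paper's argument: the paper likewise uses \(A\subseteq S_O\setminus\{j\}\) with Theorem~\ref{thm:core-properties}(1) to get \(\Cn_{\mathrm{rev}}(S_O\setminus\{j\})=\Cn_{\mathrm{rev}}(S_O)\), then admissibility plus monotonicity and idempotence to absorb \(\hat s\), which are the same two inclusions you prove directly for \(T\). One small correction to your closing remark: the ``direct'' route is not actually blocked, because derivability of \(j\) from the working set (a subset of \(S_O\setminus\{j\}\)) already gives \(j\in\Cn_{\mathrm{rev}}(S_O\setminus\{j\})\) by monotonicity, since derivability from a smaller set is the stronger statement.
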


\begin{proof}
Fix \(j\in J\). By Theorem~\ref{thm:core-properties}(1), \(\Cn_{\mathrm{rev}}(S_O)=\Cn_{\mathrm{rev}}(A)\).
Since \(j\notin A\), we have \(A\subseteq S_O\setminus\{j\}\). Thus, by monotonicity,
\[
\Cn_{\mathrm{rev}}(S_O)=\Cn_{\mathrm{rev}}(A)\subseteq \Cn_{\mathrm{rev}}(S_O\setminus\{j\})\subseteq \Cn_{\mathrm{rev}}(S_O),
\]
so \(\Cn_{\mathrm{rev}}(S_O\setminus\{j\})=\Cn_{\mathrm{rev}}(S_O)\).

Now let \(\hat{s}\in\hat{\mathbb{S}}\). By admissibility, \(\hat{s}\in \Cn_{\mathrm{rev}}(S_O)=\Cn_{\mathrm{rev}}(S_O\setminus\{j\})\).
Hence \((S_O\setminus\{j\})\cup\{\hat{s}\}\subseteq \Cn_{\mathrm{rev}}(S_O\setminus\{j\})\), and by idempotence,
\[
\Cn_{\mathrm{rev}}((S_O\setminus\{j\})\cup\{\hat{s}\})=\Cn_{\mathrm{rev}}(S_O\setminus\{j\})=\Cn_{\mathrm{rev}}(S_O).
\]
Therefore the closures compared in Definition~\ref{def:distortion} coincide, so \(d_{\Cn}(j,\hat{s})=0\).
\end{proof}

\subsection{Zero-distortion hypergraph and entropy law}
\label{sec:hypergraph}

\begin{definition}[Zero-distortion reconstruction sets]
\label{def:zero-set}
For \(s\in S_O\), define
\[
R_0(s)\coloneqq \{ \hat{s}\in\hat{\mathbb{S}} : d_{\Cn}(s,\hat{s})=0\}.
\]
\end{definition}

\begin{definition}[Core distribution]
\label{def:core-dist}
Let \(P_A\coloneqq P_O(A)\). When \(P_A>0\), define \(\pi_A(\cdot)\coloneqq P_O(\cdot \mid A)\).
Let \(A^\star\sim \pi_A\) denote a random core symbol.
\end{definition}

\begin{definition}[Zero-distortion confusability hypergraph and hypergraph entropy]
\label{def:gamma0}
Define the hypergraph \(\Gamma_0\) on vertex set \(A\) by
\[
\Gamma_0 \coloneqq \Bigl\{ W\subseteq A: W\neq\emptyset,\ \bigcap_{a\in W}R_0(a)\neq\emptyset \Bigr\}.
\]
Define the hypergraph entropy
\[
H_{\Gamma_0}(\pi_A)\coloneqq 
\inf_{\substack{P_{W\mid A^\star}:\\ W\in\Gamma_0\ \text{a.s.},\ A^\star\in W\ \text{a.s.}}}
I(A^\star;W),
\]
where \(W\) is a random subset of \(A\) taking values in \(\Gamma_0\).
\footnote{When \(\Gamma_0\) arises from an undirected graph \(G\) on \(A\) (e.g., \(\Gamma_0\) is the downward closure of the clique
hyperedges induced by a pairwise confusability relation), the quantity \(H_{\Gamma_0}(\pi_A)\) reduces to K\"orner's graph-entropy
formulation (up to the standard convention of whether \(G\) encodes confusability or its complement).
We use the hypergraph form to accommodate genuinely higher-order overlaps among zero-distortion reconstruction sets.}
\end{definition}

This quantity is a hypergraph-valued analogue of confusability-based graph entropy (K\"orner) for zero-error settings \cite{korner1971coding,csiszar2011information}.

If the feasible set in the infimum defining \(H_{\Gamma_0}(\pi_A)\) is empty, we set
\[
H_{\Gamma_0}(\pi_A)\coloneqq +\infty.
\]
Equivalently, \(H_{\Gamma_0}(\pi_A)<\infty\) only if every core symbol \(a\in A\) has at least one zero-distortion
reconstruction, i.e.,
\[
R_0(a)\neq\emptyset \qquad \text{for all } a\in A.
\]
The same convention will be used for the rollback hypergraph entropy
\(H_{\Gamma^{\mathrm{rb}}_0}(\pi_A)\) in Section~\ref{sec:rb-zero}.

For each reconstruction symbol \(\hat s\in \hat{\mathbb{S}}\), define the induced witness set
\[
W_{\hat s}\coloneqq \{a\in A:\ \hat s\in R_0(a)\}.
\]
Then \(W_{\hat s}\in \Gamma_0\) whenever it is nonempty, since \(\hat s\in\bigcap_{a\in W_{\hat s}}R_0(a)\).
Conversely, \(W\in\Gamma_0\) iff there exists some \(\hat s\in\hat{\mathbb{S}}\) such that \(W\subseteq W_{\hat s}\).
Thus \(\Gamma_0\) is downward closed, and it can be visualized by representative witness hyperedges \(W_{\hat s}\)
(cf.\ Figure~\ref{fig:sem-rd-zero}, right).

\begin{figure}[t]
\def\LeftMarginShift{5mm}

\begin{tikzpicture}[x=0.88cm, y=1.125cm]

    \begin{scope}[local bounding box=LeftScope, xshift=\LeftMarginShift, xscale=-1]

        \node[elemA]  (a1) at (0.0, 2.7) {\(a_1\)};
        \node[elemA] (a2) at (0.0, 0.9) {\(a_2\)};
        \node[elemA] (a3) at (0.0,-0.9) {\(a_3\)};
        \node[elemA]  (a4) at (0.0,-2.7) {\(a_4\)};

        \node[anchor=north, font=\small\bfseries] (pia1) at ($(a1.south)+(0,0.05)$) {\(\pi_A(a_1)\)};
        \node[anchor=north, font=\small\bfseries] (pia2) at ($(a2.north)+(0,0.4)$) {\(\pi_A(a_2)\)};
        \node[anchor=north, font=\small\bfseries] (pia3) at ($(a3.south)+(0,0.05)$) {\(\pi_A(a_3)\)};        
        \node[anchor=north, font=\small\bfseries] (pia4) at ($(a4.north)+(0,0.4)$) {\(\pi_A(a_4)\)};

        \node[anchor=north, font=\large \bfseries] (pa) at ($(a4.south)+(0,0)$) {\(P_A\)};

        \node[elemJ] (j1) at (2.3, 1.8) {};
        \node[elemJ] (j2) at (2.3, 0.0) {};
        \node[elemJ] (j3) at (2.3,-1.8) {};

        \node[elemC] (c1) at (4.6, 2.7) {};
        \node[elemC] (c2) at (4.6, 0.9) {};
        \node[elemC] (c3) at (4.6,-0.9) {};
        \node[elemC] (c4) at (4.6,-2.7) {};

        \node[elemH] (lh01) at (5.8, 1.8) {\(\hat{s}_1\)};
        \node[elemH] (lh02) at (5.8, 0.0) {\(\hat{s}_2\)};
        \node[elemH] (lh03) at (5.8,-1.8) {\(\hat{s}_3\)};

        \node[elemH] (lhp1) at (6.9, 0.9) {\(\hat{s}_4\)};
        \node[elemH] (lhp2) at (6.9,-0.9) {\(\hat{s}_5\)};

        \node[setAtoms, fit=(a1)(a4)] (Aset) {};

        \node[inner sep=0pt, fit=(j1)(j3)] (J_temp) {};
        \coordinate (top_j_base) at ($(J_temp.north west)!(a1)!(J_temp.south west)$);
        \coordinate (bot_j_base) at ($(J_temp.north west)!(a4)!(J_temp.south west)$);
        \coordinate (top_j_final) at ($(top_j_base)+(0,-3mm)$);
        \coordinate (bot_j_final) at ($(bot_j_base)+(0,3mm)$);
        \node[setRedun, fit=(top_j_final)(bot_j_final)(J_temp)] (Jset) {};

        \node[setLog, fit=(Aset)(Jset)] (Sset) {};
        \node[setRecon, fit=(lh01)(lh03)(lhp1)(lhp2)] (Hset) {};
        \node[setClosure, fit=(Sset)(c1)(c4)(Hset)] (Cset) {};

        \node[anchor=north west, font=\large\bfseries] at ([shift={(-0.1,-0.1)}]Aset.north west) {A};
        \node[anchor=north east, font=\large\bfseries] at ([shift={(-0.7,-0.1)}]Jset.north west) {J};
        \node[anchor=north east, font=\large\bfseries] at ([shift={(-1.0,-0.1)}]Sset.north west) {\(S_O\)};
        \node[anchor=north east, font=\large\bfseries]
            at ([shift={(-2.8,-0.2)}]Cset.north west)
            {\(\Cn_{\mathrm{rev}}(S_O)\)};
        \node[anchor=north east, font=\large\bfseries] at ([shift={(-0.7,-0.1)}]Hset.north west) {\(\hat{\mathbb{S}}\)};

\node[font=\scriptsize, align=left, anchor=north west]
  at ([xshift=2mm,yshift=0]Jset.south west)
  {\(\forall j\in J,\ \forall \hat{s}\in\hat{\mathbb{S}}:\)};
\node[font=\scriptsize, align=left, anchor=north west]
  at ([xshift=1mm,yshift=-4mm]Jset.south west)
  {\(\ d_{\Cn}(j,\hat{s})=0\)};

        \begin{scope}[on background layer]
            \node[fit=(Sset), fill=gray!10, rounded corners, inner sep=0pt] {};
            \node[fit=(Hset), fill=gray!15, rounded corners, inner sep=0pt] {};
        \end{scope}

        \draw[relDer] (a1.west) -- (j1.east);
        \draw[relDer] (a2.west) -- (j1.east);
        \draw[relDer] (a2.west) -- (j2.east);
        \draw[relDer] (a3.west) -- (j2.east);
        \draw[relDer] (a3.west) -- (j3.east);
        \draw[relDer] (a4.west) -- (j3.east);

        \draw[relDer] (j1.west) -- (c1.east);
        \draw[relDer] (j2.west) -- (c1.east);
        \draw[relDer] (j2.west) -- (c2.east);
        \draw[relDer] (j3.west) -- (c2.east);
        \draw[relDer] (j2.west) -- (c4.east);
        \draw[relDer] (j3.west) -- (c4.east);

        \draw[relDer] (a1.west) -- (c3.east);
        \draw[relDer] (a3.west) -- (c3.east);

        \draw[relGen] (a1.west) -- (lh01.east);
        \draw[relGen] (j2.west) -- (lh01.east);
        \draw[relGen] (a2.west) -- (lh02.east);
        \draw[relGen] (j2.west) -- (lh03.east);
        \draw[relGen] (a4.west) -- (lh03.east);
        \draw[relGen] (a1.west) -- (lhp1.east);
        \draw[relGen] (j3.west) -- (lhp2.east);

    \end{scope}

    \path let \p1 = (Cset.north), \p2 = (Cset.south), \p3 = (Cset.east) in
      node [draw, thick, rounded corners, align=center,
            minimum width=6.8cm,
            anchor=west,
            at={($(\x3 + 2mm, 0.5*\y1 + 0.5*\y2)$)},
            minimum height=\y1-\y2] (Rset) {};
    \node[anchor=north, font=\large\bfseries] at ($(Rset.north)+(0,-0.15)$) {Hypergraph \(\Gamma_0\)};

    \path (Rset.center) ++(135:2.6cm) +(0, 0.8) node[elemA] (ha1) {\(a_1\)};
    \path (Rset.center) ++(45:2.6cm) +(0, 0.8) node[elemA] (ha2) {\(a_2\)};
    \path (Rset.center) ++(-45:2.6cm) +(0, 0.8) node[elemA] (ha3) {\(a_3\)};    
    \path (Rset.center) ++(-135:2.6cm) +(0, 0.8) node[elemA] (ha4) {\(a_4\)};    

    \draw[fill=gray!20, opacity=0.5, draw=black, line width=0.8pt]
        (ha1.east) -- (ha2.west) -- (ha3.west) -- cycle;
    \node[font=\small] at (barycentric cs:ha1=2.5,ha2=2.5,ha3=1) {\(W_{\hat s_1}\)};
    \node[anchor=north, font=\bfseries] at ($(ha2.north)+(0,0.5)$) {\(\{a_2\}=W_{\hat s_2}\)};
    \draw[fill=gray!40, opacity=0.5, draw=black, line width=0.8pt]
        (ha2.west) -- (ha3.west) -- (ha4.east) -- cycle;
    \node[font=\small] at (barycentric cs:ha2=1,ha3=2.5,ha4=2.5) {\(W_{\hat s_3}\)};
    \node[anchor=north, font=\bfseries] at ($(ha1.north)+(0,0.5)$) {\(\{a_1\}=W_{\hat s_4}\)};
    \draw[thick, black] (ha3.west) -- (ha4.east) node[midway, sloped, below, font=\small] {\(W_{\hat s_5}\)};    

    \node[anchor=south east, font=\scriptsize, align=right]
        at ([xshift=-2mm,yshift=26mm]Rset.south east)
        {\(\displaystyle W_{\hat s}\coloneqq \{a\in A:\ \hat s\in R_0(a)\}\)};

    \node[anchor=north, font=\bfseries] (piA)
        at ($(ha4.south)+(0.8,-1.1)$)
        {\(\pi_A(a)\coloneqq P_O(a\mid A)\)};

    \node[anchor=north, font=\bfseries] (PA)
        at ($(ha4.south)+(-0.2,-2.3)$)
        {\(P_A\coloneqq P_O(A)\)};

    \node[anchor=north, font=\bfseries] (HG)
        at ($(ha4.south)+(4.58,-1.1)$)
        {\(H_{\Gamma_0}(\pi_A)\)};

    \node[anchor=north, font=\bfseries] (RD)
        at ($(ha4.south)+(3.8,-2.3)$)
        {\(R_{\sem}(0)= P_A\,H_{\Gamma_0}(\pi_A)\)};

    \tikzset{dashedArc/.style={dashed, draw=black, thick}}

    \draw[dashedArc]
         (pia1.east) .. controls +(2.4, -0.3) and ($(piA.west)+(-0.8, 0)$) .. (piA.west);       
    \draw[dashedArc]
        (pia2.east) .. controls +(2.0, -0.3) and ($(piA.west)+(-1.0, 0)$) .. (piA.west);
    \draw[dashedArc]
        (pia3.east) .. controls +(1.0, -0.3) and ($(piA.west)+(-1.2, 0)$) .. (piA.west);
    \draw[dashedArc]
        (pia4.east) .. controls +(0.8, -0.1) and ($(piA.west)+(-1.0, 0)$) .. (piA.west);

    \draw[dashedArc]
        (pa.east) .. controls +(2.4, 0) and ($(PA.west)+(-1, 0)$) .. (PA.west);

    \draw[dashedArc]
        (piA.east) .. controls +(0.9,0) and +(-0.9,0) .. (HG.west);
    \draw[dashedArc]
        (HG.south) .. controls +(0,-0.5) and +(0,0.5) .. ($(RD.north)+(0.8,0)$);
    \draw[dashedArc]
        (PA.east) .. controls +(0,0) .. (RD.west);
    
    \begin{scope}[shift={($(Cset.south)-(0,1.2)$)}, anchor=north]
       
        \node[elemA] (lega) at (-4.5,0.7) {\(\scriptstyle a\)};
        \node[anchor=west,font=\small] at (-4.25,0.55) {atomic basis element};       
        \node[elemJ] (legj) at (0.25,0.7) {};
        \node[anchor=west,font=\small] at (0.55,0.55) {redundant element};
        \node[elemC] (legc) at (4.8,0.7) {};
        \node[anchor=west,font=\small] at (5.1,0.55) {closure-added element};

        \node[elemH] (legh) at (-4.5,0.2) {\(\scriptstyle \hat{s}\)};
        \node[anchor=west,font=\small] at (-4.25,0.05) {reconstruction element};
        \draw[relDer] (-0.05,0.05) -- (0.65,0.05);
        \node[anchor=west,font=\small] at (0.62,0.05) {logical derivation};
        \draw[relGen] (4.5,0.05) -- (5.2,0.05);
        \node[anchor=west,font=\small] at (5.2,0.05) {reconstruction generation};

    \end{scope}

\end{tikzpicture}

\caption{Two-view visualization of closure-preserving semantic RD at zero distortion.
\emph{Left (core/closure view).} The observed log \(S_O\) decomposes canonically as \(S_O=A\uplus J\),
where the irredundant core \(A=\Atom_{\mathrm{rev}}(S_O)\) preserves closure
\(\Cn_{\mathrm{rev}}(A)=\Cn_{\mathrm{rev}}(S_O)\), while the redundant part \(J\) is distortion-invisible under admissibility
(\(\forall j\in J,\forall \hat s\in\hat{\mathbb{S}}:\ d_{\Cn}(j,\hat s)=0\)).
Dashed arrows indicate exemplar zero-distortion witnesses \(\hat s\in R_0(s)\) for single-symbol replacement.
The labels \(P_A=P_O(A)\) and \(\pi_A(a)=P_O(a\mid A)\) highlight the core probability mass and the core-conditioned distribution.
\emph{Right (confusability hypergraph view).} Each reconstruction symbol \(\hat s\in\hat{\mathbb{S}}\) induces a witness hyperedge
\(W_{\hat s}\coloneqq \{a\in A:\ \hat s\in R_0(a)\}\), and such hyperedges encode which core symbols are mutually confusable at
zero distortion. The confusability hypergraph \(\Gamma_0\) (Definition~\ref{def:gamma0}) consists of all nonempty
\(W\subseteq A\) with \(\bigcap_{a\in W}R_0(a)\neq\emptyset\) (hence it is downward closed); the figure depicts representative
hyperedges \(W_{\hat s}\) witnessing these overlaps.
The resulting hypergraph entropy yields the exact perfect-rollback rate \(R_{\sem}(0)=P_A\,H_{\Gamma_0}(\pi_A)\).}
\label{fig:sem-rd-zero}
\end{figure}

\begin{definition}[Semantic rate--distortion function]
\label{def:rsem}
Let \(\hat{S}\) be generated from \(S\) via a test channel \(P_{\hat{S}\mid S}\) with \(\hat{S}\in\hat{\mathbb{S}}\).
Define
\[
R_{\sem}(D) \coloneqq
\inf_{P_{\hat{S}\mid S}: \E[d_{\Cn}(S,\hat{S})]\le D} I(S;\hat{S}).
\]
\end{definition}

We interpret \(R_{\sem}(D)\) as the single-letter rate--distortion function of the memoryless extension in which
\((S_1,\dots,S_n)\) are drawn i.i.d.\ from \(P_O\), and the encoder/decoder operates on blocks of length \(n\) with average distortion constraint.
Under this standard setting, \(R_{\sem}(D)\) gives the optimal asymptotic rate; see, e.g., \cite{cover2006elements,csizsar2011information}.

The next theorem gives the exact zero-distortion rate in terms of a confusability hypergraph induced by the closure-preserving zero-distortion sets.

\begin{theorem}[Zero-distortion hypergraph entropy law]
\label{thm:zero}
Under Assumptions~\ref{assm:source}--\ref{assm:admissible}, the minimum achievable rate for zero distortion is
\[
R_{\sem}(0)=P_A \, H_{\Gamma_0}(\pi_A),
\]
with the convention \(P_A H_{\Gamma_0}(\pi_A)=0\) when \(P_A=0\), and
\(R_{\sem}(0)=+\infty\) if \(P_A>0\) and the feasible set in the definition of
\(H_{\Gamma_0}(\pi_A)\) is empty (i.e., no zero-distortion channel exists).
\end{theorem}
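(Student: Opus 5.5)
The plan is to prove matching lower and upper bounds on \(R_{\sem}(0)\) by splitting the source according to the core indicator \(B\coloneqq \1\{S\in A\}\), which is a deterministic function of \(S\). The key structural input is Proposition~\ref{prop:redundant-zero}: \(R_0(j)=\hat{\mathbb{S}}\) for every \(j\in J\), so redundant symbols put no constraint on a zero-distortion channel. By Lemma~\ref{lem:dist-basic}, \(d_{\Cn}\ge 0\). Hence \(\E[d_{\Cn}(S,\hat S)]\le 0\) holds iff \(\hat S\in R_0(S)\) almost surely. So the feasible channels are exactly those with \(P_{\hat S\mid S}(R_0(s)\mid s)=1\) for every \(s\) with \(P_O(s)>0\).

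\textbf{Converse.} Take any feasible channel. Since \(B\) is a function of \(S\), the chain rule gives
\[
I(S;\hat S)=I(S,B;\hat S)=I(B;\hat S)+I(S;\hat S\mid B)\ \ge\ P_A\, I(S;\hat S\mid B=1).
\]
Conditionally on \(B=1\), \(S\) is distributed as \(A^\star\sim\pi_A\). Set \(W\coloneqq W_{\hat S}=\{a\in A:\hat S\in R_0(a)\}\).

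\begin{itemize}
\item Feasibility gives \(\hat S\in R_0(A^\star)\) almost surely, so \(A^\star\in W\) almost surely.
\item In particular \(W\neq\emptyset\), and \(W\in\Gamma_0\) because \(\hat S\in\bigcap_{a\in W}R_0(a)\).
\end{itemize}

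Thus \(P_{W\mid A^\star}\) is admissible in Definition~\ref{def:gamma0}. Data processing along \(A^\star\to\hat S\to W\) then yields \(I(A^\star;\hat S)\ge I(A^\star;W)\ge H_{\Gamma_0}(\pi_A)\). This proves \(R_{\sem}(0)\ge P_A H_{\Gamma_0}(\pi_A)\).

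The converse also covers the degenerate cases. Suppose \(P_A>0\) and the feasible set in \(H_{\Gamma_0}\) is empty. Then some \(a\) with \(\pi_A(a)>0\) has no admissible hyperedge, which forces \(R_0(a)=\emptyset\). Consequently no zero-distortion channel exists and \(R_{\sem}(0)=+\infty\) by the convention \(\inf\emptyset=+\infty\).

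\textbf{Achievability.} Fix \(\varepsilon>0\) and a feasible \(P_{W\mid A^\star}\) with \(I(A^\star;W)\le H_{\Gamma_0}(\pi_A)+\varepsilon\). For each \(W\in\Gamma_0\) choose a witness \(\hat s(W)\in\bigcap_{a\in W}R_0(a)\). Define the channel as follows.

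\begin{itemize}
\item On \(s\in A\): draw \(W\sim P_{W\mid A^\star=s}\) and output \(\hat S=\hat s(W)\). This is zero-distortion, since \(s\in W\) implies \(\hat s(W)\in R_0(s)\).
\item On \(s\in J\): output \(\hat S\) drawn from \(Q\), the marginal law of \(\hat s(W)\) under the core channel with input \(\pi_A\). This is zero-distortion by Proposition~\ref{prop:redundant-zero}.
\end{itemize}

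With this choice the law of \(\hat S\) given \(B=0\) equals its law given \(B=1\), so \(I(B;\hat S)=0\). Also, given \(B=0\), \(\hat S\) is independent of \(S\), so \(I(S;\hat S\mid B=0)=0\). The chain rule above then becomes an identity:
\[
I(S;\hat S)=P_A\,I(A^\star;\hat s(W))\le P_A\,I(A^\star;W)\le P_A\bigl(H_{\Gamma_0}(\pi_A)+\varepsilon\bigr).
\]
Letting \(\varepsilon\to 0\) gives the upper bound.

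If \(P_A=0\), every symbol in the support of \(P_O\) lies in \(J\). A constant output \(\hat s\in\hat{\mathbb{S}}\) (nonempty by Assumption~\ref{assm:admissible}) is zero-distortion and has zero rate, matching the stated convention. Symbols with \(P_O(s)=0\) are irrelevant throughout.

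The main obstacle is the achievability step: making the redundant part cost exactly zero rate, including the cost of revealing whether \(S\in A\). A naive constant output on \(J\) would leave a residual \(I(B;\hat S)>0\). The fix is the marginal-matching choice of \(Q\), which is legitimate only because \(R_0(j)\) is the whole alphabet. A secondary point is that \(W\mapsto\hat s(W)\) may not be injective, which is handled by data processing.
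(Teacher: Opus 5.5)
Your proof is correct and follows the paper's argument essentially step for step. The converse is the same: the chain rule through the core indicator, the induced witness set \(W=\{a\in A:\hat S\in R_0(a)\}\), and data processing. So is the achievability: an \(\epsilon\)-optimal \(P_{W\mid A^\star}\), a selector into \(\bigcap_{a\in W}R_0(a)\), and marginal matching on \(J\) to force \(I(\mathsf{B};\hat S)=0\), with the degenerate cases (\(P_A=0\), and an empty feasible set giving \(+\infty\) via \(\inf\emptyset\)) handled the same way up to presentation.
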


\begin{proof}
If \(P_A=0\), then \(S\in J\) almost surely.  Choose any fixed admissible symbol
\(\hat s_0\in\hat{\mathbb S}\) (non‑empty by Assumption~\ref{assm:admissible}) and
put \(\hat S\equiv \hat s_0\) deterministically.  Proposition~\ref{prop:redundant-zero}
gives \(d_{\Cn}(S,\hat S)=0\) a.s.\ and \(I(S;\hat S)=0\); hence \(R_{\sem}(0)=0\).

Assume \(P_A>0\). Let \(\mathsf{B}\coloneqq \1[S\in A]\) be the core indicator.
Consider any test channel \(P_{\hat{S}\mid S}\) achieving \(\E[d_{\Cn}(S,\hat{S})]=0\).
Then \(d_{\Cn}(S,\hat{S})=0\) almost surely, hence \(\hat{S}\in R_0(S)\) almost surely.

\emph{Converse.}
By the chain rule and non-negativity of mutual information,
\[
I(S;\hat{S}) \ge I(S;\hat{S}\mid \mathsf{B})
\ge P_A\, I(S;\hat{S}\mid \mathsf{B}=1).
\]
Conditioned on \(\mathsf{B}=1\), the random variable \(S\) has distribution \(\pi_A\), i.e., \(S\mid(\mathsf{B}=1)\equiv A^\star\).
Thus
\[
I(S;\hat{S}) \ge P_A\, I(A^\star;\hat{S}).
\]
To expose the induced confusability structure, define the random subset
\[
W \coloneqq \{a\in A : \hat{S}\in R_0(a)\}.
\]
Then \(A^\star\in W\) almost surely (since \(\hat{S}\in R_0(A^\star)\) a.s.), and \(W\in\Gamma_0\) almost surely by construction:
indeed, \(\hat{S}\in \bigcap_{a\in W} R_0(a)\), so the intersection is nonempty.
Moreover, \(W\) is a measurable function of \(\hat{S}\), hence by data processing,
\(I(A^\star;\hat{S}) \ge I(A^\star;W)\).
Therefore,
\[
I(S;\hat{S}) \ge P_A\, I(A^\star;W) \ge P_A\, H_{\Gamma_0}(\pi_A).
\]
Taking the infimum over all zero-distortion channels yields \(R_{\sem}(0)\ge P_A H_{\Gamma_0}(\pi_A)\).

\emph{Achievability.}
If the feasible set for \(H_{\Gamma_0}(\pi_A)\) is empty, then by definition
\(H_{\Gamma_0}(\pi_A)=+\infty\) and the inequality \(R_{\sem}(0)\le P_A H_{\Gamma_0}(\pi_A)\)
is trivial.  Otherwise, let \(P_{W\mid A^\star}\) be \(\epsilon\)-optimal in Definition~\ref{def:gamma0}, so that
\(I(A^\star;W)\le H_{\Gamma_0}(\pi_A)+\epsilon\), with \(A^\star\in W\) and \(W\in\Gamma_0\) almost surely.
Since \(\mathbb{S}\) (hence \(\hat{\mathbb{S}}\)) is finite, for each \(W\in\Gamma_0\) we can fix a selector
\(\psi(W)\in \bigcap_{a\in W}R_0(a)\) (choose any element in the nonempty intersection).
Define a test channel \(P_{\hat{S}\mid S}\) as follows:
\begin{itemize}
  \item If \(S=a\in A\), draw \(W\sim P_{W\mid A^\star}(\cdot\mid a)\) and output \(\hat{S}=\psi(W)\).
  \item If \(S=j\in J\), output \(\hat{S}\) according to the marginal distribution of \(\psi(W)\) under \(A^\star\sim\pi_A\) and \(W\mid A^\star\).
\end{itemize}
By construction, when \(S\in A\) we have \(\hat{S}\in \bigcap_{a\in W}R_0(a)\subseteq R_0(S)\), hence zero distortion.
When \(S\in J\), Proposition~\ref{prop:redundant-zero} ensures zero distortion for any admissible output, hence the channel achieves
\(\E[d_{\Cn}(S,\hat{S})]=0\).

It remains to bound the rate. By construction, \(\hat{S}=\psi(W)\) is a (deterministic) function of \(W\), and hence
\[
I(A^\star;\hat{S}) \le I(A^\star;W)
\]
by the data processing inequality (Markov chain \(A^\star \to W \to \hat{S}\)).

Moreover, the output distribution under \(S\in J\) is chosen to match the unconditional output marginal induced by
\(A^\star\sim\pi_A\) and \(W\mid A^\star\). Therefore \(\hat{S}\) has the same distribution under \(\mathsf{B}=0\) and \(\mathsf{B}=1\),
and thus \(I(\mathsf{B};\hat{S})=0\). Using the chain rule,
\[
I(S;\hat{S}) = I(\mathsf{B};\hat{S}) + I(S;\hat{S}\mid \mathsf{B})
= P_A\, I(A^\star;\hat{S})
\le P_A\, I(A^\star;W)
\le P_A\,(H_{\Gamma_0}(\pi_A)+\epsilon).
\]
Since \(\epsilon>0\) is arbitrary, \(R_{\sem}(0)\le P_A H_{\Gamma_0}(\pi_A)\).
\end{proof}

Figure~\ref{fig:sem-rd-zero} visualizes the key structural invariants: the partition \(S_O=A\uplus J\) with
\(\Cn_{\mathrm{rev}}(A)=\Cn_{\mathrm{rev}}(S_O)\), and the way overlaps of \(R_0(a)\) over \(a\in A\) induce \(\Gamma_0\) and hence \(R_{\sem}(0)\).

\begin{corollary}[Disjoint-core formula]
\label{cor:disjoint}
If the core is \emph{disjoint}, i.e., \(R_0(a_1)\cap R_0(a_2)=\emptyset\) for all distinct \(a_1,a_2\in A\), and \(A\subseteq \hat{\mathbb{S}}\),
then
\[
R_{\sem}(0)=P_A H(\pi_A).
\]
\end{corollary}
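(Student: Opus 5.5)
The plan is to apply Theorem~\ref{thm:zero} directly, so the work reduces to computing \(H_{\Gamma_0}(\pi_A)\) under the two hypotheses. I would show that \(\Gamma_0\) consists exactly of the singletons \(\{a\}\), \(a\in A\). Then the only feasible \(W\) in Definition~\ref{def:gamma0} is the deterministic choice \(W=\{A^\star\}\), whose mutual information is \(H(\pi_A)\).

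\emph{Step 1 (singletons are hyperedges).} Fix \(a\in A\). Since \(A\subseteq\hat{\mathbb{S}}\), the symbol \(a\) is an admissible reconstruction. The edited log is \((S_O\setminus\{a\})\cup\{a\}=S_O\), because \(a\in S_O\). So the closures compared in Definition~\ref{def:distortion} coincide, and Lemma~\ref{lem:dist-basic} gives \(d_{\Cn}(a,a)=0\), i.e.\ \(a\in R_0(a)\). Hence \(R_0(a)\neq\emptyset\) and \(\{a\}\in\Gamma_0\). In particular the feasible set defining \(H_{\Gamma_0}(\pi_A)\) is nonempty, so the \(+\infty\) case of Theorem~\ref{thm:zero} does not arise.

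\emph{Step 2 (no larger hyperedges).} Let \(W\in\Gamma_0\) with \(|W|\ge 2\), and pick distinct \(a_1,a_2\in W\). Then \(\bigcap_{a\in W}R_0(a)\subseteq R_0(a_1)\cap R_0(a_2)=\emptyset\), which contradicts membership in \(\Gamma_0\). So \(\Gamma_0=\{\{a\}:a\in A\}\).

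\emph{Step 3 (evaluate the infimum).} Any feasible \(P_{W\mid A^\star}\) must satisfy \(W\in\Gamma_0\) and \(A^\star\in W\) almost surely, which forces \(W=\{A^\star\}\) almost surely. Since \(a\mapsto\{a\}\) is a bijection, \(I(A^\star;W)=H(W)=H(A^\star)=H(\pi_A)\). The infimum is therefore attained and equals \(H(\pi_A)\). Substituting into Theorem~\ref{thm:zero} gives \(R_{\sem}(0)=P_A H(\pi_A)\). When \(P_A=0\), both sides are \(0\) by the convention in that theorem, with \(\pi_A\) undefined but multiplied by zero.

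The only point needing care is Step 1: one must check that \(a\) really lies in its own zero-distortion set, which uses the literal set-union definition of the edit. That hypothesis is exactly what \(A\subseteq\hat{\mathbb{S}}\) is for. Everything else is immediate from the definitions.
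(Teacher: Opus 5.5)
Your proof is correct and follows the paper's route. You show that \(\Gamma_0\) consists only of singletons, so the feasible \(W\) is forced to be \(\{A^\star\}\) and \(H_{\Gamma_0}(\pi_A)=H(\pi_A)\), then you substitute into Theorem~\ref{thm:zero}. Your Step~1 is a useful addition. It uses \(A\subseteq\hat{\mathbb{S}}\) and \((S_O\setminus\{a\})\cup\{a\}=S_O\) to get \(a\in R_0(a)\), so the singletons really are hyperedges and the \(+\infty\) case is excluded; the paper's one-line proof leaves this implicit.
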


\begin{proof}
Under disjointness, the hypergraph \(\Gamma_0\) has only singleton hyperedges, hence \(H_{\Gamma_0}(\pi_A)=H(\pi_A)\).
Substitute into Theorem~\ref{thm:zero}.
\end{proof}

\subsection{General distortion: core-only factorization of the semantic rate--distortion function}
\label{sec:sem-general}

Under admissibility, Proposition~\ref{prop:redundant-zero} implies \(d_{\Cn}(S,\hat S)=0\) almost surely on \(\{S\in J\}\),
and Lemma~\ref{lem:dist-basic} gives \(0\le d_{\Cn}\le 1\). Therefore every test channel satisfies the uniform upper bound
\[
\E[d_{\Cn}(S,\hat S)]
=\Pr[S\in A]\cdot \E[d_{\Cn}(S,\hat S)\mid S\in A]
\le P_A.
\]
In this sense, distortion budgets \(D>P_A\) are vacuous, and in particular \(R_{\sem}(D)=0\) for all \(D\ge P_A\).

Separately, the standard ``zero-rate'' distortion threshold is
\[
D_0 \;\coloneqq\; \min_{\hat s\in\hat{\mathbb{S}}} \E[d_{\Cn}(S,\hat s)],
\]
corresponding to always outputting a fixed reconstruction symbol. By the bound above, \(D_0\le P_A\), and \(R_{\sem}(D)=0\) for all
\(D\ge D_0\).

For convenience, let \(\mathsf{B}\coloneqq \1[S\in A]\) denote the indicator of whether the source symbol lies in the core.

\begin{theorem}[Core factorization for all \(D\)]
\label{thm:factorization}
Assume Assumption~\ref{assm:admissible}. Let \(P_A\coloneqq P_O(A)\).
Then for all \(D\ge 0\),
\[
R_{\sem}(D)=
\begin{cases}
0, & P_A=0,\\[1mm]
P_A \cdot R_A\!\bigl(\min\{D/P_A,\,1\}\bigr), & P_A>0,
\end{cases}
\]
where \(R_A(\delta)\) is the rate--distortion function of the \emph{core sub-source}
\(A^\star\sim \pi_A(\cdot)=P_O(\cdot\mid A)\) under the \emph{same} single-symbol closure distortion \(d_{\Cn}\)
(Definition~\ref{def:distortion}, with the same reference log \(S_O\)), namely
\[
R_A(\delta)\coloneqq 
\inf_{P_{\hat{S}\mid A^\star}:\ \E[d_{\Cn}(A^\star,\hat{S})]\le \delta} I(A^\star;\hat{S}).
\]
In particular, since \(d_{\Cn}\in[0,1]\), we have \(R_{\sem}(D)=0\) for all \(D\ge P_A\).
\end{theorem}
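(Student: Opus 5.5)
The plan is to mirror the proof of Theorem~\ref{thm:zero}: a converse via the core indicator \(\mathsf{B}=\1[S\in A]\), and achievability by padding the redundant part with the core output marginal. The case \(P_A=0\) is immediate. Then \(S\in J\) almost surely, a constant output \(\hat s_0\in\hat{\mathbb{S}}\) has zero distortion by Proposition~\ref{prop:redundant-zero}, and it carries zero mutual information, so \(R_{\sem}(D)=0\) for every \(D\ge 0\). Assume from now on that \(P_A>0\). On \(\{S\in J\}\) the distortion vanishes for every output (Proposition~\ref{prop:redundant-zero}), so for any test channel
\[
\E[d_{\Cn}(S,\hat S)]=P_A\,\E[d_{\Cn}(S,\hat S)\mid \mathsf{B}=1].
\]
The constraint \(\E[d_{\Cn}]\le D\) is therefore equivalent to a core-conditional constraint \(\E[d_{\Cn}(A^\star,\hat S)]\le D/P_A\). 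Since \(d_{\Cn}\le 1\) by Lemma~\ref{lem:dist-basic}, this constraint may be replaced by \(\min\{D/P_A,1\}\) without changing the feasible set.

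\emph{Converse.} Take any channel with \(\E[d_{\Cn}(S,\hat S)]\le D\). By the chain rule and nonnegativity,
\[
I(S;\hat S)\ge I(S;\hat S\mid \mathsf{B})\ge P_A\, I(S;\hat S\mid \mathsf{B}=1).
\]
Conditioned on \(\mathsf{B}=1\), the channel restricted to \(A\) is a test channel for \(A^\star\sim\pi_A\) whose distortion is at most \(\min\{D/P_A,1\}\). Hence \(I(S;\hat S\mid\mathsf{B}=1)\ge R_A(\min\{D/P_A,1\})\). Taking the infimum over channels gives \(R_{\sem}(D)\ge P_A R_A(\min\{D/P_A,1\})\).

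\emph{Achievability.} Fix \(\epsilon>0\) and let \(P_{\hat S\mid A^\star}\) be \(\epsilon\)-optimal for \(R_A(\delta)\), where \(\delta=\min\{D/P_A,1\}\). Let \(Q\) denote its output marginal under \(\pi_A\). Build the channel as follows:
\begin{itemize}
  \item on \(S=a\in A\), use \(P_{\hat S\mid A^\star}(\cdot\mid a)\);
  \item on \(S=j\in J\), output \(\hat S\sim Q\), independently of \(j\).
\end{itemize}
The distortion is then \(P_A\cdot\E[d_{\Cn}(A^\star,\hat S)]\le P_A\delta\le D\). The output is admissible, and the \(J\)-part contributes nothing by Proposition~\ref{prop:redundant-zero}.

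Because \(\hat S\) has law \(Q\) under both \(\mathsf{B}=0\) and \(\mathsf{B}=1\), we get \(I(\mathsf{B};\hat S)=0\). Also \(I(S;\hat S\mid \mathsf{B}=0)=0\), since the output on \(J\) ignores \(j\). The chain rule then gives
\[
I(S;\hat S)=P_A\, I(A^\star;\hat S)\le P_A\bigl(R_A(\delta)+\epsilon\bigr).
\]
Letting \(\epsilon\to 0\) completes achievability. The final claim follows because at \(D\ge P_A\) we have \(\delta=1\), and \(R_A(1)=0\): any constant output has distortion at most \(1\).

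The main subtlety is the handling of \(R_A\) at the boundary. If \(R_A(\delta)=+\infty\) (an empty feasible set, possible only at \(\delta=0\)), the upper bound is trivial and the converse still holds with the same convention as in Theorem~\ref{thm:zero}. One must also use the marginal-matching on \(J\) so that no mutual information leaks through \(\mathsf{B}\). Everything else is routine bookkeeping with the chain rule.
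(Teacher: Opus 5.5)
Your proposal is correct and follows the paper's proof essentially step for step: a chain-rule converse through the core indicator $\mathsf{B}$, and achievability by running an ($\epsilon$-)optimal core channel on $A$ while feeding $J$ its output marginal so that $I(\mathsf{B};\hat S)=0$; you also make explicit the fact $I(S;\hat S\mid\mathsf{B}=0)=0$, which the paper uses tacitly. One harmless slip: $R_A(\delta)=+\infty$ is not confined to $\delta=0$ but holds for every $\delta<\E\bigl[\min_{\hat s\in\hat{\mathbb{S}}}d_{\Cn}(A^\star,\hat s)\bigr]$, which can be positive when $A\not\subseteq\hat{\mathbb{S}}$, though your treatment of the infinite case (trivial upper bound, vacuous converse) covers that whole range unchanged.
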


\begin{proof}
The argument is the same core-decomposition reduction as in \cite{xu2026rate}, specialized to the reversible closure \(\Cn_{\mathrm{rev}}\).

If \(P_A=0\), then \(S\in J\) almost surely and, by Proposition~\ref{prop:redundant-zero}, every admissible reconstruction achieves
\(\E[d_{\Cn}(S,\hat S)]=0\). Hence \(R_{\sem}(D)=0\) for all \(D\ge 0\).

Assume \(P_A>0\) and let \(\mathsf{B}\coloneqq \1[S\in A]\).

\medskip
\noindent\textbf{Step 1 (distortion depends only on the core).}
By Proposition~\ref{prop:redundant-zero}, \(d_{\Cn}(S,\hat S)=0\) almost surely on \(\{\mathsf{B}=0\}\) for any test channel.
Therefore,
\[
\E[d_{\Cn}(S,\hat S)]
= \E[d_{\Cn}(S,\hat S)\,\mathbf{1}[\mathsf{B}=1]]
= P_A\,\E[d_{\Cn}(A^\star,\hat S)],
\]
where \(A^\star\sim\pi_A\) denotes \(S\) conditioned on \(\{\mathsf{B}=1\}\).
In particular, any constraint \(\E[d_{\Cn}(S,\hat S)]\le D\) implies
\(\E[d_{\Cn}(A^\star,\hat S)]\le \min\{D/P_A,1\}\).

\medskip
\noindent\textbf{Step 2 (converse).}
By the chain rule and non-negativity,
\[
I(S;\hat S)=I(\mathsf{B};\hat S)+I(S;\hat S\mid \mathsf{B})
\ge P_A\,I(S;\hat S\mid \mathsf{B}=1)
= P_A\,I(A^\star;\hat S).
\]
The induced conditional channel \(P_{\hat S\mid A^\star}\) satisfies the core distortion constraint from Step~1,
hence \(I(A^\star;\hat S) \ge R_A(\min\{D/P_A,1\})\). Therefore
\(I(S;\hat S)\ge P_A R_A(\min\{D/P_A,1\})\), and taking the infimum yields the desired lower bound.

\medskip
\noindent\textbf{Step 3 (achievability).}
Let \(P^{\star}_{\hat S\mid A^\star}\) be optimal (or \(\epsilon\)-optimal) for \(R_A(\min\{D/P_A,1\})\),
and let \(P^{\star}_{\hat S}\) be its induced output marginal under \(A^\star\sim\pi_A\).
Define a channel on \(S\) by
\[
P_{\hat S\mid S}(\cdot\mid s)\coloneqq
\begin{cases}
P^{\star}_{\hat S\mid A^\star}(\cdot\mid s), & s\in A,\\
P^{\star}_{\hat S}(\cdot), & s\in J.
\end{cases}
\]
It is admissible since \(\hat S\in\hat{\mathbb{S}}\) by construction, and it achieves
\(\E[d_{\Cn}(S,\hat S)]=P_A\,\E[d_{\Cn}(A^\star,\hat S)]\le D\).
Moreover, \(\hat S\) has the same marginal distribution under \(\mathsf{B}=0\) and \(\mathsf{B}=1\), so \(I(\mathsf{B};\hat S)=0\) and
\[
I(S;\hat S)=P_A\,I(A^\star;\hat S)=P_A\,R_A(\min\{D/P_A,1\}) \ (\text{up to }\epsilon).
\]
Letting \(\epsilon\to 0\) yields the upper bound and completes the proof.
\end{proof}

\section{Rollback Safety via Semantic Rate--Distortion}
\label{sec:rb}

Section~\ref{sec:zero} gives a semantic RD theory under closure-preserving fidelity.
Here we connect it to rollback computing in two steps:
(i) we show that controlling the semantic distortion \(d_{\Cn}\) already yields exact and computable rollback-safety guarantees;
(ii) we then introduce a rollback-task loss \(\ell_{\mathrm{rb}}\) that provides tighter, task-aligned guarantees and a refined RD tradeoff.

\subsection{Rollback safety induced by closure-preserving fidelity}
\label{sec:rb-from-closure}

We work with the same single-symbol edit model as in Section~\ref{sec:zero}.

\begin{definition}[Edited log and violation indicators]
\label{def:viol-indic}
Fix the reference log \(S_O\subseteq \mathbb{S}\). For \(s\in S_O\) and \(\hat s\in \hat{\mathbb{S}}\), define the edited log
\[
S_O[s\leftarrow \hat s]\ \coloneqq\ (S_O\setminus\{s\})\cup\{\hat s\}.
\]
Define the \emph{rollback violation indicator}
\[
\mathsf{Viol}(s,\hat s)\ \coloneqq\ \1\!\big[\mathsf{RB}(S_O[s\leftarrow \hat s])\neq \mathsf{RB}(S_O)\big],
\]
and the \emph{closure violation indicator}
\[
\mathsf{Viol}_{\Cn}(s,\hat s)\ \coloneqq\ \1\!\big[\Cn_{\mathrm{rev}}(S_O[s\leftarrow \hat s])\neq \Cn_{\mathrm{rev}}(S_O)\big],
\]
where \(\mathsf{RB}(S)=\Cn_{\mathrm{rev}}(S)\cap \mathbb{Q}_{\mathrm{rb}}\) (Definition~\ref{def:rb-capability}).
\end{definition}

\paragraph{Remark (set-based edit semantics).}
Logs are modeled as \emph{sets} of facts. Hence the edit operation
\(S_O[s\leftarrow \hat s]=(S_O\setminus\{s\})\cup\{\hat s\}\) should be read as ``delete then insert'':
if \(\hat s\in S_O\setminus\{s\}\), the operation degenerates to deleting \(s\).
When \(S,\hat S\) are random variables, the notation \(S_O[S\leftarrow \hat S]\) is interpreted pointwise.

\begin{lemma}[Closure invariance implies rollback invariance]
\label{lem:closure-implies-rb}
For all \(s\in S_O\) and \(\hat s\in \hat{\mathbb{S}}\),
\[
\mathsf{Viol}(s,\hat s)\ \le\ \mathsf{Viol}_{\Cn}(s,\hat s).
\]
Consequently, for any test channel \(P_{\hat S\mid S}\),
\[
\Pr[\mathsf{Viol}(S,\hat S)=1]\ \le\ \Pr[\mathsf{Viol}_{\Cn}(S,\hat S)=1].
\]
\end{lemma}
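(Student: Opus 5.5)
The plan is to prove the pointwise inequality by contraposition on indicator values. Both $\mathsf{Viol}(s,\hat s)$ and $\mathsf{Viol}_{\Cn}(s,\hat s)$ take values in $\{0,1\}$. It therefore suffices to show that $\mathsf{Viol}_{\Cn}(s,\hat s)=0$ forces $\mathsf{Viol}(s,\hat s)=0$. When $\mathsf{Viol}_{\Cn}(s,\hat s)=1$ the inequality holds trivially, because $\mathsf{Viol}\le 1$.

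\textbf{Key step.} Suppose $\mathsf{Viol}_{\Cn}(s,\hat s)=0$. By Definition~\ref{def:viol-indic} this means
\[
\Cn_{\mathrm{rev}}(S_O[s\leftarrow\hat s])=\Cn_{\mathrm{rev}}(S_O).
\]
We then invoke the general implication in Proposition~\ref{prop:rb-vs-closure} with $S=S_O[s\leftarrow\hat s]$ and $T=S_O$, both finite. Concretely, intersecting both sides with $\mathbb{Q}_{\mathrm{rb}}$ gives $\mathsf{RB}(S_O[s\leftarrow\hat s])=\mathsf{RB}(S_O)$ by Definition~\ref{def:rb-capability}. Hence $\mathsf{Viol}(s,\hat s)=0$.

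\textbf{Probabilistic consequence.} For any test channel $P_{\hat S\mid S}$, apply the pointwise inequality to every realization $(S,\hat S)=(s,\hat s)$, reading the edit pointwise as in the remark on set-based edit semantics. This gives the event inclusion
\[
\{\mathsf{Viol}(S,\hat S)=1\}\subseteq\{\mathsf{Viol}_{\Cn}(S,\hat S)=1\}.
\]
Monotonicity of probability then yields the stated bound. Equivalently, we can take expectations of the indicators.

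I do not expect any real obstacle. The only point needing care is the degenerate case $\hat s\in S_O\setminus\{s\}$, where the edit reduces to a deletion. The argument never uses the form of the edited log beyond its being a finite subset of $\mathbb{S}$, so this case needs no separate treatment. The argument also needs neither admissibility nor the Tarski axioms.
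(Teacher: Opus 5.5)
Your proof is correct and follows the paper's argument: equal closures, intersected with $\mathbb{Q}_{\mathrm{rb}}$, give equal rollback observables (the general implication in Proposition~\ref{prop:rb-vs-closure}), so $\mathsf{Viol}_{\Cn}=0$ forces $\mathsf{Viol}=0$. Your explicit event-inclusion step for the probability bound spells out what the paper leaves implicit.
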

\begin{proof}
If \(\Cn_{\mathrm{rev}}(S_O[s\leftarrow \hat s])=\Cn_{\mathrm{rev}}(S_O)\), then intersecting both sides with \(\mathbb Q_{\mathrm{rb}}\)
gives \(\mathsf{RB}(S_O[s\leftarrow \hat s])=\mathsf{RB}(S_O)\) (Proposition~\ref{prop:rb-vs-closure}), hence \(\mathsf{Viol}(s,\hat s)=0\).
\end{proof}

\begin{lemma}[A distortion-to-closure-violation bound under admissibility]
\label{lem:dCn-to-violCn}
Assume Assumption~\ref{assm:admissible}. For all \(s\in S_O\) and \(\hat s\in \hat{\mathbb{S}}\),
\[
\mathsf{Viol}_{\Cn}(s,\hat s)\ \le\ |\Cn_{\mathrm{rev}}(S_O)|\cdot d_{\Cn}(s,\hat s).
\]
Consequently, for any test channel \(P_{\hat S\mid S}\) over \(\hat{\mathbb S}\),
\[
\Pr[\mathsf{Viol}_{\Cn}(S,\hat S)=1]\ \le\ |\Cn_{\mathrm{rev}}(S_O)|\cdot \E[d_{\Cn}(S,\hat S)].
\]
\end{lemma}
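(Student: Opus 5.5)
The plan is a case split on whether the edit preserves the closure, followed by a quantitative lower bound on $d_{\Cn}$ in the violating case. Write $C\coloneqq\Cn_{\mathrm{rev}}(S_O)$ and $C'\coloneqq\Cn_{\mathrm{rev}}(S_O[s\leftarrow\hat s])$. If $C'=C$, then $\mathsf{Viol}_{\Cn}(s,\hat s)=0$ and the inequality holds trivially, since $d_{\Cn}\ge 0$ by Lemma~\ref{lem:dist-basic}. So the work is to show that when $C'\neq C$, admissibility forces $d_{\Cn}(s,\hat s)\ge 1/|C|$.

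The key structural step is that admissibility gives $C'\subseteq C$. Indeed, $S_O\setminus\{s\}\subseteq S_O\subseteq C$ by reflexivity, and $\hat s\in\hat{\mathbb S}\subseteq C$ by Assumption~\ref{assm:admissible}. Hence $S_O[s\leftarrow\hat s]\subseteq C$, and monotonicity plus idempotence (Assumption~\ref{assm:tarski}) yield $C'\subseteq\Cn_{\mathrm{rev}}(C)=C$.

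With this, the Jaccard ratio simplifies to $|C\cap C'|/|C\cup C'|=|C'|/|C|$. If $C'\neq C$, then $C'\subsetneq C$, so in particular $C\neq\emptyset$ (the $0/0$ convention never arises) and $|C'|\le|C|-1$. Therefore
\[
d_{\Cn}(s,\hat s)=1-\frac{|C'|}{|C|}=\frac{|C|-|C'|}{|C|}\ \ge\ \frac{1}{|C|},
\]
which rearranges to $|C|\,d_{\Cn}(s,\hat s)\ge 1=\mathsf{Viol}_{\Cn}(s,\hat s)$. Finiteness of $C$ comes from Assumption~\ref{assm:finiteness} or Lemma~\ref{lem:finite-closure-computable}.

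The probabilistic consequence then follows by taking expectations of the pointwise inequality under any test channel $P_{\hat S\mid S}$ over $\hat{\mathbb S}$, using $\Pr[\mathsf{Viol}_{\Cn}(S,\hat S)=1]=\E[\mathsf{Viol}_{\Cn}(S,\hat S)]$.

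The main obstacle is the inclusion $C'\subseteq C$. Without admissibility, $\hat s$ could introduce new consequences, and the Jaccard discrepancy would no longer be a simple ratio of sizes. With admissibility, only the counting bound $|C'|\le|C|-1$ remains, and that is routine.
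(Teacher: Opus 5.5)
Your proposal is correct and follows essentially the same argument as the paper. Admissibility together with monotonicity and idempotence gives $\Cn_{\mathrm{rev}}(S_O[s\leftarrow\hat s])\subseteq\Cn_{\mathrm{rev}}(S_O)$, so in the violating case $d_{\Cn}(s,\hat s)\ge 1/|\Cn_{\mathrm{rev}}(S_O)|$, and taking expectations gives the probabilistic bound; the only cosmetic difference is that you write the Jaccard ratio directly as $|C'|/|C|$, whereas the paper bounds the intersection by $|U|-1$ with $U=\Cn_{\mathrm{rev}}(S_O)$.
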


\begin{proof}
If \(\mathsf{Viol}_{\Cn}(s,\hat s)=0\) then the inequality is trivial. Otherwise let
\[
U \ \coloneqq\ \Cn_{\mathrm{rev}}(S_O)\ \cup\ \Cn_{\mathrm{rev}}(S_O[s\leftarrow \hat s]).
\]
By admissibility, \(\hat s\in \Cn_{\mathrm{rev}}(S_O)\), and clearly \(S_O\setminus\{s\}\subseteq S_O\subseteq \Cn_{\mathrm{rev}}(S_O)\).
Hence
\[
S_O[s\leftarrow \hat s]=(S_O\setminus\{s\})\cup\{\hat s\}\ \subseteq\ \Cn_{\mathrm{rev}}(S_O),
\]
and by monotonicity and idempotence,
\[
\Cn_{\mathrm{rev}}(S_O[s\leftarrow \hat s])\ \subseteq\ \Cn_{\mathrm{rev}}(\Cn_{\mathrm{rev}}(S_O))\ =\ \Cn_{\mathrm{rev}}(S_O).
\]
Therefore \(U=\Cn_{\mathrm{rev}}(S_O)\), so \(1\le |U|=|\Cn_{\mathrm{rev}}(S_O)|\).

Since the two closures are not equal, their intersection is a strict subset of the union, hence
\(|\Cn_{\mathrm{rev}}(S_O)\cap \Cn_{\mathrm{rev}}(S_O[s\leftarrow \hat s])|\le |U|-1\).
By Definition~\ref{def:fidelity},
\[
\mathsf F_{\Cn}\bigl(S_O,\ S_O[s\leftarrow \hat s]\bigr)
=\frac{|\cap|}{|U|}
\le \frac{|U|-1}{|U|}
=1-\frac{1}{|U|},
\]
so \(d_{\Cn}(s,\hat s)=1-\mathsf F_{\Cn}\ge 1/|U|=1/|\Cn_{\mathrm{rev}}(S_O)|\).
Thus \(\1[\mathsf{Viol}_{\Cn}(s,\hat s)=1]\le |\Cn_{\mathrm{rev}}(S_O)|\cdot d_{\Cn}(s,\hat s)\), and taking expectations yields the
probability bound.
\end{proof}

\begin{theorem}[Perfect rollback-observable preservation via zero semantic distortion]
\label{thm:perfect-rb-observable-from-Rsem}
Under Assumptions~\ref{assm:source}--\ref{assm:admissible}, any scheme that achieves
zero semantic distortion \(\E[d_{\Cn}(S,\hat S)]=0\) also achieves
\[
\Pr[\mathsf{Viol}(S,\hat S)=1]=0,
\]
where \(\mathsf{Viol}\) is the indicator of a change in the rollback observable \(\mathsf{RB}(\cdot)\).
Moreover, the minimum rate among all such zero-distortion schemes is exactly
\(R_{\sem}(0)=P_A\,H_{\Gamma_0}(\pi_A)\).
If Assumption~\ref{assm:rb-adequacy} holds, then zero semantic distortion is \emph{sufficient}
for perfect preservation of the rollback judgments represented by \(\mathbb{Q}_{\mathrm{rb}}\),
and the minimal rate needed when enforcing this sufficient condition is \(R_{\sem}(0)\).
\end{theorem}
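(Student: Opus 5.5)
The argument chains three results already in the paper: Lemma~\ref{lem:dist-basic}, which characterizes when $d_{\Cn}$ vanishes; Lemma~\ref{lem:closure-implies-rb}, which passes from closure invariance to rollback invariance; and Theorem~\ref{thm:zero}, which gives the zero-distortion rate. The plan is to show that zero expected distortion forces almost-sure closure preservation, read off preservation of $\mathsf{RB}$, and then identify the optimal rate.

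\textbf{Step 1 (from expected distortion to an almost-sure statement).} Since $d_{\Cn}\ge 0$ by Lemma~\ref{lem:dist-basic}, the condition $\E[d_{\Cn}(S,\hat S)]=0$ forces $d_{\Cn}(S,\hat S)=0$ almost surely. The alphabets are finite, so this simply says that every pair $(s,\hat s)$ with positive joint probability has zero distortion. By the equality case of Lemma~\ref{lem:dist-basic}, each such pair satisfies $\Cn_{\mathrm{rev}}(S_O[s\leftarrow\hat s])=\Cn_{\mathrm{rev}}(S_O)$, that is, $\mathsf{Viol}_{\Cn}(s,\hat s)=0$. Hence $\Pr[\mathsf{Viol}_{\Cn}(S,\hat S)=1]=0$.

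\textbf{Step 2 (to the rollback observable).} Lemma~\ref{lem:closure-implies-rb} gives $\Pr[\mathsf{Viol}(S,\hat S)=1]\le\Pr[\mathsf{Viol}_{\Cn}(S,\hat S)=1]=0$. As an alternative route, Lemma~\ref{lem:dCn-to-violCn} gives the same conclusion directly, because $|\Cn_{\mathrm{rev}}(S_O)|\cdot 0=0$. The only point needing care is the case $\Cn_{\mathrm{rev}}(S_O)=\emptyset$ together with the convention $0/0\coloneqq 1$. The Step~1 route, which relies on the exact equality characterization, is robust in that case, so I will use it.

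\textbf{Step 3 (the rate).} The class of schemes with $\E[d_{\Cn}(S,\hat S)]=0$ is exactly the feasible set in Definition~\ref{def:rsem} at $D=0$. Its infimal mutual information is therefore $R_{\sem}(0)$ by definition, and Theorem~\ref{thm:zero} evaluates this as $P_A H_{\Gamma_0}(\pi_A)$, including the conventions for $P_A=0$ and for an empty feasible set.

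\textbf{Step 4 (adequacy clause).} Under Assumption~\ref{assm:rb-adequacy}, the almost-sure equality $\mathsf{RB}(S_O[S\leftarrow\hat S])=\mathsf{RB}(S_O)$ implies that, almost surely, the rollback judgments of interest coincide. Enforcing zero semantic distortion is therefore a sufficient condition, and the minimal rate under that constraint is again $R_{\sem}(0)$ by Step~3.

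I expect no real obstacle. The only subtlety is wording: the claim is the minimal rate \emph{among zero-$d_{\Cn}$ schemes}, not the minimal rate for rollback preservation itself. I will state this explicitly, since task-level preservation could be cheaper, as the later $R_{\mathrm{rb}}(0)\le R_{\sem}(0)$ result indicates.
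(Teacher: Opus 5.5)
Your proposal is correct and matches the paper's proof step for step. Zero expected distortion forces $d_{\Cn}(S,\hat S)=0$ almost surely; Lemma~\ref{lem:dist-basic} turns this into almost-sure closure preservation; Lemma~\ref{lem:closure-implies-rb} gives $\mathsf{Viol}(S,\hat S)=0$ almost surely; and the rate identity follows from Definition~\ref{def:rsem} together with Theorem~\ref{thm:zero}. Your concern about $\Cn_{\mathrm{rev}}(S_O)=\emptyset$ cannot arise: $S_O$ must be nonempty to support $P_O$, and reflexivity gives $S_O\subseteq\Cn_{\mathrm{rev}}(S_O)$. Avoiding Lemma~\ref{lem:dCn-to-violCn} is in any case the same choice the paper makes.
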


\begin{proof}
If \(\E[d_{\Cn}(S,\hat S)]=0\) then \(d_{\Cn}(S,\hat S)=0\) almost surely, hence \(\Cn_{\mathrm{rev}}(S_O[S\leftarrow \hat S])=\Cn_{\mathrm{rev}}(S_O)\)
almost surely by Lemma~\ref{lem:dist-basic}. Lemma~\ref{lem:closure-implies-rb} then gives \(\mathsf{Viol}(S,\hat S)=0\) almost surely.
The rate identity is by Definition~\ref{def:rsem} and Theorem~\ref{thm:zero}.
\end{proof}

\begin{corollary}[A conservative approximate rollback-safety design rule]
\label{cor:approx-rb-safety-from-Rsem}
Assume Assumption~\ref{assm:admissible}. For any test channel with \(\E[d_{\Cn}(S,\hat S)]\le D\),
\[
\Pr[\mathsf{Viol}(S,\hat S)=1]\ \le\ |\Cn_{\mathrm{rev}}(S_O)|\cdot D.
\]
In particular, a sufficient condition for \(\Pr[\mathsf{Viol}(S,\hat S)=1]\le \varepsilon\) is
\(D\le \varepsilon/|\Cn_{\mathrm{rev}}(S_O)|\), and the corresponding semantic rate bound is
\[
R \ \ge\ R_{\sem}(\varepsilon/|\Cn_{\mathrm{rev}}(S_O)|)
\ =\ P_A\cdot R_A\!\Bigl(\min\Bigl\{\frac{\varepsilon}{|\Cn_{\mathrm{rev}}(S_O)|P_A},1\Bigr\}\Bigr),
\]
where the last identity uses Theorem~\ref{thm:factorization}.
\end{corollary}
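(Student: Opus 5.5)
The corollary follows by chaining the two violation bounds already established and then reading off the rate from the core factorization. Fix any test channel \(P_{\hat S\mid S}\) over \(\hat{\mathbb S}\) with \(\E[d_{\Cn}(S,\hat S)]\le D\). Lemma~\ref{lem:closure-implies-rb} gives \(\Pr[\mathsf{Viol}(S,\hat S)=1]\le\Pr[\mathsf{Viol}_{\Cn}(S,\hat S)=1]\). Lemma~\ref{lem:dCn-to-violCn}, which uses admissibility (Assumption~\ref{assm:admissible}), then bounds the right-hand side by \(|\Cn_{\mathrm{rev}}(S_O)|\cdot\E[d_{\Cn}(S,\hat S)]\le|\Cn_{\mathrm{rev}}(S_O)|\cdot D\). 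Together these give the first displayed inequality.

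For the sufficient condition, I substitute \(D\le\varepsilon/|\Cn_{\mathrm{rev}}(S_O)|\) into that bound to get \(\Pr[\mathsf{Viol}=1]\le\varepsilon\). The quotient is well defined because \(|\Cn_{\mathrm{rev}}(S_O)|\ge1\): the alphabet \(\hat{\mathbb S}\) is nonempty and contained in \(\Cn_{\mathrm{rev}}(S_O)\). The rate statement reads as follows: any scheme that enforces this sufficient distortion budget must have rate at least the minimal rate for that budget. By Definition~\ref{def:rsem} and the operational interpretation after it, that minimal rate is \(R_{\sem}(\varepsilon/|\Cn_{\mathrm{rev}}(S_O)|)\).

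For the final identity, I apply Theorem~\ref{thm:factorization} with \(D=\varepsilon/|\Cn_{\mathrm{rev}}(S_O)|\). When \(P_A>0\) this gives \(P_A R_A(\min\{D/P_A,1\})\), which is exactly the displayed expression. The case \(P_A=0\) needs a remark: the factorization then gives rate \(0\), and the displayed formula has to be read with the convention \(P_A\cdot(\cdot)=0\), as in Theorem~\ref{thm:zero}.

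The only delicate point is not mathematical but in how the claim is stated. The bound \(R\ge R_{\sem}(\cdot)\) is a lower bound for schemes that enforce the sufficient distortion condition, not a converse for rollback violation probability itself. I would make that scope explicit, in line with the word ``conservative'' in the corollary's title.
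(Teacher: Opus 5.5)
Your proposal is correct and follows the paper's proof: chain Lemma~\ref{lem:closure-implies-rb} with Lemma~\ref{lem:dCn-to-violCn}, then apply Theorem~\ref{thm:factorization} at \(D=\varepsilon/|\Cn_{\mathrm{rev}}(S_O)|\). Your extra remarks are sound points the paper leaves implicit, namely that \(|\Cn_{\mathrm{rev}}(S_O)|\ge 1\) by admissibility and that the displayed formula needs a convention when \(P_A=0\).
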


\begin{proof}
Combine Lemma~\ref{lem:closure-implies-rb} and Lemma~\ref{lem:dCn-to-violCn}.
The rate expression is Theorem~\ref{thm:factorization} applied at \(D=\varepsilon/|\Cn_{\mathrm{rev}}(S_O)|\).
\end{proof}

\subsection{Rollback-task loss and rollback-task rate--distortion}
\label{sec:rb-rd}

Closure-preserving fidelity yields conservative guarantees. We now introduce a task-aligned loss that measures \emph{how many}
rollback-relevant judgments change.

\begin{definition}[Rollback-task loss: ``how many rollback judgments are wrong?'']
\label{def:rb-loss}
Fix the reference log \(S_O\subseteq\mathbb{S}\).
For \(s\in S_O\) and \(\hat s\in \hat{\mathbb{S}}\), define
\[
\ell_{\mathrm{rb}}(s,\hat s)
\ \coloneqq\
\bigl|\mathsf{RB}(S_O[s\leftarrow \hat s])\ \Delta\ \mathsf{RB}(S_O)\bigr|,
\]
where \(\Delta\) denotes symmetric difference.
\end{definition}

\begin{lemma}[Range and a tight safety bridge]
\label{lem:rb-loss-range}
For all \(s\in S_O\) and \(\hat s\in\hat{\mathbb{S}}\),
\[
0\le \ell_{\mathrm{rb}}(s,\hat s)\le |\mathbb Q_{\mathrm{rb}}|,
\qquad
\mathsf{Viol}(s,\hat s)\ \le\ \1[\ell_{\mathrm{rb}}(s,\hat s)\ge 1]\ \le\ \ell_{\mathrm{rb}}(s,\hat s).
\]
Consequently, for any test channel \(P_{\hat S\mid S}\),
\[
\Pr[\mathsf{Viol}(S,\hat S)=1]\ \le\ \E[\ell_{\mathrm{rb}}(S,\hat S)].
\]
\end{lemma}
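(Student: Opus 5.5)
The plan is to prove the three claims in the order stated: the range bound, the pointwise chain of indicator inequalities, and finally the expectation bound. Each step uses only Definition~\ref{def:rb-capability}, Definition~\ref{def:viol-indic} and Definition~\ref{def:rb-loss}, together with finiteness of \(\mathbb{Q}_{\mathrm{rb}}\) (Assumption~\ref{assm:finiteness}).

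\emph{Range.} The loss \(\ell_{\mathrm{rb}}(s,\hat s)\) is the cardinality of a set, so it is nonnegative. Both \(\mathsf{RB}(S_O[s\leftarrow\hat s])\) and \(\mathsf{RB}(S_O)\) are subsets of \(\mathbb{Q}_{\mathrm{rb}}\), because each is a closure intersected with \(\mathbb{Q}_{\mathrm{rb}}\). Their symmetric difference is therefore also contained in \(\mathbb{Q}_{\mathrm{rb}}\), which gives \(\ell_{\mathrm{rb}}(s,\hat s)\le|\mathbb{Q}_{\mathrm{rb}}|<\infty\).

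\emph{Indicator chain.} The basic observation is that, for any finite sets \(X\) and \(Y\), we have \(X\neq Y\) if and only if \(X\,\Delta\,Y\neq\emptyset\), i.e.\ if and only if \(|X\,\Delta\,Y|\ge 1\). Taking \(X=\mathsf{RB}(S_O[s\leftarrow\hat s])\) and \(Y=\mathsf{RB}(S_O)\) yields
\[
\mathsf{Viol}(s,\hat s)=\1[\ell_{\mathrm{rb}}(s,\hat s)\ge 1],
\]
with equality, which in particular gives the first inequality. The second inequality, \(\1[n\ge1]\le n\), holds for every nonnegative integer \(n\): if \(n=0\) both sides are \(0\), and if \(n\ge1\) the left side is \(1\le n\).

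\emph{Expectation bound.} For any test channel, the inequality \(\mathsf{Viol}(S,\hat S)\le\ell_{\mathrm{rb}}(S,\hat S)\) holds pointwise. Taking expectations gives \(\Pr[\mathsf{Viol}(S,\hat S)=1]=\E[\mathsf{Viol}(S,\hat S)]\le\E[\ell_{\mathrm{rb}}(S,\hat S)]\). All quantities live on finite alphabets, so measurability and integrability are automatic.

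There is no real obstacle. The only point needing care is that the set-based edit may collapse when \(\hat s\in S_O\setminus\{s\}\). This does not affect the argument, since every claim is about the two resulting finite sets, whatever form the edited log takes. Admissibility is not needed for this lemma.
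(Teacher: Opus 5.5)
Your proposal is correct and follows essentially the same argument as the paper: the range bound from \(\mathsf{RB}(\cdot)\subseteq\mathbb{Q}_{\mathrm{rb}}\), the fact that a nonempty symmetric difference forces \(\ell_{\mathrm{rb}}\ge 1\), the integer inequality \(\1[n\ge 1]\le n\), and then taking expectations. Your observation that the first inequality is actually an equality, \(\mathsf{Viol}(s,\hat s)=\1[\ell_{\mathrm{rb}}(s,\hat s)\ge 1]\), is a correct slight sharpening of what the paper states.
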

\begin{proof}
The range bound holds because \(\mathsf{RB}(\cdot)\subseteq\mathbb Q_{\mathrm{rb}}\).
If \(\mathsf{Viol}(s,\hat s)=1\) then the symmetric difference is nonempty, so \(\ell_{\mathrm{rb}}(s,\hat s)\ge 1\).
Since \(\ell_{\mathrm{rb}}\) is a nonnegative integer, \(\1[\ell_{\mathrm{rb}}\ge 1]\le \ell_{\mathrm{rb}}\).
Taking expectations gives the final inequality.
\end{proof}

\begin{definition}[Rollback-task rate--distortion function]
\label{def:Rrb}
Let \(\hat S\) be generated from \(S\sim P_O\) via a test channel \(P_{\hat S\mid S}\) over \(\hat{\mathbb{S}}\).
Define
\[
R_{\mathrm{rb}}(L)
\ \coloneqq\
\inf_{P_{\hat S\mid S}:\ \E[\ell_{\mathrm{rb}}(S,\hat S)]\le L}\ I(S;\hat S).
\]
\end{definition}

As in Section~\ref{sec:zero}, admissibility (\(\hat{\mathbb{S}}\subseteq \Cn_{\mathrm{rev}}(S_O)\), Assumption~\ref{assm:admissible})
eliminates the redundant part \(J\) from the information-theoretic tradeoff.

\begin{lemma}[Redundant facts are rollback-task invisible under admissibility]
\label{lem:rb-loss-redundant}
Assume Assumption~\ref{assm:admissible}. For any \(j\in J\) and any \(\hat s\in\hat{\mathbb{S}}\),
\[
\ell_{\mathrm{rb}}(j,\hat s)=0.
\]
\end{lemma}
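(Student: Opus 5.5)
The plan is to reduce the claim to Proposition~\ref{prop:redundant-zero} and then pass from closures to the rollback observable by intersecting with \(\mathbb{Q}_{\mathrm{rb}}\), exactly as in Lemma~\ref{lem:closure-implies-rb}. Fix \(j\in J\) and \(\hat s\in\hat{\mathbb{S}}\).

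First, Proposition~\ref{prop:redundant-zero} gives \(d_{\Cn}(j,\hat s)=0\) under Assumption~\ref{assm:admissible}. By Lemma~\ref{lem:dist-basic}, this is equivalent to
\[
\Cn_{\mathrm{rev}}(S_O[j\leftarrow \hat s])=\Cn_{\mathrm{rev}}(S_O).
\]
Equivalently, one can repeat the direct argument from the proof of that proposition. By Theorem~\ref{thm:core-properties}(1) and monotonicity, \(\Cn_{\mathrm{rev}}(S_O\setminus\{j\})=\Cn_{\mathrm{rev}}(S_O)\). Admissibility puts \(\hat s\) inside this closure, and idempotence then yields the displayed equality.

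Second, intersect both sides with \(\mathbb{Q}_{\mathrm{rb}}\). This gives \(\mathsf{RB}(S_O[j\leftarrow\hat s])=\mathsf{RB}(S_O)\) by Definition~\ref{def:rb-capability} (cf.\ Proposition~\ref{prop:rb-vs-closure}). Hence the symmetric difference in Definition~\ref{def:rb-loss} is empty, and \(\ell_{\mathrm{rb}}(j,\hat s)=0\).

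There is no real obstacle here. The only point to handle is the degenerate set-edit case noted in the remark on set-based edit semantics: \(\hat s\in S_O\setminus\{j\}\), or \(\hat s=j\). The closure argument above never assumes \(\hat s\notin S_O\), so it already covers these cases without modification.
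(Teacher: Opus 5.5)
Your proposal is correct and follows essentially the same route as the paper: obtain \(\Cn_{\mathrm{rev}}(S_O[j\leftarrow\hat s])=\Cn_{\mathrm{rev}}(S_O)\) from Proposition~\ref{prop:redundant-zero}, then intersect both sides with \(\mathbb{Q}_{\mathrm{rb}}\) to get equal rollback observables and hence zero loss. Your detour through Lemma~\ref{lem:dist-basic} (rather than citing the closure equality established inside that proposition's proof) and your remark on the degenerate set-edit cases are both valid.
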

\begin{proof}
By Proposition~\ref{prop:redundant-zero}, for \(j\in J\) and any admissible \(\hat s\),
\(\Cn_{\mathrm{rev}}(S_O[j\leftarrow \hat s])=\Cn_{\mathrm{rev}}(S_O)\).
Intersect with \(\mathbb Q_{\mathrm{rb}}\) to obtain \(\mathsf{RB}(S_O[j\leftarrow \hat s])=\mathsf{RB}(S_O)\), hence \(\ell_{\mathrm{rb}}(j,\hat s)=0\).
\end{proof}

\begin{definition}[Core rollback-task RD function]
\label{def:Rrb-core}
Let \(P_A\coloneqq P_O(A)\). When \(P_A>0\), let \(\pi_A(\cdot)=P_O(\cdot\mid A)\) and \(A^\star\sim \pi_A\).
Define
\[
R_{\mathrm{rb},A}(\lambda)
\ \coloneqq\
\inf_{P_{\hat S\mid A^\star}:\ \E[\ell_{\mathrm{rb}}(A^\star,\hat S)]\le \lambda}\ I(A^\star;\hat S),
\]
where \(\ell_{\mathrm{rb}}(a,\hat s)\) is evaluated with the same reference log \(S_O\) via Definition~\ref{def:rb-loss}.
\end{definition}

\begin{theorem}[Core-only factorization of \(R_{\mathrm{rb}}(L)\)]
\label{thm:Rrb-factorization}
Assume Assumption~\ref{assm:admissible}. Let \(P_A\coloneqq P_O(A)\).
Then for all \(L\ge 0\),
\[
R_{\mathrm{rb}}(L)=
\begin{cases}
0, & P_A=0,\\[1mm]
P_A\cdot R_{\mathrm{rb},A}\!\bigl(\min\{L/P_A,\ |\mathbb Q_{\mathrm{rb}}|\}\bigr), & P_A>0.
\end{cases}
\]
In particular, \(R_{\mathrm{rb}}(L)=0\) for all \(L\ge P_A|\mathbb Q_{\mathrm{rb}}|\).
\end{theorem}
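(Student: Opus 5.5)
The proof mirrors that of Theorem~\ref{thm:factorization}, with Lemma~\ref{lem:rb-loss-redundant} replacing Proposition~\ref{prop:redundant-zero} and the range bound of Lemma~\ref{lem:rb-loss-range} replacing \(d_{\Cn}\le 1\).

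\emph{Degenerate case.} If \(P_A=0\), then \(S\in J\) almost surely. Pick any fixed \(\hat s_0\in\hat{\mathbb S}\), which exists by Assumption~\ref{assm:admissible}. By Lemma~\ref{lem:rb-loss-redundant} this gives zero loss, and the rate is \(I(S;\hat S)=0\). Hence \(R_{\mathrm{rb}}(L)=0\) for every \(L\ge 0\).

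\emph{Case \(P_A>0\).} Set \(\mathsf B=\1[S\in A]\) and \(\lambda\coloneqq\min\{L/P_A,|\mathbb Q_{\mathrm{rb}}|\}\).

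\emph{Step 1 (loss depends only on the core).} By Lemma~\ref{lem:rb-loss-redundant}, \(\ell_{\mathrm{rb}}(S,\hat S)=0\) on \(\{\mathsf B=0\}\) for any channel over \(\hat{\mathbb S}\). Therefore
\[
\E[\ell_{\mathrm{rb}}(S,\hat S)]=P_A\,\E[\ell_{\mathrm{rb}}(A^\star,\hat S)],
\]
where the conditional channel \(P_{\hat S\mid A^\star}\) is the restriction of \(P_{\hat S\mid S}\) to \(A\).
\begin{itemize}
  \item If \(\E[\ell_{\mathrm{rb}}(S,\hat S)]\le L\), the core loss is at most \(L/P_A\).
  \item The core loss is also at most \(|\mathbb Q_{\mathrm{rb}}|\), by Lemma~\ref{lem:rb-loss-range}.
\end{itemize}
So the core loss is at most \(\lambda\). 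Conversely, any core channel with loss at most \(\lambda\le L/P_A\) induces a full channel with loss at most \(L\).

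\emph{Step 2 (converse).} By the chain rule,
\[
I(S;\hat S)=I(\mathsf B;\hat S)+I(S;\hat S\mid\mathsf B)\ge P_A\,I(A^\star;\hat S)\ge P_A\,R_{\mathrm{rb},A}(\lambda).
\]

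\emph{Step 3 (achievability).} Take an \(\epsilon\)-optimal core channel \(P^\star_{\hat S\mid A^\star}\) for \(R_{\mathrm{rb},A}(\lambda)\), and let \(P^\star_{\hat S}\) be its output marginal under \(\pi_A\). Define the full channel by
\[
P_{\hat S\mid S}(\cdot\mid s)\coloneqq
\begin{cases}
P^\star_{\hat S\mid A^\star}(\cdot\mid s), & s\in A,\\
P^\star_{\hat S}(\cdot), & s\in J.
\end{cases}
\]
Its expected loss is \(P_A\lambda\le L\). Because \(\hat S\) has the same law under \(\mathsf B=0\) and \(\mathsf B=1\), we get \(I(\mathsf B;\hat S)=0\). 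Hence
\[
I(S;\hat S)=P_A\,I(A^\star;\hat S)\le P_A\bigl(R_{\mathrm{rb},A}(\lambda)+\epsilon\bigr).
\]
Letting \(\epsilon\to0\) gives the matching upper bound.

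\emph{Final clause.} If \(L\ge P_A|\mathbb Q_{\mathrm{rb}}|\), then \(\lambda=|\mathbb Q_{\mathrm{rb}}|\). Every constant channel \(\hat S\equiv\hat s_0\) has core loss at most \(|\mathbb Q_{\mathrm{rb}}|\) and zero mutual information. So \(R_{\mathrm{rb},A}(|\mathbb Q_{\mathrm{rb}}|)=0\), and therefore \(R_{\mathrm{rb}}(L)=0\).

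\emph{Expected difficulty.} There is no real obstacle. The only points needing care are the truncation at \(|\mathbb Q_{\mathrm{rb}}|\), which works because the loss is bounded by Lemma~\ref{lem:rb-loss-range}, and the existence of \(\epsilon\)-optimal core channels. The latter is standard for finite alphabets: the feasible set is nonempty, as the constant channel already shows.
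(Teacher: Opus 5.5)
Your proposal is correct and follows essentially the same route as the paper: the core indicator \(\mathsf B\), invisibility of \(J\) via Lemma~\ref{lem:rb-loss-redundant}, the chain-rule converse, and marginal-matching achievability, and you also spell out the final clause, which the paper leaves implicit. One small slip: the constant channel need not satisfy the core constraint when \(\lambda<|\mathbb Q_{\mathrm{rb}}|\) (e.g.\ \(L=0\) with some \(R^{\mathrm{rb}}_0(a)=\emptyset\)), so it does not show feasibility in general; in that case, however, your Step~1 shows the full problem is infeasible as well, so both sides equal \(+\infty\) and the identity still holds.
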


\begin{proof}
If \(P_A=0\), then \(S\in J\) almost surely and \(\ell_{\mathrm{rb}}(S,\hat S)=0\) for any admissible \(\hat S\) by
Lemma~\ref{lem:rb-loss-redundant}, so \(R_{\mathrm{rb}}(L)=0\).

Assume \(P_A>0\) and let \(\mathsf B\coloneqq \1[S\in A]\). By Lemma~\ref{lem:rb-loss-redundant},
\(\ell_{\mathrm{rb}}(S,\hat S)=0\) almost surely on \(\{\mathsf B=0\}\), hence
\[
\E[\ell_{\mathrm{rb}}(S,\hat S)]
= P_A\,\E[\ell_{\mathrm{rb}}(A^\star,\hat S)].
\]
Moreover,
\[
I(S;\hat S)=I(\mathsf B;\hat S)+I(S;\hat S\mid \mathsf B)\ge P_A\,I(A^\star;\hat S).
\]
For achievability, let \(P_{\hat S\mid A^\star}^*\) be an optimal (or \(\epsilon\)-optimal)
core test channel for \(R_{\mathrm{rb},A}\bigl(\min\{L/P_A,|\mathbb Q_{\mathrm{rb}}|\}\bigr)\),
and let \(P_{\hat S}^*\) be its output marginal under \(\pi_A\).  Define the overall channel by
\[
P_{\hat S\mid S}(\cdot\mid s)=
\begin{cases}
P_{\hat S\mid A^\star}^*(\cdot\mid s), & s\in A,\\
P_{\hat S}^*(\cdot), & s\in J .
\end{cases}
\]
Because \(\hat S\) has the same marginal on \(\mathsf B=0\) and \(\mathsf B=1\), we have
\(I(\mathsf B;\hat S)=0\) and consequently
\(I(S;\hat S)=P_A\,I(A^\star;\hat S)\le P_A\bigl(R_{\mathrm{rb},A}(\cdots)+\epsilon\bigr)\).
The expected loss becomes \(P_A\,\E[\ell_{\mathrm{rb}}(A^\star,\hat S)]\le L\).
Letting \(\epsilon\to0\) completes the proof.

\end{proof}

\subsection{Perfect rollback: hypergraph law and comparison with perfect closure preservation}
\label{sec:rb-zero}

\begin{definition}[Zero-loss reconstruction sets for rollback]
\label{def:rb-zero-set}
For \(s\in S_O\), define
\[
R^{\mathrm{rb}}_0(s)\coloneqq \{\hat s\in\hat{\mathbb{S}}:\ \ell_{\mathrm{rb}}(s,\hat s)=0\}.
\]
\end{definition}

\begin{definition}[Rollback confusability hypergraph and hypergraph entropy]
\label{def:gamma0-rb}
Define the rollback confusability hypergraph \(\Gamma^{\mathrm{rb}}_0\) on vertex set \(A\) by
\[
\Gamma^{\mathrm{rb}}_0 \coloneqq \Bigl\{ W\subseteq A:\ W\neq\emptyset,\ \bigcap_{a\in W}R^{\mathrm{rb}}_0(a)\neq\emptyset \Bigr\}.
\]
Define its hypergraph entropy
\[
H_{\Gamma^{\mathrm{rb}}_0}(\pi_A)\coloneqq 
\inf_{\substack{P_{W\mid A^\star}:\\ W\in\Gamma^{\mathrm{rb}}_0\ \text{a.s.},\ A^\star\in W\ \text{a.s.}}}
I(A^\star;W).
\]
\end{definition}

\begin{theorem}[Perfect-rollback hypergraph entropy law]
\label{thm:rb-zero-hypergraph}
Under Assumptions~\ref{assm:source}--\ref{assm:admissible}, the minimum achievable rate for
\emph{perfect rollback} \(\E[\ell_{\mathrm{rb}}(S,\hat S)]=0\) is
\[
R_{\mathrm{rb}}(0)=P_A\,H_{\Gamma^{\mathrm{rb}}_0}(\pi_A),
\]
with the convention \(P_A H_{\Gamma^{\mathrm{rb}}_0}(\pi_A)=0\) when \(P_A=0\), and
\(R_{\mathrm{rb}}(0)=+\infty\) if \(P_A>0\) and the feasible set in the definition of
\(H_{\Gamma^{\mathrm{rb}}_0}(\pi_A)\) is empty (i.e., no zero-loss test channel exists).
\end{theorem}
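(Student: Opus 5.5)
The plan is to mirror the proof of Theorem~\ref{thm:zero} verbatim, replacing $d_{\Cn}$ by $\ell_{\mathrm{rb}}$, $R_0$ by $R^{\mathrm{rb}}_0$, and $\Gamma_0$ by $\Gamma^{\mathrm{rb}}_0$. The only structural input from Section~\ref{sec:zero} that must be re-verified is the invisibility of the redundant part. Here that is supplied by Lemma~\ref{lem:rb-loss-redundant}: under admissibility, $\ell_{\mathrm{rb}}(j,\hat s)=0$ for all $j\in J$ and all $\hat s\in\hat{\mathbb S}$, so $R^{\mathrm{rb}}_0(j)=\hat{\mathbb S}$. Since $\ell_{\mathrm{rb}}\ge 0$ is integer valued, the constraint $\E[\ell_{\mathrm{rb}}(S,\hat S)]=0$ is equivalent to $\hat S\in R^{\mathrm{rb}}_0(S)$ almost surely.

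The degenerate case $P_A=0$ is handled first. Here $S\in J$ almost surely, and a constant output $\hat S\equiv\hat s_0\in\hat{\mathbb S}$ (nonempty by Assumption~\ref{assm:admissible}) gives zero loss by Lemma~\ref{lem:rb-loss-redundant} at zero rate. Hence $R_{\mathrm{rb}}(0)=0$.

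For $P_A>0$ the argument has two halves.

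\emph{Converse.} Take any zero-loss channel and set $\mathsf B=\1[S\in A]$. The chain rule gives
\[
I(S;\hat S)\ge I(S;\hat S\mid\mathsf B)\ge P_A\, I(A^\star;\hat S).
\]
Define $W\coloneqq\{a\in A:\hat S\in R^{\mathrm{rb}}_0(a)\}$. Then $A^\star\in W$ almost surely, and $W\in\Gamma^{\mathrm{rb}}_0$ almost surely, because $\hat S$ witnesses that the intersection is nonempty. Since $W$ is a deterministic function of $\hat S$, data processing gives $I(A^\star;\hat S)\ge I(A^\star;W)\ge H_{\Gamma^{\mathrm{rb}}_0}(\pi_A)$.

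If no zero-loss channel exists, then $R_{\mathrm{rb}}(0)=+\infty$ as an infimum over the empty set. I would check this is consistent with the stated convention: the feasible set for $H_{\Gamma^{\mathrm{rb}}_0}$ is empty exactly when some $a\in A$ has $R^{\mathrm{rb}}_0(a)=\emptyset$. One direction is the $W$-construction above. For the other, singletons $\{a\}$ suffice whenever every $R^{\mathrm{rb}}_0(a)\ne\emptyset$, and a zero-loss channel then exists by the achievability construction below.

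\emph{Achievability.} Take an $\epsilon$-optimal $P_{W\mid A^\star}$ and a selector $\psi(W)\in\bigcap_{a\in W}R^{\mathrm{rb}}_0(a)$, which exists because $\hat{\mathbb S}$ is finite. On $A$, output $\psi(W)$. On $J$, output a draw from the marginal law of $\psi(W)$; this incurs zero loss by Lemma~\ref{lem:rb-loss-redundant}. The output is then independent of $\mathsf B$, so $I(\mathsf B;\hat S)=0$, and
\[
I(S;\hat S)=P_A\, I(A^\star;\psi(W))\le P_A\bigl(H_{\Gamma^{\mathrm{rb}}_0}(\pi_A)+\epsilon\bigr).
\]

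I expect no real obstacle. The one point needing care is the matched-marginal trick on $J$, which is what makes $I(\mathsf B;\hat S)=0$ and relies on every admissible symbol being loss-free on $J$. Alternatively, one could shortcut the whole argument by invoking Theorem~\ref{thm:Rrb-factorization} at $L=0$, which gives $R_{\mathrm{rb}}(0)=P_A R_{\mathrm{rb},A}(0)$, and then proving only the core identity $R_{\mathrm{rb},A}(0)=H_{\Gamma^{\mathrm{rb}}_0}(\pi_A)$ by the same $W$/selector argument.
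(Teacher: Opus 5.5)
Your proposal is correct and is essentially the paper's proof. The paper also proves Theorem~\ref{thm:rb-zero-hypergraph} by rerunning the converse/achievability argument of Theorem~\ref{thm:zero} with $R^{\mathrm{rb}}_0$, $\ell_{\mathrm{rb}}$ and $\Gamma^{\mathrm{rb}}_0$ substituted, using Lemma~\ref{lem:rb-loss-redundant} to make $J$ loss-free so that the matched-marginal output on $J$ costs no rate; you simply write out the details the paper leaves implicit, including the infeasible case handled via the $W$-construction.
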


\begin{proof}
The proof follows the same hypergraph-entropy converse/achievability pattern as Theorem~\ref{thm:zero},
with \(R^{\mathrm{rb}}_0(\cdot)\) in place of \(R_0(\cdot)\) and \(\ell_{\mathrm{rb}}\) in place of \(d_{\Cn}\).
Redundant symbols contribute neither loss nor rate under admissibility (Lemma~\ref{lem:rb-loss-redundant}),
so the optimization reduces to the core and yields the factor \(P_A\).
\end{proof}

\begin{lemma}[Monotonicity of hypergraph entropy under hyperedge enlargement]
\label{lem:hypergraph-entropy-monotone}
Let \(\Gamma\subseteq \Gamma'\) be two families of nonempty subsets of \(A\). Then
\[
H_{\Gamma'}(\pi_A)\ \le\ H_{\Gamma}(\pi_A).
\]
\end{lemma}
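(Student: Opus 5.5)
The plan is to show that the feasible set in the infimum for \(H_{\Gamma}(\pi_A)\) is contained in the feasible set for \(H_{\Gamma'}(\pi_A)\). Once that inclusion holds, the conclusion is simply that an infimum over a larger set is no larger.

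Let \(H_{\Gamma}(\pi_A)\) be defined exactly as in Definition~\ref{def:gamma0}, with \(\Gamma\) in place of \(\Gamma_0\). It is the infimum of \(I(A^\star;W)\) over conditional laws \(P_{W\mid A^\star}\) such that \(W\in\Gamma\) almost surely and \(A^\star\in W\) almost surely. The infimum is \(+\infty\) if no such law exists.

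\textbf{Step 1 (feasibility transfers).} Take any \(P_{W\mid A^\star}\) feasible for \(\Gamma\). Since \(\Gamma\subseteq\Gamma'\), we have \(\{W\in\Gamma\}\subseteq\{W\in\Gamma'\}\), so \(W\in\Gamma'\) almost surely. The covering requirement \(A^\star\in W\) does not depend on the family, so it still holds. The same \(P_{W\mid A^\star}\) is therefore feasible for \(\Gamma'\). The mutual information \(I(A^\star;W)\) depends only on the joint law of \((A^\star,W)\), and that law is unchanged by reinterpreting the family. So the same value of the objective is attained.

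\textbf{Step 2 (compare infima).} Step~1 gives \(\mathcal F(\Gamma)\subseteq\mathcal F(\Gamma')\), where \(\mathcal F(\cdot)\) denotes the feasible set. Hence
\[
H_{\Gamma'}(\pi_A)=\inf_{\mathcal F(\Gamma')} I(A^\star;W)\le \inf_{\mathcal F(\Gamma)} I(A^\star;W)=H_{\Gamma}(\pi_A).
\]

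\textbf{Step 3 (edge cases).} If \(\mathcal F(\Gamma)=\emptyset\), then \(H_\Gamma(\pi_A)=+\infty\) by the stated convention, and the inequality holds trivially. If \(\mathcal F(\Gamma)\neq\emptyset\), then \(\mathcal F(\Gamma')\neq\emptyset\) as well, and the comparison of infima in Step~2 is a comparison of finite quantities (mutual information over finite alphabets is finite).

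There is no genuine obstacle here. The only point needing care is that the infimum is well defined, with the empty-set convention, for arbitrary families. In particular, it does not matter whether \(\Gamma\) or \(\Gamma'\) is downward closed, and the argument never uses that property.
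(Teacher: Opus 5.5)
Your proposal is correct and follows the paper's proof exactly: since $\Gamma\subseteq\Gamma'$, every channel feasible for $\Gamma$ is also feasible for $\Gamma'$, so the infimum over the larger feasible set cannot be larger. Your explicit handling of the empty-feasible-set convention ($H_\Gamma(\pi_A)=+\infty$) is a small addition that the paper leaves implicit.
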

\begin{proof}
In the definition of \(H_{\Gamma}(\pi_A)\), the infimum is taken over channels \(P_{W\mid A^\star}\) with \(W\in\Gamma\) a.s.
Replacing \(\Gamma\) by the larger family \(\Gamma'\) enlarges the feasible set, hence the infimum cannot increase.
\end{proof}

\begin{proposition}[Perfect rollback is no harder than perfect closure preservation]
\label{prop:rb-vs-sem-zero}
Under Assumptions~\ref{assm:source}--\ref{assm:admissible},
\[
R_{\mathrm{rb}}(0)\le R_{\sem}(0).
\]
Moreover, if rollback-equivalence coincides with full closure equivalence on the admissible reconstruction class
(e.g., when \(\mathbb Q_{\mathrm{rb}}=\mathbb{S}\)), then \(R_{\mathrm{rb}}(0)=R_{\sem}(0)\).
\end{proposition}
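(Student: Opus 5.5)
The plan is to compare the two zero-loss reconstruction families pointwise and then transfer the comparison to the hypergraphs and their entropies. First I will show that \(R_0(s)\subseteq R^{\mathrm{rb}}_0(s)\) for every \(s\in S_O\). If \(\hat s\in R_0(s)\), then \(d_{\Cn}(s,\hat s)=0\), so by Lemma~\ref{lem:dist-basic} we have \(\Cn_{\mathrm{rev}}(S_O[s\leftarrow\hat s])=\Cn_{\mathrm{rev}}(S_O)\). Intersecting with \(\mathbb Q_{\mathrm{rb}}\), as in Lemma~\ref{lem:closure-implies-rb}, gives \(\mathsf{RB}(S_O[s\leftarrow\hat s])=\mathsf{RB}(S_O)\). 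Hence \(\ell_{\mathrm{rb}}(s,\hat s)=0\), that is, \(\hat s\in R^{\mathrm{rb}}_0(s)\).

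Next I will lift this to the hypergraphs. For any nonempty \(W\subseteq A\), the inclusions give \(\bigcap_{a\in W}R_0(a)\subseteq\bigcap_{a\in W}R^{\mathrm{rb}}_0(a)\). So if the left intersection is nonempty, the right one is too, which yields \(\Gamma_0\subseteq\Gamma^{\mathrm{rb}}_0\). Lemma~\ref{lem:hypergraph-entropy-monotone} then gives \(H_{\Gamma^{\mathrm{rb}}_0}(\pi_A)\le H_{\Gamma_0}(\pi_A)\). This holds including the case \(+\infty\), since an empty feasible set for \(\Gamma^{\mathrm{rb}}_0\) forces an empty feasible set for \(\Gamma_0\). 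I will then multiply by \(P_A\) and invoke Theorem~\ref{thm:zero} and Theorem~\ref{thm:rb-zero-hypergraph} to get \(R_{\mathrm{rb}}(0)\le R_{\sem}(0)\). The case \(P_A=0\) is immediate because both sides are \(0\).

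There is also a direct alternative that avoids the hypergraphs. Any channel with \(\E[d_{\Cn}(S,\hat S)]=0\) has \(\Pr[\mathsf{Viol}_{\Cn}=1]=0\). By Lemma~\ref{lem:closure-implies-rb}, the \(\mathsf{RB}\) sets then agree almost surely, so \(\E[\ell_{\mathrm{rb}}]=0\). The feasible set of \(R_{\mathrm{rb}}(0)\) therefore contains that of \(R_{\sem}(0)\). I will mention this as a cross-check.

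For the equality, the hypothesis means that for every \(s\in S_O\) and \(\hat s\in\hat{\mathbb S}\), \(\mathsf{RB}(S_O[s\leftarrow\hat s])=\mathsf{RB}(S_O)\) holds iff the corresponding closures agree. When \(\mathbb Q_{\mathrm{rb}}=\mathbb S\), this is Proposition~\ref{prop:rb-vs-closure}. Under this hypothesis the reverse inclusion \(R^{\mathrm{rb}}_0(s)\subseteq R_0(s)\) holds, so \(R_0=R^{\mathrm{rb}}_0\) and \(\Gamma_0=\Gamma^{\mathrm{rb}}_0\). The two entropy laws then coincide, or equivalently the two feasible channel sets coincide.

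The main obstacle is interpretive rather than technical. I need to read ``coincides on the admissible reconstruction class'' precisely as equivalence for the pairs \((S_O,S_O[s\leftarrow\hat s])\) with \(s\in S_O\) and \(\hat s\in\hat{\mathbb S}\), since only those pairs enter \(R_0\) and \(R^{\mathrm{rb}}_0\). I must also handle the \(+\infty\) conventions consistently on both sides.
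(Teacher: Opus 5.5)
Your proposal is correct and follows the paper's proof: the pointwise inclusion $R_0(s)\subseteq R^{\mathrm{rb}}_0(s)$ (closure invariance implies rollback invariance) gives $\Gamma_0\subseteq\Gamma^{\mathrm{rb}}_0$, then Lemma~\ref{lem:hypergraph-entropy-monotone} and the two entropy laws yield the inequality, and the reverse inclusion under the stated hypothesis yields equality. Your direct feasible-set cross-check is a worthwhile addition, since it gives $R_{\mathrm{rb}}(0)\le R_{\sem}(0)$ (and the equality case) straight from the definitions, without relying on Theorem~\ref{thm:rb-zero-hypergraph}, whose proof the paper only sketches.
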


\begin{proof}
For every \(s\in S_O\), closure preservation implies rollback preservation (Proposition~\ref{prop:rb-vs-closure}), hence
\(R_0(s)\subseteq R^{\mathrm{rb}}_0(s)\). Therefore \(\Gamma_0\subseteq \Gamma^{\mathrm{rb}}_0\), and
Lemma~\ref{lem:hypergraph-entropy-monotone} yields \(H_{\Gamma^{\mathrm{rb}}_0}(\pi_A)\le H_{\Gamma_0}(\pi_A)\).
Multiply by \(P_A\) and apply Theorem~\ref{thm:zero} and Theorem~\ref{thm:rb-zero-hypergraph}.
The equality condition follows from the stated equivalence.
\end{proof}

\paragraph{Implication for computable design loops.}
The finite-alphabet pipeline of Section~\ref{sec:rcn-ba-theory} applies to both objectives:
one may compute \(R_A(\delta)\) using \(d_{\Cn}\), or compute \(R_{\mathrm{rb},A}(\lambda)\) using \(\ell_{\mathrm{rb}}\),
via the same Blahut--Arimoto machinery after constructing the corresponding distortion/loss matrix.

\section{RCN/rPES Instantiation: Discipline-Indexed Closures and Core Structure}
\label{sec:rcn-theory}

This section instantiates closure-preserving semantic rate--distortion on reversible causal nets (RCN),
and outlines the parallel event-structure (rPES) view via the RCN--rPES correspondence of \cite{melgratti2024reversible}.
Our goal here is \emph{theoretical}: define a discipline-indexed family of monotone closures capturing rollback-relevant semantics,
and derive the resulting closure cores and zero-distortion confusability structure.
Section~\ref{sec:rcn-exp} then provides a full numerical evaluation for the
resulting RD-guided log compression schemes.

For concreteness, the formal rules and core-structure proofs below are presented in full for the RCN instantiation;
the rPES side is stated in a parallel (abbreviated) form in Section~\ref{sec:rpes-instantiation} and follows by the same monotone-closure
arguments and/or transport along the correspondence.

\subsection{RCN/rPES background, logging interface, and discipline-indexed closures}
\label{sec:rcn-background-interface}

We adopt the rcn setting of \cite[\S7]{melgratti2024reversible}.
Let \(\overline{T}\) be the finite set of forward transitions and \(\underline{T}\) the set of backward transitions,
with pairing \(\mathsf{pair}:\underline{T}\to\overline{T}\).
Forward causality is induced by inhibitor arcs:
\[
t\ \ell\ t' \quad\Longleftrightarrow\quad {^\bullet t}\cap {^\circ t'}\neq\emptyset,
\]
and forward conflict is induced by shared preset:
\[
t\ \#\ t' \quad\Longleftrightarrow\quad {^\bullet t}\cap {^\bullet t'}\neq\emptyset.
\]
As in Sections~\ref{sec:prelim}--\ref{sec:rb}, an execution state is logged as a finite fact base.

A (forward) configuration is a finite set \(X\subseteq \overline{T}\) of forward transitions currently present.
The run-specific fact base is
\[
S_O^{\mathrm{rcn}}(X)\ \coloneqq\ \{\mathsf{In}(t): t\in X\}.
\]
Static net structure facts (preset/inhibitor/pairing) live in a background theory \(\mathcal{B}_{\mathrm{rcn}}\) \cite[\S7]{melgratti2024reversible}.
Concretely, \(\mathcal{B}_{\mathrm{rcn}}\) contains ground facts for \(\mathsf{Pre}(p,t)\), \(\mathsf{Inhib}(p,t)\), and the pairing map
(e.g., a predicate \(\mathsf{Pair}(\underline{t},t)\) encoding \(\mathsf{pair}(\underline{t})=t\)),
so that \(\Cn_{\mathrm{rev}}(S)\) reasons over \(\mathcal{B}_{\mathrm{rcn}}\cup S\) rather than requiring these structural facts to be logged per run.

A central message of \cite{melgratti2024reversible} is that different reversing disciplines
(causal, cause-respecting, inverse causal, out-of-causal-order) change which facts are rollback-relevant.
To capture this systematically within our closure-preserving interface, we define a
\emph{discipline-indexed} family of monotone Horn systems
\(\mathsf{PS}^{\mathrm{disc}}_{\mathrm{rcn}}\) inducing closures \(\Cn^{\mathrm{disc}}_{\mathrm{rcn}}\).

\subsubsection{Common structural rules: causality and conflict evidence}
\label{sec:rcn-common-rules}

We use the following shared predicates in \(\Sigma_{\mathrm{rev}}\):
\[
\mathsf{Pre}(p,t),\ \mathsf{Inhib}(p,t),\ \mathsf{In}(t),\
\mathsf{Cause}(t,t'),\ \mathsf{Cause}^\star(t,t'),\
\mathsf{Conf}(t,t').
\]
All ground \(\mathsf{Pre}(\cdot,\cdot)\) and \(\mathsf{Inhib}(\cdot,\cdot)\) facts are assumed to be part of the background theory
\(\mathcal{B}_{\mathrm{rcn}}\).
We write \(\mathsf{Conf}(\cdot,\cdot)\) as a shorthand for the conflict predicate \(\mathsf{Conflict}(\cdot,\cdot)\) introduced in
Section~\ref{sec:prelim}.
Rules (R1)--(R2) provide a fact-level encoding of the structural relations \(t\ \ell\ t'\) and \(t\ \#\ t'\) introduced above:
\(\mathsf{Cause}(t,t')\) captures inhibitor-induced causality, and \(\mathsf{Conf}(t,t')\) captures shared-preset conflict.

\paragraph{(R1) Inhibitor-induced causality and its transitive closure.}
\[
\mathsf{Pre}(p,t)\wedge \mathsf{Inhib}(p,t')\Rightarrow \mathsf{Cause}(t,t'),
\qquad
\mathsf{Cause}(t,t')\Rightarrow \mathsf{Cause}^\star(t,t'),
\]
\[
\mathsf{Cause}^\star(t,u)\wedge \mathsf{Cause}^\star(u,t')\Rightarrow \mathsf{Cause}^\star(t,t').
\]

\paragraph{(R2) Conflict evidence (structural).}
\[
\mathsf{Pre}(p,t)\wedge \mathsf{Pre}(p,t')\wedge (t\neq t')\Rightarrow \mathsf{Conf}(t,t').
\]
We treat \(t\neq t'\) as a built-in inequality predicate on constants (standard in Datalog-style presentations).

\paragraph{Remark (inequality).}
When implementing the closure via a Datalog engine, the side condition \(t\neq t'\) can be treated either as a built-in interpreted predicate
over constants, or compiled away by precomputing extensional \(\mathsf{Neq}(t,t')\) facts for all distinct transition identifiers.

\subsubsection{Causal-consistent (``causal'') discipline closure}
\label{sec:rcn-causal-closure}

\paragraph{(CC1) Semantic downward closure of presence.}
\[
\mathsf{In}(t)\wedge \mathsf{Cause}^\star(u,t)\Rightarrow \mathsf{In}(u).
\]

\paragraph{(CC2) Non-maximality and rollback blockers.}
\[
\mathsf{In}(t)\wedge \mathsf{Cause}^\star(t,u)\wedge \mathsf{In}(u)\Rightarrow \mathsf{NonMax}(t),
\qquad
\mathsf{NonMax}(t)\Rightarrow \mathsf{Blocked}(\underline{t}).
\]
Optionally record order constraints:
\[
\mathsf{In}(t)\wedge \mathsf{In}(u)\wedge \mathsf{Cause}^\star(t,u)\Rightarrow
\mathsf{MustUndoBefore}(\underline{t},\underline{u}).
\]

\paragraph{Closure and rollback query set.}
Let \(\mathsf{PS}^{\mathrm{causal}}_{\mathrm{rcn}}\) be the Horn system comprising (R1)--(R2) plus (CC1)--(CC2).
Define
\[
\Cn^{\mathrm{causal}}_{\mathrm{rcn}}(S)\ \coloneqq\ \{s\in\mathbb{S}:\ \mathcal{B}_{\mathrm{rcn}}\cup S\vdash_{\mathsf{PS}^{\mathrm{causal}}_{\mathrm{rcn}}} s\},
\quad
\mathbb{Q}^{\mathrm{causal}}_{\mathrm{rb}}\coloneqq
\{\mathsf{Blocked}(\underline{t}),\mathsf{NonMax}(t),\mathsf{MustUndoBefore}(\underline{t},\underline{u})\}.
\]

\subsubsection{Cause-respecting discipline closure}
\label{sec:rcn-cause-respecting-closure}

We introduce \(\mathsf{Sust}(t,u)\) to represent the \emph{sustained causation} relation \(\ll\) described in \cite{melgratti2024reversible}.
Unlike the basic inhibitor-induced causality \(\mathsf{Cause}^\star\) (Rule~R1), sustained causation encodes a more persistent dependency between events that is relevant for cause-respecting reversibility.
For the purposes of our formal development, it is sufficient to treat \(\mathsf{Sust}\) as either
(i)~a pre-computed background predicate in \(\mathcal{B}_{\mathrm{rcn}}\), or
(ii)~a derived predicate defined by a fixed (and finite) set of Horn rules.

\paragraph{(CR1) Sustained-nonmaximality and blockers.}
\[
\mathsf{In}(t)\wedge \mathsf{Sust}(t,u)\wedge \mathsf{In}(u)\Rightarrow \mathsf{NonMax}^{\ll}(t),
\qquad
\mathsf{NonMax}^{\ll}(t)\Rightarrow \mathsf{Blocked}(\underline{t}).
\]

\paragraph{Closure and query set.}
Let \(\mathsf{PS}^{\mathrm{cr}}_{\mathrm{rcn}}\) be (R1)--(R2) plus (CC1) plus (CR1).
Define
\[
\Cn^{\mathrm{cr}}_{\mathrm{rcn}}(S)\coloneqq \{s:\mathcal{B}_{\mathrm{rcn}}\cup S\vdash_{\mathsf{PS}^{\mathrm{cr}}_{\mathrm{rcn}}} s\},
\quad
\mathbb{Q}^{\mathrm{cr}}_{\mathrm{rb}}\coloneqq \{\mathsf{Blocked}(\underline{t}),\mathsf{NonMax}^{\ll}(t)\}.
\]

\subsubsection{Inverse causal discipline closure}
\label{sec:rcn-inverse-causal-closure}

\paragraph{Why (CC1) is not assumed under inverse-causal reversibility.}
In causal-consistent settings, forward presence is downward closed: if a consequence is present, then its causes are present as well,
which is precisely what (CC1) captures in a monotone proof system.
In contrast, inverse-causal (and more generally out-of-causal-order) reversibility may admit states in which a cause has been undone
while some of its former consequences remain present.
Therefore downward closure of \(\mathsf{In}(\cdot)\) is not a discipline-invariant semantic law, and we do not include (CC1) in
\(\mathsf{PS}^{\mathrm{inv}}_{\mathrm{rcn}}\).
This is exactly the mechanism by which inverse-causal rollback evidence can force additional log atoms into the closure core
(Section~\ref{sec:rcn-core-by-discipline}).

Inverse causal reversibility changes rollback gating: undoing is prevented by the presence of \emph{causes}
rather than \emph{consequences} \cite{melgratti2024reversible}.

\paragraph{(IC1) Blockers from causes present.}
\[
\mathsf{In}(u)\wedge \mathsf{Cause}^\star(u,t)\Rightarrow \mathsf{CausePresent}(u,t),
\qquad
\mathsf{CausePresent}(u,t)\Rightarrow \mathsf{Blocked}(\underline{t}).
\]

\paragraph{Closure and query set.}
Let \(\mathsf{PS}^{\mathrm{inv}}_{\mathrm{rcn}}\) be (R1)--(R2) plus (IC1).
Define
\[
\Cn^{\mathrm{inv}}_{\mathrm{rcn}}(S)\coloneqq \{s:\mathcal{B}_{\mathrm{rcn}}\cup S\vdash_{\mathsf{PS}^{\mathrm{inv}}_{\mathrm{rcn}}} s\},
\quad
\mathbb{Q}^{\mathrm{inv}}_{\mathrm{rb}}\coloneqq \{\mathsf{Blocked}(\underline{t}),\mathsf{CausePresent}(u,t)\}.
\]

\subsubsection{rPES instantiation: an event-structure overview}
\label{sec:rpes-instantiation}

To complement the detailed RCN development, we briefly outline how the same discipline-indexed, closure-based approach applies to the setting of reversible prime event structures (rPES).
An rPES provides a finite set of forward events \(\mathcal{E}\), reverse events \(\underline{\mathcal{E}}\), and structural relations such as causality and conflict \cite{melgratti2024reversible}.
A run state (configuration) is represented by a finite set \(X\subseteq \mathcal{E}\) of currently present forward events, for which we log the fact base
\[
S_O^{\mathrm{pes}}(X)\ \coloneqq\ \{\mathsf{In}(e): e\in X\}.
\]
All structural relations needed by rollback reasoning (e.g., \(\mathsf{Cause}(\cdot,\cdot)\), \(\mathsf{Conflict}(\cdot,\cdot)\))
are treated as background facts in a finite theory \(\mathcal{B}_{\mathrm{pes}}\), and we assume that
\(\mathsf{Cause}^\star\) is available either as a precomputed background relation or as the closure of a fixed Horn rule set
defining transitive closure. Thus, closures are computed over \(\mathcal{B}_{\mathrm{pes}}\cup S\).

\paragraph{Discipline-indexed monotone closures and rollback queries.}
Using the same monotone ``positive evidence'' design as in Section~\ref{sec:prelim}, we define rPES closures by Horn rules
syntactically parallel to the RCN ones, yielding operators \(\Cn^{\mathrm{disc}}_{\mathrm{pes}}\) for
\(\mathrm{disc}\in\{\mathrm{causal},\mathrm{cr},\mathrm{inv}\}\).
Rollback query sets \(\mathbb Q^{\mathrm{disc}}_{\mathrm{rb}}\) are defined analogously to the RCN case, with transitions replaced by events
(e.g., \(\mathsf{Blocked}(\underline{e})\), \(\mathsf{NonMax}(e)\), \(\mathsf{MustUndoBefore}(\underline{e},\underline{e}')\), etc.).

\paragraph{Status of the rPES instantiation.}
In this version, the rPES discussion should be read as a correspondence-level companion to the fully formal RCN development.
All theorem-level proofs in this section are given for RCN; a full transport theorem for cores, distortion matrices, and rollback observables on rPES is deferred to a subsequent version.

\subsection{Core computation depends on the rollback discipline}
\label{sec:rcn-core-by-discipline}

The closure-based core \(A=\Atom_{\mathrm{rev}}(S_O)\) (Definition~\ref{def:core}) is discipline-dependent through the chosen
\(\Cn_{\mathrm{rev}}\).
We now characterize the causal and cause-respecting cores as frontier-shaped, and give an explicit
counterexample for inverse causal semantics.

\subsubsection{Causal and Cause-respecting disciplines: frontier-core results}
\label{sec:rcn-frontier-core}

\begin{definition}[Frontier (maximal set)]
Let \(X\subseteq\overline{T}\). Define the frontier of \(X\) (w.r.t.\ the derived causal reachability \(\mathsf{Cause}^\star\)) as
\[
\Fr(X)\ \coloneqq\ \{t\in X:\ \nexists u\in X\ \text{with}\ \mathsf{Cause}^\star(t,u)\ \text{and}\ t\neq u\}.
\]
\end{definition}

\begin{assumption}[Acyclicity of causal reachability on configurations]
\label{assm:rcn-acyclic}
For every reachable configuration \(X\subseteq\overline{T}\), the restriction of
\(\mathsf{Cause}^\star\) to \(X\) is acyclic.
\end{assumption}

\begin{proposition}[Frontier core for disciplines with downward-closed presence]
\label{prop:frontier-core-downward}
Fix a configuration \(X\subseteq\overline{T}\), let
\[
S_O=S_O^{\mathrm{rcn}}(X), \qquad F\coloneqq \Fr(X), \qquad S_F\coloneqq S_O^{\mathrm{rcn}}(F).
\]
Assume Assumption~\ref{assm:rcn-acyclic}, that rule \emph{(CC1)} belongs to the closure system,
and that no rule other than \emph{(CC1)} derives atoms of the form \(\mathsf{In}(\cdot)\).
Then
\[
\Atom_{\mathrm{rev}}(S_O)=S_F.
\]
In particular, the identity holds for both \(\Cn^{\mathrm{causal}}_{\mathrm{rcn}}\) and
\(\Cn^{\mathrm{cr}}_{\mathrm{rcn}}\).
\end{proposition}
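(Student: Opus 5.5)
We need to show that the deletion scan deletes exactly the non-frontier facts and keeps the frontier facts. The plan is to first establish two closure facts, which (R1)--(R2) together with the hypothesis on \(\mathsf{In}\)-derivations make precise, and then replay the scan of Definition~\ref{def:core} using the invariant that the working set always contains \(S_F\).

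\textbf{Step 1 (what \(\mathsf{In}\)-atoms a set derives).} Since the \(\mathsf{Cause}^\star\) facts come only from the background \(\mathcal{B}_{\mathrm{rcn}}\) via (R1), which do not involve \(\mathsf{In}\), the \(\mathsf{Cause}^\star\) relation is the same in every closure. Because (CC1) is the only rule producing \(\mathsf{In}(\cdot)\), for any \(Y\subseteq X\) we get
\[
\mathsf{In}(u)\in \Cn_{\mathrm{rev}}(S_O^{\mathrm{rcn}}(Y)) \iff u\in Y \text{ or } \mathsf{Cause}^\star(u,t)\ \text{for some } t\in Y.
\]
The direction \(\Leftarrow\) is one application of (CC1). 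For \(\Rightarrow\), induct on derivation length: the set \(Y\cup\{u:\exists t\in Y,\ \mathsf{Cause}^\star(u,t)\}\) is closed under (CC1), by transitivity of \(\mathsf{Cause}^\star\) from (R1). The other rules only add non-\(\mathsf{In}\) atoms, which cannot feed back into \(\mathsf{In}\)-derivations except through \(\mathsf{Cause}^\star\), and that is already fixed.

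\textbf{Step 2 (every fact in \(S_O\setminus S_F\) is redundant relative to \(S_F\)).} Take \(t\in X\setminus F\). By the definition of the frontier there is \(u\in X\), \(u\neq t\), with \(\mathsf{Cause}^\star(t,u)\). By Assumption~\ref{assm:rcn-acyclic} and finiteness of \(X\), following such successors must terminate, so there is \(f\in F\) with \(\mathsf{Cause}^\star(t,f)\); transitivity is used here. Hence \(\mathsf{In}(t)\in\Cn(S_F)\). This is the one place where acyclicity is really needed, since a cycle could leave \(F\) empty.

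\textbf{Step 3 (frontier facts are never redundant).} Fix \(f\in F\) and any \(Y\subseteq X\setminus\{f\}\). If \(\mathsf{In}(f)\in\Cn(S_O^{\mathrm{rcn}}(Y))\), then Step~1 gives some \(t\in Y\subseteq X\) with \(\mathsf{Cause}^\star(f,t)\) and \(t\neq f\), contradicting \(f\in\Fr(X)\).

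\textbf{Step 4 (running the scan).} Let the working set be \(A^{(k)}\). We maintain the invariant \(S_F\subseteq A^{(k)}\).
\begin{itemize}
\item When a frontier fact \(\mathsf{In}(f)\) is scanned, Step~3 with \(Y\) given by \(A^{(k-1)}\setminus\{\mathsf{In}(f)\}\) shows that it is kept.
\item When a non-frontier fact \(\mathsf{In}(t)\) is scanned, the invariant gives \(S_F\subseteq A^{(k-1)}\setminus\{\mathsf{In}(t)\}\). Step~2 and monotonicity then give \(\mathsf{In}(t)\in\Cn(A^{(k-1)}\setminus\{\mathsf{In}(t)\})\), so it is deleted.
\end{itemize}
Thus \(A=S_F\), independently of the order \(\preceq\).

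Finally, \(\Cn^{\mathrm{causal}}_{\mathrm{rcn}}\) and \(\Cn^{\mathrm{cr}}_{\mathrm{rcn}}\) both contain (CC1). Their other rules, (R1)--(R2), (CC2) and (CR1), derive only \(\mathsf{Cause}\), \(\mathsf{Cause}^\star\), \(\mathsf{Conf}\), \(\mathsf{NonMax}\), \(\mathsf{NonMax}^{\ll}\), \(\mathsf{Blocked}\) and \(\mathsf{MustUndoBefore}\) atoms. In the cr case \(\mathsf{Sust}\) is treated as background or derived without \(\mathsf{In}\). So the hypotheses hold and the identity follows.

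The main obstacle is Step~1's characterization of \(\mathsf{In}\)-derivability, and in particular confirming that \(\mathsf{Cause}^\star\) does not depend on the logged facts, so that it is identical across all subsets.
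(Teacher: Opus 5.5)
Your proof is correct and follows the same route as the paper. You descend along strict $\mathsf{Cause}^\star$-successors, using acyclicity and transitivity, to reach a frontier element and then apply (CC1); you get irredundancy of frontier atoms from the fact that only (CC1) produces $\mathsf{In}(\cdot)$; and you replay the deletion scan with $S_F$ as an invariant. Your Step~1 characterization of $\mathsf{In}$-derivability, together with Step~3's irredundancy against every $Y\subseteq X\setminus\{f\}$, is in fact slightly more careful than the paper: the paper proves irredundancy only relative to $S_F\setminus\{\mathsf{In}(f)\}$, even though the working set when $\mathsf{In}(f)$ is scanned may still contain unscanned non-frontier atoms.
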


\begin{proof}
We first show that
\[
\Cn_{\mathrm{rev}}(S_F)=\Cn_{\mathrm{rev}}(S_O).
\]
Since \(S_F\subseteq S_O\), monotonicity gives
\[
\Cn_{\mathrm{rev}}(S_F)\subseteq \Cn_{\mathrm{rev}}(S_O).
\]
For the reverse inclusion, let \(t\in X\).

If \(t\in F\), then \(\mathsf{In}(t)\in S_F\subseteq \Cn_{\mathrm{rev}}(S_F)\).

If \(t\notin F\), then \(t\) has a strict \(\mathsf{Cause}^\star\)-successor in \(X\).
Because \(X\) is finite and \(\mathsf{Cause}^\star\!\upharpoonright_X\) is acyclic, repeated choice of a strict
\(\mathsf{Cause}^\star\)-successor must terminate at a \(\mathsf{Cause}^\star\)-maximal element \(f\in F\) such that
\(\mathsf{Cause}^\star(t,f)\).
Since \(\mathsf{In}(f)\in S_F\), rule \emph{(CC1)} yields
\[
\mathsf{In}(f)\wedge \mathsf{Cause}^\star(t,f)\Rightarrow \mathsf{In}(t),
\]
so \(\mathsf{In}(t)\in \Cn_{\mathrm{rev}}(S_F)\).

Hence \(S_O\subseteq \Cn_{\mathrm{rev}}(S_F)\), and monotonicity plus idempotence imply
\[
\Cn_{\mathrm{rev}}(S_O)\subseteq \Cn_{\mathrm{rev}}(\Cn_{\mathrm{rev}}(S_F))=\Cn_{\mathrm{rev}}(S_F).
\]
Therefore \(\Cn_{\mathrm{rev}}(S_F)=\Cn_{\mathrm{rev}}(S_O)\).

Next we show that each frontier atom is irredundant.
Fix \(f\in F\) and suppose, towards a contradiction, that
\[
\mathsf{In}(f)\in \Cn_{\mathrm{rev}}\bigl(S_F\setminus\{\mathsf{In}(f)\}\bigr).
\]
By assumption, the only rule that derives new \(\mathsf{In}(\cdot)\) atoms is \emph{(CC1)}.
Consequently, any derivation of \(\mathsf{In}(f)\) from \(S_F\setminus\{\mathsf{In}(f)\}\) must
involve an instance of (CC1) with conclusion \(\mathsf{In}(f)\).  The premises of (CC1) require
some \(\mathsf{In}(g)\) together with \(\mathsf{Cause}^\star(f,g)\).  Since the derivation
uses only facts from \(S_F\setminus\{\mathsf{In}(f)\}\) and Horn rules, the atom
\(\mathsf{In}(g)\) must eventually be supported by a ground fact in
\(S_F\setminus\{\mathsf{In}(f)\}\); that is, there exists
\(\mathsf{In}(g)\in S_F\setminus\{\mathsf{In}(f)\}\) such that
\(\mathsf{Cause}^\star(f,g)\).  But then \(g\in X\) is a strict
\(\mathsf{Cause}^\star\)-successor of \(f\), contradicting the maximality of \(f\in F\).
Hence
\[
\mathsf{In}(f)\notin \Cn_{\mathrm{rev}}\bigl(S_F\setminus\{\mathsf{In}(f)\}\bigr)
\qquad\text{for all } f\in F.
\]

Finally, consider the deletion scan of Definition~\ref{def:core}.
By the previous paragraph, no frontier atom is ever deleted.
When a non-frontier atom \(\mathsf{In}(t)\) is scanned, the current working set still contains \(S_F\), and from the first part
of the proof we already know
\[
\mathsf{In}(t)\in \Cn_{\mathrm{rev}}(S_F).
\]
Thus
\[
\mathsf{In}(t)\in \Cn_{\mathrm{rev}}(A\setminus\{\mathsf{In}(t)\}),
\]
so \(\mathsf{In}(t)\) is deleted by the scan.
Therefore the scan keeps exactly the frontier atoms, and the final core is \(S_F\).
\end{proof}

\begin{corollary}[Frontier is the core under \(\Cn^{\mathrm{causal}}_{\mathrm{rcn}}\)]
\label{cor:frontier-core-causal}
Fix \(X\subseteq\overline{T}\) and \(S_O=S_O^{\mathrm{rcn}}(X)\).
Under \(\Cn^{\mathrm{causal}}_{\mathrm{rcn}}\) and Assumption~\ref{assm:rcn-acyclic},
\[
\Atom_{\mathrm{rev}}(S_O)=S_O^{\mathrm{rcn}}(\Fr(X)).
\]
\end{corollary}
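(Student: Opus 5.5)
The plan is to obtain the corollary as a direct specialization of Proposition~\ref{prop:frontier-core-downward}: set \(\Cn_{\mathrm{rev}}\coloneqq\Cn^{\mathrm{causal}}_{\mathrm{rcn}}\) and check that each hypothesis of that proposition holds. Assumption~\ref{assm:rcn-acyclic} is given. Rule (CC1) belongs to \(\mathsf{PS}^{\mathrm{causal}}_{\mathrm{rcn}}\) by definition, since that system is (R1)--(R2) plus (CC1)--(CC2). The Tarski axioms needed for the deletion scan and for Theorem~\ref{thm:core-properties} follow from Lemma~\ref{lem:horn-tarski}, because \(\mathsf{PS}^{\mathrm{causal}}_{\mathrm{rcn}}\) is a function-free Horn system. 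Closures are taken over \(\mathcal{B}_{\mathrm{rcn}}\cup S\) in the fixed-background view, so the scan ranges only over the logged \(\mathsf{In}(\cdot)\) atoms of \(S_O\).

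The only hypothesis that needs real checking is that no rule other than (CC1) derives atoms of the form \(\mathsf{In}(\cdot)\). I would go through the rule heads one at a time:
\begin{itemize}
\item (R1) concludes \(\mathsf{Cause}\) and \(\mathsf{Cause}^\star\).
\item (R2) concludes \(\mathsf{Conf}\).
\item (CC2) concludes \(\mathsf{NonMax}\) and \(\mathsf{Blocked}\).
\item The optional order rule concludes \(\mathsf{MustUndoBefore}\).
\end{itemize}
None of these heads is \(\mathsf{In}\), so (CC1) is the unique rule whose head is \(\mathsf{In}\). This holds whether or not the optional \(\mathsf{MustUndoBefore}\) rule is included.

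One subtlety concerns the facts that \(\mathsf{Cause}^\star\) relies on. The \(\mathsf{Cause}^\star\) facts used in the frontier definition must be the ones derivable from the background \(\mathcal{B}_{\mathrm{rcn}}\) alone. I would verify this by noting that (R1) has only \(\mathsf{Pre}\), \(\mathsf{Inhib}\), and \(\mathsf{Cause}^{(\star)}\) premises, and none of these depends on \(\mathsf{In}\). It follows that \(\Fr(X)\) is well defined independently of \(S\), and that \(\mathsf{Cause}^\star\) facts are always available in every closure the scan computes.

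With all hypotheses discharged, Proposition~\ref{prop:frontier-core-downward} yields \(\Atom_{\mathrm{rev}}(S_O)=S_O^{\mathrm{rcn}}(\Fr(X))\). I expect no real obstacle. The only step needing care is the syntactic head-check in the second paragraph, which is exactly what licenses the irredundancy argument of the proposition for frontier atoms.
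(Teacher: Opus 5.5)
Your proposal is correct and follows the paper's route: the paper states this corollary as the ``in particular'' clause of Proposition~\ref{prop:frontier-core-downward}, instantiated with $\Cn^{\mathrm{causal}}_{\mathrm{rcn}}$. The paper leaves the hypothesis checks implicit, and you make them explicit: the rule-head check that only (CC1) concludes $\mathsf{In}(\cdot)$ in $\mathsf{PS}^{\mathrm{causal}}_{\mathrm{rcn}}$, and the observation that $\mathsf{Cause}^\star$ facts do not depend on the logged $\mathsf{In}$ atoms.
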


\begin{corollary}[Frontier is the core under \(\Cn^{\mathrm{cr}}_{\mathrm{rcn}}\)]
\label{cor:frontier-core-cr}
Fix \(X\subseteq\overline{T}\) and \(S_O=S_O^{\mathrm{rcn}}(X)\).
Under \(\Cn^{\mathrm{cr}}_{\mathrm{rcn}}\) and Assumption~\ref{assm:rcn-acyclic},
\[
\Atom_{\mathrm{rev}}(S_O)=S_O^{\mathrm{rcn}}(\Fr(X)).
\]
\end{corollary}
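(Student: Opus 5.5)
The corollary follows by instantiating Proposition~\ref{prop:frontier-core-downward} with \(\Cn_{\mathrm{rev}}=\Cn^{\mathrm{cr}}_{\mathrm{rcn}}\). Since Assumption~\ref{assm:rcn-acyclic} is given, two hypotheses of the proposition remain to be verified for \(\mathsf{PS}^{\mathrm{cr}}_{\mathrm{rcn}}\). First, rule (CC1) must belong to the system; this holds by construction, since \(\mathsf{PS}^{\mathrm{cr}}_{\mathrm{rcn}}\) is defined as (R1)--(R2) plus (CC1) plus (CR1). Second, no rule other than (CC1) may derive atoms of the form \(\mathsf{In}(\cdot)\). I will check this by inspecting the head predicates of every rule:
\begin{itemize}
  \item (R1) concludes \(\mathsf{Cause}\) and \(\mathsf{Cause}^\star\) atoms;
  \item (R2) concludes \(\mathsf{Conf}\) atoms;
  \item (CR1) concludes \(\mathsf{NonMax}^{\ll}\) and \(\mathsf{Blocked}\) atoms.
\end{itemize}
Only (CC1) has \(\mathsf{In}\) in its head.

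The one step that needs care is the predicate \(\mathsf{Sust}\). The paper allows it either as a background predicate in \(\mathcal{B}_{\mathrm{rcn}}\) or as a predicate derived by a fixed finite set of Horn rules. In the first case no extra rules are added and nothing changes. In the second case I will make explicit the implicit requirement that these auxiliary rules have \(\mathsf{Sust}\) (or other auxiliary non-\(\mathsf{In}\) predicates) as heads. This is what ``defining \(\mathsf{Sust}\)'' means, so they cannot conclude \(\mathsf{In}(\cdot)\) atoms. I expect this to be the main obstacle, but only a mild one: it is a matter of stating the convention, after which the hypothesis ``only (CC1) derives \(\mathsf{In}\)'' holds.

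There is one further point to note. The background theory \(\mathcal{B}_{\mathrm{rcn}}\) contains only structural facts (\(\mathsf{Pre}\), \(\mathsf{Inhib}\), \(\mathsf{Pair}\), possibly \(\mathsf{Sust}\)) and no \(\mathsf{In}\) atoms. Hence the irredundancy argument in the proposition, which traces every \(\mathsf{In}(g)\) back to \(S_F\), is unaffected. Also, the closure operator satisfies the Tarski axioms by Lemma~\ref{lem:horn-tarski}, so Theorem~\ref{thm:core-properties} applies.

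With both hypotheses verified, Proposition~\ref{prop:frontier-core-downward} gives \(\Atom_{\mathrm{rev}}(S_O)=S_O^{\mathrm{rcn}}(\Fr(X))\) directly. The frontier \(\Fr(X)\) is taken with respect to \(\mathsf{Cause}^\star\) and not \(\mathsf{Sust}\). This matches the proposition, because presence propagation (CC1) uses \(\mathsf{Cause}^\star\), while the \(\mathsf{Sust}\)-based rules only produce query atoms.
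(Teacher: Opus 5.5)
Your proposal is correct and follows the paper's route. The paper obtains the corollary directly as the \(\Cn^{\mathrm{cr}}_{\mathrm{rcn}}\) instance of Proposition~\ref{prop:frontier-core-downward}, and you do the same while explicitly checking its hypotheses, including the sensible tightening, which the paper leaves implicit, that any Horn rules defining \(\mathsf{Sust}\) have no \(\mathsf{In}(\cdot)\) heads.
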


Figure~\ref{fig:rcn-fig8-core-frontier-closure} visualizes, on the Fig.~8 instance of \cite{melgratti2024reversible}, how the closure interface
\(\Cn_{\mathrm{rev}}\) makes redundancy invisible and reduces logging complexity to the frontier-shaped core.
Starting from \(S_O=\{\mathsf{In}(a),\mathsf{In}(b),\mathsf{In}(c),\mathsf{In}(d)\}\), the deletion scan (Definition~\ref{def:core})
deletes \(\mathsf{In}(a)\) and \(\mathsf{In}(b)\) because, in that instance, they are derivable from \(\{\mathsf{In}(c),\mathsf{In}(d)\}\) under the
downward-closure rule (CC1), while \(\mathsf{In}(c)\) and \(\mathsf{In}(d)\) are irredundant.
Thus the core coincides with the frontier: \(A=S_O^{\mathrm{rcn}}(\Fr(X))\).
The bottom row of the figure also previews why inverse-causal closures can force additional causes into the core
to preserve blocker evidence (cf.\ Proposition~\ref{prop:inverse-core-not-frontier}).

\begin{figure}[t]
\centering
\begin{tikzpicture}[
  font=\scriptsize,
  >=Latex,
  place/.style={circle, draw=black, thick, minimum size=3.5mm, inner sep=0pt},
  trans/.style={rectangle, draw=black, thick, minimum width=3.5mm, minimum height=3.5mm, inner sep=1pt},
  flow/.style={->, thick, draw=black},
  flowg/.style={->, thick, draw=black!45}, 
  inh/.style={thick, draw=red!80!black},   
  inhmark/.style={circle, draw=red!80!black, fill=white, inner sep=1.2pt},
  token/.style={circle, fill=black, inner sep=1.3pt},
  frontierBox/.style={draw=red!85!black, dashed, very thick, rounded corners, inner sep=7.5pt},
  redundantBox/.style={draw=black!55, dashed, very thick, rounded corners, inner sep=7.5pt},
  panel/.style={draw=black, rounded corners, thick, inner sep=6pt, align=left},
  tinyPanel/.style={draw=black, rounded corners, thick, inner sep=4pt, align=left},
  node distance=6mm
]

\begin{scope}[x=1cm,y=1cm, scale=0.95, local bounding box=LEFT]
  \node[inner sep=0, minimum size=0] at (-0.2,4.3) {};

  \begin{scope}[yshift=-0.2cm]
    \draw[flow] (-0.1,3.85) -- +(0.7,0) node[anchor=west] {flow arc};
    \draw[inh]  (2.0,3.85) -- +(0.7,0) node[anchor=west] {inhibitor arc};
    \node[inhmark] at (2.65,3.85) {};
    \node[token] at (4.8,3.85) {};
    \node[anchor=west] at (4.9,3.85) {initial token};
  \end{scope}

  \begin{scope}[yshift=-0.5cm]
    \node[place] (s0) at (3.0, 3.4) {};
    \node[anchor=north, font=\small\bfseries] (s0m) at ($(s0.south)+(0,0.05)$) {$s_0$};
    \node[place] (s1) at (0.0, 2.4) {};
    \node[anchor=north, font=\small\bfseries] (s1m) at ($(s1.north)+(0,0.35)$) {$s_1$};
    \node[place] (s2) at (1.5, 2.4) {};
    \node[anchor=north, font=\small\bfseries] (s2m) at ($(s2.north)+(0,0.35)$) {$s_2$};
    \node[place] (s3) at (3.0, 2.4) {};
    \node[anchor=north, font=\small\bfseries] (s3m) at ($(s3.north)+(0,0.35)$) {$s_3$};
    \node[place] (s4) at (4.5, 2.4) {};
    \node[anchor=north, font=\small\bfseries] (s4m) at ($(s4.north)+(0,0.35)$) {$s_4$};
    \node[place] (s5) at (6.6, 2.4) {};
    \node[anchor=north, font=\small\bfseries] (s5m) at ($(s5.north)+(0,0.35)$) {$s_5$};

    \node[place] (s6) at (0.0, 0.4) {$s_6$};
    \node[place] (s7) at (2.2, 0.4) {$s_7$};
    \node[place] (s8) at (4.1, 0.4) {$s_8$};
    \node[place] (s9) at (6.6, 0.4) {$s_9$};

    \node[trans] (a) at (0.0, 1.4) {$a$};
    \node[trans] (b) at (2.2, 1.4) {$b$};
    \node[trans] (c) at (4.1, 1.4) {$c$};
    \node[trans] (d) at (6.6, 1.4) {$d$};

    \node[trans] (ub) at (3.0, -0.2) {$\underline{b}$};
    \node[trans] (uc) at (5.2, -0.2) {$\underline{c}$};

    \node[token] at (s0.center) {};
    \node[token] at (s1.center) {};
    \node[token] at (s2.center) {};
    \node[token] at (s3.center) {};
    \node[token] at (s4.center) {};
    \node[token] at (s5.center) {};

    \draw[flow] (s1) -- (a);
    \draw[flow] (s2) -- (a.north east);    
    \draw[flow] (s0) to[out=200, in=90, looseness=1.3] (a.north east);
    \draw[flow] (a) -- (s6);

    \draw[flow] (s2) -- (b);
    \draw[flow] (s3) -- (b);
    \draw[flow] (b) -- (s7);

    \draw[flow] (s0) to[out=340, in=120, looseness=1.2] (c.north);
    \draw[flow] (s4) -- (c);
    \draw[flow] (c) -- (s8);

    \draw[flow] (s5) -- (d);
    \draw[flow] (d) -- (s9);

    \draw[flowg] (s7) -- (ub);
    \draw[flowg] (ub) to[out=190, in=280, looseness=1.1] (s2.south);
    \draw[flowg] (ub.north east) to[out=0, in=290, looseness=1.1] (s3.south east);

    \draw[flowg] (s8.south) -- (uc.west);
    \draw[flowg] (uc.north) to[out=0, in=0, looseness=0.4] (s4.south east);
    \draw[flowg] (uc.north) to[out=20, in=30, looseness=0.9] (s0.east);

    \inhArc{(s3.east)}{(c.north west)};    
    \inhArc{(s3.south)}{(ub.north)};
    \inhArc{(s4.south)}{(uc.north west)};
    \inhArc{(s5.south)}{(uc.east)};
    \inhArc{(s8.south)}{(ub.east)};

  \end{scope}
\end{scope}

\begin{scope}[x=1cm,y=1cm, scale=0.95]
    \node[frontierBox, 
          inner ysep=2.74mm, inner xsep=2.04mm,  
          fit=(c)(d),
          label={[red!85!black, yshift=-2.8cm]above:\textbf{Frontier }$\{c,d\}$}] (FBOX) {};
    \node[redundantBox, 
          inner ysep=2.74mm, inner xsep=2.04mm,
          fit=(a)(b),
          label={[black!55, yshift=-2.8cm]above:\textbf{Redundant }$\{a,b\}$}] (RBOX) {};
\end{scope}

\path let \p3 = ($(LEFT.east)-(LEFT.west)$) in \pgfextra{\xdef\Lwidth{\x3}};

\node[panel, text width=7.4cm,
      anchor=north west,
      at=(LEFT.north east),
      xshift=0.4cm
     ] (RIGHT) {
    \textbf{Closure, deletion scan, and resulting core (fact base).}\\[2pt]
    \(S_O=\{\mathsf{In}(a),\mathsf{In}(b),\mathsf{In}(c),\mathsf{In}(d)\}\).\\[3pt]
    \textbf{Sample derived facts in \(\Cn_{\mathrm{rev}}(S_O)\):}\\
    \(\mathsf{NonMax}(b)\), \(\mathsf{Blocked}(\underline{b})\),
    \(\mathsf{MustUndoBefore}(\underline{b},\underline{c})\) (illustrative).\\[4pt]
    \textbf{Deletion scan (order \(a\prec b\prec c\prec d\); closure \(\Cn^{\mathrm{causal}}_{\mathrm{rcn}}\)):}\\[-2pt]
    \[
    \renewcommand{\arraystretch}{1.5}
    \begin{array}{c|c|c}
      \text{scan } \mathsf{In}(\cdot) & \text{decision} & \text{reason (informal)}\\\hline
      \mathsf{In}(a) & \text{delete} & \mathsf{In}(a)\in \Cn^{\mathrm{causal}}_{\mathrm{rcn}}(\{\mathsf{In}(c),\mathsf{In}(d)\})\\
      \mathsf{In}(b) & \text{delete} & \mathsf{In}(b)\in \Cn^{\mathrm{causal}}_{\mathrm{rcn}}(\{\mathsf{In}(c),\mathsf{In}(d)\})\\
      \mathsf{In}(c) & \text{keep}   & \mathsf{In}(c)\notin \Cn^{\mathrm{causal}}_{\mathrm{rcn}}(\{\mathsf{In}(d)\})\\
      \mathsf{In}(d) & \text{keep}   & \mathsf{In}(d)\notin \Cn^{\mathrm{causal}}_{\mathrm{rcn}}(\{\mathsf{In}(c)\})
    \end{array}
    \]
    \textbf{Core and frontier coincide (causal/cause-respecting):}\\
    \(A=\{\mathsf{In}(c),\mathsf{In}(d)\}\), corresponding to frontier \(\{c,d\}\).
};

\path let \p1 = ($(LEFT.north)-(LEFT.south)$) in 
  \pgfextra{\xdef\Lheight{\y1}};
\path let \p2 = ($(RIGHT.north)-(RIGHT.south)$) in 
  \pgfextra{\xdef\Rheight{\y2}};

\begin{scope}[on background layer]
  \node[draw=black, thick, rounded corners, inner sep=3mm,
        anchor=north, minimum height=\Rheight,
        minimum width=\Lwidth+6mm,
        at=(LEFT.north)] (LEFTBOX) {};
\end{scope}

\path let \p2 = ($(RIGHT.south east)-(LEFTBOX.south west)$) in
  node[tinyPanel, anchor=north west,
       text width=\x2-8pt,
       align=left
      ] (B) at ($(LEFTBOX.south west)+(0,-0.1cm)$) {%
    \textbf{Inverse-causal discipline (contrast).}
    Under \(\Cn^{\mathrm{inv}}_{\mathrm{rcn}}\), blocker evidence depends on \emph{causes being present}
    (e.g., \(\mathsf{In}(u)\wedge \mathsf{Cause}^\star(u,t)\Rightarrow \mathsf{CausePresent}(u,t)\Rightarrow \mathsf{Blocked}(\underline{t})\)),
    so the core generally must include additional presence facts such as \(\mathsf{In}(a)\) and \(\mathsf{In}(b)\) to preserve
    \(\mathsf{CausePresent}(\cdot,\cdot)\) / \(\mathsf{Blocked}(\cdot)\) consequences.
};
\end{tikzpicture}

\caption{RCN instance based on \cite[Fig.~8]{melgratti2024reversible} illustrating the relationship between reversible semantic closure
\(\Cn_{\mathrm{rev}}\), deletion-scan core \(A\), and frontier \(\Fr(X)\).
Left: the net with frontier \(\{c,d\}\) (red dashed box) and redundant events \(\{a,b\}\) (gray dashed box).
Right: fact base, example closure consequences, deletion scan outcome, and \(A=\{\mathsf{In}(c),\mathsf{In}(d)\}\) matching \(\Fr(X)\)
under causal/cause-respecting closures; inverse-causal reversal can enlarge the core by requiring causes to preserve blocker evidence:
for instance, under \(\Cn^{\mathrm{inv}}_{\mathrm{rcn}}\) one may need \(\mathsf{In}(a)\) to derive
\(\mathsf{CausePresent}(a,b)\) and hence \(\mathsf{Blocked}(\underline{b})\); omitting \(\mathsf{In}(a)\) can therefore change the rollback closure.}
\label{fig:rcn-fig8-core-frontier-closure}
\end{figure}

\subsubsection{Inverse causal discipline: frontier can be insufficient}
\label{sec:rcn-core-inverse-example}

\begin{proposition}[Inverse causal core need not be the frontier]
\label{prop:inverse-core-not-frontier}
There exists a net and a configuration \(X\) such that the core under \(\Cn^{\mathrm{inv}}_{\mathrm{rcn}}\) is \emph{not}
\(S_O^{\mathrm{rcn}}(\Fr(X))\).
\end{proposition}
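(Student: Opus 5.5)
The plan is to exhibit a minimal explicit witness: a two-transition chain. Take places \(p,q\) and forward transitions \(u,t\) with background facts \(\mathsf{Pre}(p,u)\) and \(\mathsf{Inhib}(p,t)\) in \(\mathcal{B}_{\mathrm{rcn}}\). By (R1) this gives \(\mathsf{Cause}(u,t)\) and hence \(\mathsf{Cause}^\star(u,t)\). No other \(\mathsf{Pre}/\mathsf{Inhib}\) facts are present, so \(\mathsf{Cause}^\star\) restricted to \(X\coloneqq\{u,t\}\) is acyclic and Assumption~\ref{assm:rcn-acyclic} holds. (This mirrors the \(a\to b\) situation sketched in Figure~\ref{fig:rcn-fig8-core-frontier-closure}; using the Fig.~8 net itself would also work, but the two-transition net keeps the derivation check trivial.)

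Then \(S_O=\{\mathsf{In}(u),\mathsf{In}(t)\}\), and \(\Fr(X)=\{t\}\), since \(u\) has the strict successor \(t\) and \(t\) has none. So \(S_O^{\mathrm{rcn}}(\Fr(X))=\{\mathsf{In}(t)\}\).

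Next, run the deletion scan of Definition~\ref{def:core} under \(\Cn^{\mathrm{inv}}_{\mathrm{rcn}}\), with either canonical order. The key observation is that \(\mathsf{PS}^{\mathrm{inv}}_{\mathrm{rcn}}\), consisting of (R1), (R2) and (IC1), has no rule whose head is an \(\mathsf{In}(\cdot)\) atom; in particular (CC1) is absent. Hence for every \(S\), the \(\mathsf{In}\)-atoms in \(\Cn^{\mathrm{inv}}_{\mathrm{rcn}}(S)\) are exactly those in \(S\) (together with any in the background, and \(\mathcal{B}_{\mathrm{rcn}}\) contains none). I would verify this by a short induction on the saturation stages of Lemma~\ref{lem:finite-closure-computable}.

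Consequently, when \(\mathsf{In}(s)\) is scanned, \(\mathsf{In}(s)\notin\Cn^{\mathrm{inv}}_{\mathrm{rcn}}(A\setminus\{\mathsf{In}(s)\})\), so nothing is deleted and \(\Atom_{\mathrm{rev}}(S_O)=S_O=\{\mathsf{In}(u),\mathsf{In}(t)\}\neq\{\mathsf{In}(t)\}\).

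To connect this with rollback content, I would also note that the frontier set loses semantics. \(\Cn^{\mathrm{inv}}_{\mathrm{rcn}}(S_O)\) contains \(\mathsf{CausePresent}(u,t)\) and \(\mathsf{Blocked}(\underline t)\) via (IC1), whereas \(\Cn^{\mathrm{inv}}_{\mathrm{rcn}}(\{\mathsf{In}(t)\})\) contains neither. The only rule deriving \(\mathsf{CausePresent}\) needs \(\mathsf{In}(u)\), and \(\mathsf{Blocked}(\underline t)\) in the inverse system is only derived from \(\mathsf{CausePresent}\). So \(S_F\) does not even preserve the closure, consistent with Theorem~\ref{thm:core-properties}(1).

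The only step needing care is the claim that no \(\mathsf{In}\)-atom is derivable. I expect this to be the main obstacle, though a mild one: it is a syntactic inspection of rule heads in (R1), (R2) and (IC1), which are \(\mathsf{Cause}\), \(\mathsf{Cause}^\star\), \(\mathsf{Conf}\), \(\mathsf{CausePresent}\) and \(\mathsf{Blocked}\). Combined with the background containing no \(\mathsf{In}\)-facts, this finishes the argument.
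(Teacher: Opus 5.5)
Your proposal is correct and follows essentially the same route as the paper. Both use a two-transition causal chain with \(X\) containing both transitions and frontier equal to the top element, and both observe that without (CC1) the inverse-causal system cannot re-derive the ancestor's \(\mathsf{In}\)-atom, so the deletion scan keeps it and the core strictly exceeds \(S_O^{\mathrm{rcn}}(\Fr(X))\). Your explicit \(\mathsf{Pre}/\mathsf{Inhib}\) realization of the chain, and your inspection of rule heads showing that no \(\mathsf{In}(\cdot)\) atom is ever derived, make precise what the paper argues more informally.
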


\begin{proof}
Consider a chain \(a<b\) with \(X=\{a,b\}\) and no conflicts. Then \(\Fr(X)=\{b\}\).
Under (IC1), \(\mathsf{In}(a)\wedge \mathsf{Cause}^\star(a,b)\Rightarrow \mathsf{CausePresent}(a,b)\Rightarrow \mathsf{Blocked}(\underline{b})\).
If we log only \(\{\mathsf{In}(b)\}\), then \(\mathsf{In}(a)\) is missing and the blocker evidence disappears, changing rollback closure.
Thus \(\mathsf{In}(a)\) is core-relevant; the frontier alone is insufficient.
Formally, let \(S_O=\{\mathsf{In}(a),\mathsf{In}(b)\}\).  Under \(\Cn^{\mathrm{inv}}_{\mathrm{rcn}}\),
we have \(\Cn^{\mathrm{inv}}_{\mathrm{rcn}}(\{\mathsf{In}(b)\}) = \{\mathsf{In}(b)\} \cup
\text{background consequences}\), which does not contain \(\mathsf{In}(a)\);
hence \(\mathsf{In}(a)\notin\Cn^{\mathrm{inv}}_{\mathrm{rcn}}(S_O\setminus\{\mathsf{In}(a)\})\).
Therefore the deletion scan keeps \(\mathsf{In}(a)\), and the core contains
\(\{\mathsf{In}(a),\mathsf{In}(b)\}\), whereas \(\Fr(X)=\{b\}\).
Thus the core is strictly larger than the frontier.
\end{proof}

\subsection{Rates, reconstruction alphabets, and computability}
\label{sec:rcn-safety-rates}

We now make explicit how the general safety theorems of Section~\ref{sec:rb} specialize to the RCN instantiation.
These discipline-indexed quantities serve as the basis for the rate–distortion design and the comparative evaluation reported in Section~\ref{sec:rcn-exp}.

\begin{proposition}[Perfect rollback safety: discipline-indexed rate]
\label{prop:rcn-perfect-safety-rate}
Fix a discipline \(\mathrm{disc}\in\{\mathrm{causal},\mathrm{cr},\mathrm{inv}\}\) and instantiate
\(\Cn_{\mathrm{rev}}=\Cn^{\mathrm{disc}}_{\mathrm{rcn}}\) with some admissible \(\hat{\mathbb{S}}\subseteq \Cn^{\mathrm{disc}}_{\mathrm{rcn}}(S_O)\).
Let \(A\) be the deletion-scan core induced by \(\Cn^{\mathrm{disc}}_{\mathrm{rcn}}\) (Definition~\ref{def:core}).
Then the minimum rate required to guarantee perfect rollback safety
\(\Pr[\mathsf{Viol}(S,\hat S)=1]=0\) equals
\[
R_{\sem}(0)=P_A\,H_{\Gamma_0}(\pi_A),
\]
where \(\Gamma_0\) is the zero-distortion confusability hypergraph induced by \(\Cn^{\mathrm{disc}}_{\mathrm{rcn}}\)
(Definitions~\ref{def:zero-set}--\ref{def:gamma0}).
In particular, different reversing disciplines can induce different cores \(A\) (hence different \(P_A,\pi_A,\Gamma_0\)),
and therefore different perfect-safety logging frontiers.
\end{proposition}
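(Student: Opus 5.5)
The plan is to obtain the proposition as a direct specialization of the abstract results in Sections~\ref{sec:zero}--\ref{sec:rb}. First I will check that the instantiated operator $\Cn^{\mathrm{disc}}_{\mathrm{rcn}}$ satisfies the standing hypotheses. Each $\mathsf{PS}^{\mathrm{disc}}_{\mathrm{rcn}}$ is a finite function-free Horn system: rules (R1)--(R2) plus (CC1)--(CC2), (CR1), or (IC1). The built-in $t\neq t'$ is compiled into extensional $\mathsf{Neq}$ facts, and $\mathsf{Sust}$ is either a background predicate or Horn-defined. The closure is taken over $\mathcal{B}_{\mathrm{rcn}}\cup S$ with finite signature and constants, so Assumption~\ref{assm:finiteness} holds. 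Lemma~\ref{lem:horn-tarski} then gives Assumption~\ref{assm:tarski}. In the fixed-background view, the map $S\mapsto \Cn(\mathcal{B}_{\mathrm{rcn}}\cup S)$ still satisfies reflexivity, monotonicity and idempotence. For idempotence I use $\mathcal{B}_{\mathrm{rcn}}\subseteq \Cn(\mathcal{B}_{\mathrm{rcn}}\cup S)$ together with idempotence of the underlying Horn closure. Assumptions~\ref{assm:source}, \ref{assm:admissible} and~\ref{assm:order} are given by the hypotheses on $P_O$, $\hat{\mathbb S}$ and the public order.

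With the hypotheses in place, Definition~\ref{def:core} and Theorem~\ref{thm:core-properties} produce a well-defined, discipline-indexed core $A$. Proposition~\ref{prop:redundant-zero} makes $J$ distortion-invisible. Theorem~\ref{thm:zero}, applied verbatim with $\Cn_{\mathrm{rev}}=\Cn^{\mathrm{disc}}_{\mathrm{rcn}}$, yields $R_{\sem}(0)=P_A H_{\Gamma_0}(\pi_A)$, with the stated conventions when $P_A=0$ or when the feasible set is empty. Theorem~\ref{thm:perfect-rb-observable-from-Rsem} then shows that every zero-semantic-distortion scheme satisfies $\Pr[\mathsf{Viol}(S,\hat S)=1]=0$. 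It also shows that the least rate among schemes enforcing closure preservation is exactly $R_{\sem}(0)$. The final sentence on discipline dependence follows from Corollaries~\ref{cor:frontier-core-causal}--\ref{cor:frontier-core-cr} and Proposition~\ref{prop:inverse-core-not-frontier}. Together these exhibit a configuration where the inverse-causal core $\{\mathsf{In}(a),\mathsf{In}(b)\}$ differs from the causal frontier core $\{\mathsf{In}(b)\}$, so $P_A$ and $\pi_A$ differ between disciplines.

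The main obstacle is the word ``minimum''. By Lemma~\ref{lem:rb-loss-range}, $\Pr[\mathsf{Viol}=1]=0$ holds if and only if $\E[\ell_{\mathrm{rb}}]=0$. Hence the true minimum rate for perfect rollback safety is $R_{\mathrm{rb}}(0)=P_A H_{\Gamma^{\mathrm{rb}}_0}(\pi_A)$ by Theorem~\ref{thm:rb-zero-hypergraph}, and Proposition~\ref{prop:rb-vs-sem-zero} only gives $R_{\mathrm{rb}}(0)\le R_{\sem}(0)$.

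I would first try to prove equality $R^{\mathrm{rb}}_0(a)=R_0(a)$ for all $a\in A$ for each discipline, which would give $\Gamma^{\mathrm{rb}}_0=\Gamma_0$. Under admissibility an edited closure is always contained in $\Cn(S_O)$, so the task reduces to showing the following. Whenever the edit $a\leftarrow\hat s$ loses some derived fact, it also loses some $\mathbb{Q}^{\mathrm{disc}}_{\mathrm{rb}}$ fact. For a lost $\mathsf{In}(\cdot)$ atom one would argue through $\mathsf{Blocked}$, $\mathsf{NonMax}$ or $\mathsf{CausePresent}$ consequences.

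This fails in general. For example, losing an $\mathsf{In}(t)$ with no causal successors or predecessors in $X$ can change no query fact. So the fallback is to read ``guarantee'' in the sense of Theorem~\ref{thm:perfect-rb-observable-from-Rsem}: $R_{\sem}(0)$ is the minimum rate when perfect safety is enforced via the sufficient closure-preservation criterion. Under that reading it is an upper bound on the task-optimal rate $R_{\mathrm{rb}}(0)$. It coincides with $R_{\mathrm{rb}}(0)$ exactly when the equality condition of Proposition~\ref{prop:rb-vs-sem-zero} holds on the admissible class. I would state the proof under this reading and record the equality condition as a remark.
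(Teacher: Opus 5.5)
Your argument is the paper's own proof: the paper proves the proposition in one line, as Theorem~\ref{thm:perfect-rb-observable-from-Rsem} instantiated at $\Cn^{\mathrm{disc}}_{\mathrm{rcn}}$, and your hypothesis checks and discipline-dependence example (Corollary~\ref{cor:frontier-core-causal} versus Proposition~\ref{prop:inverse-core-not-frontier}) simply spell out that instantiation. Your caveat about ``minimum'' is correct and is not addressed by the paper: since $\mathsf{Viol}=\1[\ell_{\mathrm{rb}}\ge 1]$ pointwise, the literal minimum rate for $\Pr[\mathsf{Viol}(S,\hat S)=1]=0$ is $R_{\mathrm{rb}}(0)$, which can be strictly smaller than $R_{\sem}(0)$ (e.g.\ take $X=\{t,u\}$ with no $\mathsf{Cause}^\star$ relation within $X$, the causal discipline, $\hat{\mathbb{S}}=S_O$ and uniform $P_O$; then $\mathsf{RB}\equiv\emptyset$, so $R_{\mathrm{rb}}(0)=0$, while $A=S_O$ with singleton sets $R_0(\cdot)$ gives $R_{\sem}(0)=1$ bit), so the statement holds only in the reading you adopt, which is exactly what the cited theorem proves, and in your closing remark the condition of Proposition~\ref{prop:rb-vs-sem-zero} should be stated as sufficient, not necessary, for $R_{\mathrm{rb}}(0)=R_{\sem}(0)$.
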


\begin{proof}
This is exactly Theorem~\ref{thm:perfect-rb-observable-from-Rsem} after instantiating the closure operator to
\(\Cn^{\mathrm{disc}}_{\mathrm{rcn}}\).
\end{proof}

For a target rollback-violation probability \(\varepsilon\), under admissibility
Corollary~\ref{cor:approx-rb-safety-from-Rsem} yields the conservative design rule
\[
\E[d_{\Cn}(S,\hat S)]\le \varepsilon/|\Cn_{\mathrm{rev}}(S_O)|
\quad\Longrightarrow\quad
\Pr[\mathsf{Viol}(S,\hat S)=1]\le \varepsilon,
\]
where \(\Cn_{\mathrm{rev}}(S_O)\) is computed over the fixed background theory (e.g., \(\mathcal{B}_{\mathrm{rcn}}\cup S_O\) in the RCN view),
and hence a computable rate budget \(\text{rate}\ge R_{\sem}(\varepsilon/|\Cn_{\mathrm{rev}}(S_O)|)\).

\subsubsection{Rollback-task instantiation on RCN: \texorpdfstring{\(\ell_{\mathrm{rb}}\), \(R_{\mathrm{rb}}(L)\), and perfect-rollback rates}{ell\_rb, R\_rb(L), and perfect-rollback rates}}
\label{sec:rcn-rbtask-inst}

Fix a discipline \(\mathrm{disc}\in\{\mathrm{causal},\mathrm{cr},\mathrm{inv}\}\), its induced closure \(\Cn_{\mathrm{rev}}=\Cn^{\mathrm{disc}}_{\mathrm{rcn}}\) and query set \(\mathbb{Q}_{\mathrm{rb}}=\mathbb{Q}^{\mathrm{disc}}_{\mathrm{rb}}\). By instantiating the general definitions of Section~\ref{sec:rb} with this specific closure and query set, we obtain the discipline-indexed rollback observable \(\mathsf{RB}^{\mathrm{disc}}(S)\), the task loss \(\ell^{\mathrm{disc}}_{\mathrm{rb}}(s,\hat s)\), and the task rate--distortion function \(R^{\mathrm{disc}}_{\mathrm{rb}}(L)\).

As with the semantic distortion \(d_{\Cn}\) (Definition~\ref{def:distortion}), the rollback-task loss \(\ell^{\mathrm{disc}}_{\mathrm{rb}}(s,\hat s)\) is defined \emph{relative to the fixed reference log} \(S_O\) and compares the rollback observables of \(S_O\) and the single-symbol edited log \(S_O[s\leftarrow \hat s]\).

Recall that the deletion-scan core depends on the chosen closure; here it is computed
with \(\Cn^{\mathrm{disc}}_{\mathrm{rcn}}\).  Denote it by
\(A^{\mathrm{disc}}\coloneqq\Atom_{\mathrm{rev}}(S_O)\) and let
\(A^\star\sim\pi_A^{\mathrm{disc}}\) be a random core symbol.  Applying
Theorem~\ref{thm:Rrb-factorization} to this setting yields:

\begin{corollary}[RCN rollback-task RD factorization (discipline-indexed)]
\label{cor:rcn-Rrb-factorization}
Assume \(\hat{\mathbb{S}}\subseteq \Cn^{\mathrm{disc}}_{\mathrm{rcn}}(S_O)\).
Then for all \(L\ge 0\),
\[
R^{\mathrm{disc}}_{\mathrm{rb}}(L)=
\begin{cases}
0, & P_A^{\mathrm{disc}}=0,\\[1mm]
P_A^{\mathrm{disc}}\cdot R^{\mathrm{disc}}_{\mathrm{rb},A}\!\Bigl(\min\Bigl\{\dfrac{L}{P_A^{\mathrm{disc}}},\ |\mathbb{Q}^{\mathrm{disc}}_{\mathrm{rb}}|\Bigr\}\Bigr),
& P_A^{\mathrm{disc}}>0,
\end{cases}
\]
where the \emph{core rollback-task RD function} is
\[
R^{\mathrm{disc}}_{\mathrm{rb},A}(\lambda)
\ \coloneqq\
\inf_{P_{\hat S\mid A^\star}:\ \E[\ell^{\mathrm{disc}}_{\mathrm{rb}}(A^\star,\hat S)]\le \lambda}\ I(A^\star;\hat S),
\]
with \(\ell^{\mathrm{disc}}_{\mathrm{rb}}(\cdot,\cdot)\) evaluated w.r.t.\ the same reference log \(S_O\).
\end{corollary}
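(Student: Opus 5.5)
The plan is to obtain the corollary as a direct specialization of Theorem~\ref{thm:Rrb-factorization}, so the work consists of checking that every hypothesis of that theorem is met once we set \(\Cn_{\mathrm{rev}}=\Cn^{\mathrm{disc}}_{\mathrm{rcn}}\) and \(\mathbb{Q}_{\mathrm{rb}}=\mathbb{Q}^{\mathrm{disc}}_{\mathrm{rb}}\).

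\textbf{Step 1 (Tarski axioms).} The operator \(\Cn^{\mathrm{disc}}_{\mathrm{rcn}}\) is the least-fixed-point closure of the function-free Horn system \(\mathsf{PS}^{\mathrm{disc}}_{\mathrm{rcn}}\) over \(\mathcal{B}_{\mathrm{rcn}}\cup S\). In the fixed-background view, the map \(S\mapsto \Cn(\mathcal{B}_{\mathrm{rcn}}\cup S)\) is still reflexive, monotone and idempotent. Reflexivity and monotonicity are immediate. For idempotence, \(\mathcal{B}_{\mathrm{rcn}}\cup\Cn(\mathcal{B}_{\mathrm{rcn}}\cup S)=\Cn(\mathcal{B}_{\mathrm{rcn}}\cup S)\), and this set is already a fixed point. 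So Lemma~\ref{lem:horn-tarski} gives Assumption~\ref{assm:tarski}.

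\textbf{Step 2 (finiteness and shared order).} The ambient universe is finite by Assumption~\ref{assm:finiteness}, since the RCN signature and the transition and place constants are finite. Hence \(\mathbb{Q}^{\mathrm{disc}}_{\mathrm{rb}}\) is finite and \(|\mathbb{Q}^{\mathrm{disc}}_{\mathrm{rb}}|\) is a finite cap. Under the public order of Assumption~\ref{assm:order}, Theorem~\ref{thm:core-properties} then applies: the deletion scan with \(\Cn^{\mathrm{disc}}_{\mathrm{rcn}}\) yields a well-defined core \(A^{\mathrm{disc}}\) with \(\Cn^{\mathrm{disc}}_{\mathrm{rcn}}(A^{\mathrm{disc}})=\Cn^{\mathrm{disc}}_{\mathrm{rcn}}(S_O)\). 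This fixes \(J\), \(P_A^{\mathrm{disc}}\) and \(\pi_A^{\mathrm{disc}}\).

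\textbf{Step 3 (admissibility and redundancy invisibility).} The hypothesis \(\hat{\mathbb{S}}\subseteq\Cn^{\mathrm{disc}}_{\mathrm{rcn}}(S_O)\) is exactly Assumption~\ref{assm:admissible} for this closure. Lemma~\ref{lem:rb-loss-redundant} therefore gives \(\ell^{\mathrm{disc}}_{\mathrm{rb}}(j,\hat s)=0\) for all \(j\in J\). Its proof uses only Proposition~\ref{prop:redundant-zero}, which in turn uses only the Tarski axioms and admissibility. The loss \(\ell^{\mathrm{disc}}_{\mathrm{rb}}\) and the function \(R^{\mathrm{disc}}_{\mathrm{rb},A}\) are literally Definitions~\ref{def:rb-loss} and~\ref{def:Rrb-core} with the discipline-indexed objects substituted, both evaluated relative to the same \(S_O\).

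\textbf{Step 4 (conclusion).} Invoke Theorem~\ref{thm:Rrb-factorization}. Its converse uses the chain-rule bound \(I(S;\hat S)\ge P_A I(A^\star;\hat S)\). Its achievability uses the channel that copies the core-optimal channel on \(A\) and outputs the matched output marginal on \(J\). Together these give the displayed two-case formula.

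The only step I expect to need care is Step~1. There one has to confirm that absorbing the background \(\mathcal{B}_{\mathrm{rcn}}\) and the built-in inequality in (R2) keeps the closure within monotone Horn semantics. This holds by the paper's remark that \(t\neq t'\) can be compiled into extensional \(\mathsf{Neq}\) facts. For the inverse-causal discipline, (CC1) is absent, but this only changes which core is computed and does not affect monotonicity. The proof is therefore uniform across all three disciplines.
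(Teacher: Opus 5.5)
Your proposal is correct and follows the paper's route. The paper obtains this corollary simply by applying Theorem~\ref{thm:Rrb-factorization} with \(\Cn_{\mathrm{rev}}=\Cn^{\mathrm{disc}}_{\mathrm{rcn}}\) and \(\mathbb{Q}_{\mathrm{rb}}=\mathbb{Q}^{\mathrm{disc}}_{\mathrm{rb}}\); your Steps 1--3 just make explicit the hypotheses the paper leaves implicit (Tarski axioms in the fixed-background view, finiteness, the public order, admissibility), with one small imprecision: calling the inclusion \(\hat{\mathbb{S}}\subseteq\Cn^{\mathrm{disc}}_{\mathrm{rcn}}(S_O)\) ``exactly'' Assumption~\ref{assm:admissible} overlooks that the assumption also requires \(\hat{\mathbb{S}}\neq\emptyset\), which is needed for any channel to be feasible in the \(P_A^{\mathrm{disc}}=0\) branch, though the paper assumes this tacitly as well.
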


Define the zero-loss reconstruction sets
\[
R^{\mathrm{rb},\mathrm{disc}}_0(s)\ \coloneqq\ \{\hat s\in\hat{\mathbb{S}}:\ \ell^{\mathrm{disc}}_{\mathrm{rb}}(s,\hat s)=0\},
\]
and the induced rollback confusability hypergraph on the core:
\[
\Gamma^{\mathrm{rb},\mathrm{disc}}_0
\ \coloneqq\
\Bigl\{W\subseteq A^{\mathrm{disc}}:\ W\neq\emptyset,\ \bigcap_{a\in W}R^{\mathrm{rb},\mathrm{disc}}_0(a)\neq\emptyset\Bigr\}.
\]
Let \(H_{\Gamma^{\mathrm{rb},\mathrm{disc}}_0}(\pi_A^{\mathrm{disc}})\) be the corresponding hypergraph entropy
as in Definition~\ref{def:gamma0-rb} (with \(A=A^{\mathrm{disc}}\) and \(\pi_A=\pi_A^{\mathrm{disc}}\)).

\begin{corollary}[Perfect-rollback rate on RCN (discipline-indexed)]
\label{cor:rcn-perfect-rollback-hypergraph}
Under Assumptions~\ref{assm:source}--\ref{assm:admissible} and \(\Cn_{\mathrm{rev}}=\Cn^{\mathrm{disc}}_{\mathrm{rcn}}\),
the minimum achievable rate for perfect rollback in the task sense \(\E[\ell^{\mathrm{disc}}_{\mathrm{rb}}(S,\hat S)]=0\) is
\[
R^{\mathrm{disc}}_{\mathrm{rb}}(0)\ =\ P_A^{\mathrm{disc}}\,H_{\Gamma^{\mathrm{rb},\mathrm{disc}}_0}(\pi_A^{\mathrm{disc}}),
\]
with the convention \(P_A^{\mathrm{disc}}H_{\Gamma^{\mathrm{rb},\mathrm{disc}}_0}(\pi_A^{\mathrm{disc}})=0\) when \(P_A^{\mathrm{disc}}=0\).
\end{corollary}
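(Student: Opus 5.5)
The corollary follows by instantiating Theorem~\ref{thm:rb-zero-hypergraph} with \(\Cn_{\mathrm{rev}}=\Cn^{\mathrm{disc}}_{\mathrm{rcn}}\) and \(\mathbb{Q}_{\mathrm{rb}}=\mathbb{Q}^{\mathrm{disc}}_{\mathrm{rb}}\). So the plan is to check that the hypotheses of that theorem hold, and then to expand its converse/achievability pattern explicitly, since the theorem's own proof was only sketched by analogy with Theorem~\ref{thm:zero}.

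\textbf{Hypotheses.} First, \(\mathsf{PS}^{\mathrm{disc}}_{\mathrm{rcn}}\) is a function-free Horn system, with \(t\neq t'\) compiled into extensional \(\mathsf{Neq}\) facts, evaluated over the finite background \(\mathcal{B}_{\mathrm{rcn}}\cup S\). Lemma~\ref{lem:horn-tarski} then gives the Tarski axioms (Assumption~\ref{assm:tarski}), and Lemma~\ref{lem:finite-closure-computable} gives finiteness. Hence Theorem~\ref{thm:core-properties} applies to the core \(A^{\mathrm{disc}}\), and under the admissibility hypothesis Lemma~\ref{lem:rb-loss-redundant} gives \(\ell^{\mathrm{disc}}_{\mathrm{rb}}(j,\hat s)=0\) for every \(j\in J\).

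\textbf{Case \(P_A^{\mathrm{disc}}=0\).} Output a fixed admissible symbol. The loss is zero almost surely and the rate is zero.

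\textbf{Converse (\(P_A^{\mathrm{disc}}>0\)).} A zero expected loss forces \(\hat S\in R^{\mathrm{rb},\mathrm{disc}}_0(S)\) almost surely, because \(\ell_{\mathrm{rb}}\) is a nonnegative integer. By the chain rule with the core indicator \(\mathsf B\),
\[
I(S;\hat S)\ge P_A^{\mathrm{disc}}\, I(A^\star;\hat S).
\]
Set \(W\coloneqq\{a\in A^{\mathrm{disc}}:\hat S\in R^{\mathrm{rb},\mathrm{disc}}_0(a)\}\). Then \(A^\star\in W\) and \(W\in\Gamma^{\mathrm{rb},\mathrm{disc}}_0\) almost surely, and data processing gives \(I(A^\star;\hat S)\ge I(A^\star;W)\ge H_{\Gamma^{\mathrm{rb},\mathrm{disc}}_0}(\pi_A^{\mathrm{disc}})\).

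\textbf{Achievability (\(P_A^{\mathrm{disc}}>0\)).} Take an \(\epsilon\)-optimal \(P_{W\mid A^\star}\) and a selector \(\psi(W)\in\bigcap_{a\in W}R^{\mathrm{rb},\mathrm{disc}}_0(a)\). On core inputs, output \(\psi(W)\). On \(J\), output from the marginal of \(\psi(W)\). This makes \(I(\mathsf B;\hat S)=0\) and gives
\[
I(S;\hat S)\le P_A^{\mathrm{disc}}\bigl(H_{\Gamma^{\mathrm{rb},\mathrm{disc}}_0}(\pi_A^{\mathrm{disc}})+\epsilon\bigr).
\]
If the feasible set is empty, the convention \(+\infty\) applies on both sides.

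\textbf{Main obstacle.} The delicate point is the \(J\)-invisibility step, Lemma~\ref{lem:rb-loss-redundant}. It relies on Proposition~\ref{prop:redundant-zero}, which uses admissibility \(\hat{\mathbb S}\subseteq\Cn^{\mathrm{disc}}_{\mathrm{rcn}}(S_O)\) together with idempotence. One must check that the background-augmented closure \(S\mapsto\Cn(\mathcal B_{\mathrm{rcn}}\cup S)\) is still a Tarski closure operator in \(S\). Reflexivity, monotonicity and idempotence all carry over because \(\mathcal B_{\mathrm{rcn}}\) is fixed. Once this is noted, the rest is a direct specialization.
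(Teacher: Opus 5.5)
Your proposal is correct and follows essentially the same route as the paper: the corollary is a direct instantiation of Theorem~\ref{thm:rb-zero-hypergraph} with \(\Cn_{\mathrm{rev}}=\Cn^{\mathrm{disc}}_{\mathrm{rcn}}\) and \(\mathbb{Q}_{\mathrm{rb}}=\mathbb{Q}^{\mathrm{disc}}_{\mathrm{rb}}\), and you have written out that theorem's converse and achievability along the lines of Theorem~\ref{thm:zero}. Your explicit check that \(S\mapsto\Cn(\mathcal{B}_{\mathrm{rcn}}\cup S)\) is still a Tarski closure, which is what lets Proposition~\ref{prop:redundant-zero} and Lemma~\ref{lem:rb-loss-redundant} carry over, is a detail the paper leaves implicit, and you handle it correctly.
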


The same finite-alphabet pipeline used to compute \(R_A(\delta)\) from the distortion matrix
\(d(a,\hat s)=d_{\Cn}(a,\hat s)\) applies to rollback-task RD as well:
one explicitly computes the loss matrix \(\ell(a,\hat s)=\ell^{\mathrm{disc}}_{\mathrm{rb}}(a,\hat s)\)
(by recomputing \(\mathsf{RB}^{\mathrm{disc}}(\cdot)\) under single-symbol edits), and then runs Blahut--Arimoto
to obtain \(R^{\mathrm{disc}}_{\mathrm{rb},A}(\lambda)\).
Overall rates are recovered by the scaling in Corollary~\ref{cor:rcn-Rrb-factorization}.

\subsubsection{Nontrivial hyperedges via admissible summaries}
\label{sec:rcn-nontrivial-gamma0}

If the reconstruction alphabet contains only concrete presence facts \(\mathsf{In}(t)\),
then zero-distortion sets \(R_0(a)\) are often singletons, making \(\Gamma_0\) trivial.
We introduce \emph{admissible summary atoms} that (i) remain in the closure of the original log,
and (ii) create overlaps among the \(R_0(a)\), yielding strict hypergraph-entropy gains.


To preserve Assumption~\ref{assm:admissible} under a fixed public proof system, we avoid introducing \emph{run-specific} symbols
that depend on private encoder choices.
Instead, we use a public, canonical naming scheme in which all summary constants are determined solely by the finite
transition-ID set of the run and the shared order \(\prec\).

Concretely, for each ordered pair \((t,u)\) with \(t\prec u\), we allow a summary constant \(\sigma_{t,u}\) and a predicate
\(\mathsf{Summ}(\sigma_{t,u})\).
We then include the following \emph{fixed} Horn-rule schema in the background theory (or, equivalently, in
\(\mathsf{PS}^{\mathrm{disc}}_{\mathrm{rcn}}\)):
\[
\mathsf{In}(t)\wedge \mathsf{In}(u)\wedge \mathsf{Conc}(t,u)\Rightarrow \mathsf{Summ}(\sigma_{t,u}),
\]
\[
\mathsf{Summ}(\sigma_{t,u})\Rightarrow \mathsf{In}(t),
\qquad
\mathsf{Summ}(\sigma_{t,u})\Rightarrow \mathsf{In}(u).
\]
Here \(\mathsf{Conc}(t,u)\) is a public (static) concurrency predicate, e.g., derivable from the net structure as
``not related by \(\mathsf{Cause}^\star\) in either direction and not in conflict (\(\mathsf{Conf}\)).''
We treat \(\mathsf{Conc}(\cdot,\cdot)\) as an extensional background predicate (precomputed from the net structure), rather than as a
predicate derived inside \(\Cn^{\mathrm{disc}}_{\mathrm{rcn}}\) using negation, in order to preserve monotonicity.
With this schema, whenever \(t,u\in X\) are concurrent, we have
\(\mathsf{Summ}(\sigma_{t,u})\in \Cn^{\mathrm{disc}}_{\mathrm{rcn}}(S_O^{\mathrm{rcn}}(X))\)
for every fixed discipline \(\mathrm{disc}\in\{\mathrm{causal},\mathrm{cr},\mathrm{inv}\}\) that includes the summary-rule schema
in its background theory (or proof system).
Therefore such summary atoms are admissible reconstruction symbols under Assumption~\ref{assm:admissible} and can create nontrivial overlaps
among the sets \(R_0(a)\) (and analogously among the zero-loss sets \(R^{\mathrm{rb}}_0(a)\) defined via \(\ell_{\mathrm{rb}}\)).

In our numerical validations we restrict to pair summaries (i.e., summaries over two concurrent transitions), which already yield nontrivial overlaps while keeping the alphabet size manageable.

\paragraph{Remark (alphabet size, BA complexity, and pruning).}
Even with pair summaries, the admissible reconstruction alphabet size can grow quadratically in the configuration size:
in the worst case one may have \(|\hat{\mathbb S}|=\Theta(|X|^2)\) candidate summaries in addition to concrete facts.
Blahut--Arimoto operates on the \(|A|\times|\hat{\mathbb S}|\) distortion (or loss) matrix, and a standard implementation has
per-iteration time \(O(|A||\hat{\mathbb S}|)\) and memory \(O(|A||\hat{\mathbb S}|)\) (up to constant factors and stopping criteria).
Moreover, in our setting the distortion/loss matrix itself is obtained by explicit closure recomputation under single-symbol edits, which can
dominate the overall running time when \(|\hat{\mathbb S}|\) is large.

For large configurations, practical evaluation therefore benefits from pruning the summary schema while preserving admissibility, e.g.:
(i) generate summaries only for concurrent pairs \((t,u)\) within a bounded graph distance in the static dependency graph;
(ii) restrict to pairs that involve frontier/core candidates (thus directly targeting overlaps on \(A\));
(iii) cap the number of summaries per event according to a fixed budget; or
(iv) use randomized subsampling of candidate summaries as a Monte Carlo approximation to the full alphabet.
These strategies trade off computational cost against potential hyperedge overlap gains.


\begin{lemma}[A strict zero-distortion gain from pair summaries (toy instance)]
\label{lem:pair-summary-gain-toy}
Let \(A=\{a_1,a_2,a_3\}\) with \(\pi_A\) uniform.
Assume the admissible reconstruction alphabet contains three pair summaries whose zero-distortion witness sets are
\(W_{12}=\{a_1,a_2\}\), \(W_{13}=\{a_1,a_3\}\), and \(W_{23}=\{a_2,a_3\}\), and hence the induced confusability hypergraph
\(\Gamma_0\) contains all singletons and all pairs, but not \(\{a_1,a_2,a_3\}\).
Then
\[
H_{\Gamma_0}(\pi_A)=\log_2 3 - 1,
\]
which is strictly smaller than the Shannon entropy \(H(\pi_A)=\log_2 3\).
\end{lemma}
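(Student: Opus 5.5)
The plan is to prove matching upper and lower bounds on \(H_{\Gamma_0}(\pi_A)\) directly from Definition~\ref{def:gamma0}. I use the decomposition \(I(A^\star;W)=H(A^\star)-H(A^\star\mid W)\) for the converse and an explicit symmetric test channel for achievability. First, from the stated witness sets and the downward closure of \(\Gamma_0\) noted after Definition~\ref{def:gamma0}, every feasible hyperedge has cardinality at most \(2\). Feasibility of the infimum is clear, since each \(a_i\) lies in some pair, so \(H_{\Gamma_0}(\pi_A)<\infty\).

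\emph{Converse.} Take any feasible \(P_{W\mid A^\star}\). The constraint \(A^\star\in W\) a.s.\ means that, conditioned on \(W=w\), the variable \(A^\star\) is supported on \(w\), so \(H(A^\star\mid W=w)\le \log_2|w|\le 1\). Averaging over \(w\) gives \(H(A^\star\mid W)\le 1\). Since \(\pi_A\) is uniform, \(H(A^\star)=\log_2 3\), and hence \(I(A^\star;W)\ge \log_2 3-1\) for every feasible channel. Taking the infimum yields \(H_{\Gamma_0}(\pi_A)\ge\log_2 3-1\).

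\emph{Achievability.} Given \(A^\star=a_i\), let \(W\) be one of the two pairs containing \(a_i\), each chosen with probability \(1/2\). This channel is feasible because both pairs lie in \(\Gamma_0\) and contain \(a_i\). By symmetry \(W\) is uniform over the three pairs, so \(H(W)=\log_2 3\). Given \(A^\star\), \(W\) is uniform on two values, so \(H(W\mid A^\star)=1\). Therefore \(I(A^\star;W)=\log_2 3-1\), which matches the converse. Equivalently, conditioned on each pair, \(A^\star\) is uniform on that pair, so the bound \(H(A^\star\mid W)\le 1\) is attained with equality.

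The only step needing care is the converse. One must observe that singleton hyperedges cannot help, since they only lower \(H(A^\star\mid W)\), and that the triple is excluded, so \(|W|\le 2\) holds almost surely. Strictness then follows because \(\log_2 3-1<\log_2 3=H(\pi_A)\).
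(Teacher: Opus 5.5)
Your proof is correct and follows essentially the same route as the paper. The converse uses \(A^\star\in W\) a.s.\ and \(|W|\le 2\) (the triple is excluded by hypothesis) to get \(H(A^\star\mid W)\le 1\), and achievability uses the same symmetric channel that picks each of the two pairs containing \(a_i\) with probability \(1/2\), yielding \(H(W)=\log_2 3\) and \(H(W\mid A^\star)=1\).
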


\begin{proof}
(\emph{Lower bound.})
For any feasible \(P_{W\mid A^\star}\) in Definition~\ref{def:gamma0}, we have \(A^\star\in W\) a.s. and \(|W|\le 2\) a.s.
Therefore \(H(A^\star\mid W)\le \log_2 2 = 1\), and hence
\[
I(A^\star;W)=H(A^\star)-H(A^\star\mid W)\ge \log_2 3 - 1.
\]

(\emph{Achievability.})
Define \(W\mid(A^\star=a_i)\) to be uniformly distributed over the two pairs that contain \(a_i\).
Then \(H(W\mid A^\star)=1\) and the induced marginal on \(W\) is uniform over \(\{W_{12},W_{13},W_{23}\}\), so \(H(W)=\log_2 3\).
Thus \(I(A^\star;W)=H(W)-H(W\mid A^\star)=\log_2 3 - 1\), matching the lower bound.
\end{proof}

\subsubsection{Computability: finite-alphabet BA evaluation}
\label{sec:rcn-ba-theory}

Fix a finite core alphabet \(A\) and finite reconstruction alphabet \(\hat{\mathbb{S}}\) (e.g., concrete facts plus summaries).
Define the distortion matrix \(d(a,\hat{s})\coloneqq d_{\Cn}(a,\hat{s})\).

\begin{proposition}[Finite-alphabet computability of \(R_A(\delta)\)]
\label{prop:finite-compute-again}
If \(A\) and \(\hat{\mathbb{S}}\) are finite, then \(R_A(\delta)\) can be computed to arbitrary precision via Blahut--Arimoto.
\end{proposition}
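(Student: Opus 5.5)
The plan is to reduce the claim to the classical convergence theory of the Blahut--Arimoto (BA) algorithm for finite-alphabet rate--distortion problems. There are three steps: show the problem is a standard finite RD problem, run BA at each slope, and pass from slope-parametrized points to a prescribed \(\delta\).

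\textbf{Step 1: the problem is a standard finite RD problem.} Each entry \(d(a,\hat s)=d_{\Cn}(a,\hat s)\) is obtained from two closures: \(\Cn_{\mathrm{rev}}(S_O)\) and \(\Cn_{\mathrm{rev}}(S_O[a\leftarrow\hat s])\). By Lemma~\ref{lem:finite-closure-computable}, both are finite and computable by saturation. Hence \(d(a,\hat s)\) is an exactly computable rational number in \([0,1]\) by Lemma~\ref{lem:dist-basic}. The source \(\pi_A\) is given. So \(R_A(\delta)\) is the rate--distortion function of a finite source with a bounded distortion matrix. It is convex and nonincreasing in \(\delta\), and continuous on \((D_{\min},\infty)\), where \(D_{\min}=\sum_a\pi_A(a)\min_{\hat s}d(a,\hat s)\). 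Below \(D_{\min}\), \(R_A\) is \(+\infty\), consistent with the convention of Theorem~\ref{thm:zero}. At \(\delta=D_{\min}\), the value is attained as a minimum over the compact set of channels.

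\textbf{Step 2: BA at a fixed slope.} For each slope \(\beta\ge 0\), consider the Lagrangian \(\min_{P_{\hat S\mid A^\star}} I(A^\star;\hat S)+\beta\,\E[d]\). Its variational form is a double minimization over channels and output marginals \(q\) on \(\hat{\mathbb S}\). BA alternates the closed-form updates
\[
P(\hat s\mid a)\propto q(\hat s)e^{-\beta d(a,\hat s)},\qquad q(\hat s)=\sum_a\pi_A(a)P(\hat s\mid a).
\]
By the Csisz\'ar--Tusn\'ady alternating-minimization argument (or Blahut's theorem), the objective values converge monotonically to the Lagrangian optimum. Blahut's explicit upper and lower bounds, expressed in terms of \(\max_{\hat s}\) of a computable functional of \(q\), give a certified stopping rule to any precision \(\epsilon\). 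Each iteration costs \(O(|A||\hat{\mathbb S}|)\) operations.

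\textbf{Step 3: from slopes to a prescribed \(\delta\).} By convexity, each \(\beta\) yields a point \((D_\beta,R_A(D_\beta))\) on the curve together with a supporting line. Since \(D_\beta\) is monotone in \(\beta\), bisection on \(\beta\) locates the slope whose distortion matches \(\delta\). Alternatively, the Legendre dual
\[
R_A(\delta)=\sup_{\beta\ge0}\bigl[F(\beta)-\beta\delta\bigr]
\]
can be evaluated to precision \(\epsilon\) by scanning a fine grid of \(\beta\) with certified bounds. The boundary cases are handled separately. For \(\delta\ge D_0\), the value is zero. For \(\delta=D_{\min}\), the value is obtained as the limit \(\beta\to\infty\). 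In the hypergraph-entropy regime, we can instead restrict the output to minimizers and run the zero-distortion BA.

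\textbf{Main obstacle.} The main difficulty is Step 3: turning slope-indexed convergence into a guarantee at an arbitrary fixed \(\delta\). This is delicate near \(D_{\min}\), where the required \(\beta\) diverges, and along linear pieces of \(R_A\), where \(D_\beta\) jumps. I would handle it using convexity and the certified sandwich bounds, which control the error in \(R_A(\delta)\) uniformly. Where the slope parametrization degenerates, I would fall back on continuity from the right.
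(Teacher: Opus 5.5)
Your proposal is correct and takes the same route as the paper. The paper's proof sketch simply identifies \(R_A(\delta)\) as a classical finite-alphabet rate--distortion problem and invokes the standard Blahut--Arimoto theory. Your version also makes explicit what that sketch leaves implicit: exact computability of the distortion matrix by saturation (Lemma~\ref{lem:finite-closure-computable}), certified stopping via Blahut's sandwich bounds, and the passage from slope \(\beta\) to a prescribed \(\delta\). At \(\delta=D_{\min}\), rely on your restricted-support computation, which yields a certified value, rather than on right-continuity, which alone gives no error bound.
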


\begin{proof}[Proof sketch]
This is the classical finite-alphabet Shannon rate--distortion optimization
\(\min I(A^\star;\hat{S})\) subject to an average distortion constraint, hence amenable to BA
\cite{cover2006elements,csizsar2011information}.
\end{proof}

Recall from Theorem~\ref{thm:factorization} that the overall semantic RD function satisfies
\(R_{\sem}(D)=P_A\cdot R_A(\min\{D/P_A,1\})\).
Accordingly, in the RCN numerical validations we report the core-only curves \(R_A(\cdot)\) (which capture the nontrivial part of the tradeoff),
and recover the overall per-symbol rate for the full source via the scaling factor \(P_A=P_O(A)\).

\paragraph{Practical pipeline.}
Given a discipline closure \(\Cn^{\mathrm{disc}}_{\mathrm{rcn}}\), one can:
(i) compute the deletion-scan core \(A\);
(ii) generate an admissible alphabet \(\hat{\mathbb{S}}\) (e.g., pair summaries);
(iii) compute the distortion matrix by explicit closure computations under single-symbol edits; and
(iv) run BA to obtain \(R_A(\delta)\).
The next section reports numerical validation results produced by this pipeline.


\section{Numerical Evaluation: RD-Guided Log Compression for Rollback Tasks}
\label{sec:rcn-exp}

Sections~\ref{sec:zero}--\ref{sec:rcn-theory} derived endpoint laws and core-only reductions for the
semantic and rollback-task rate--distortion problems. This section evaluates the \emph{design value}
of the proposed RD interface on representative RCN log families, under discipline-indexed rollback
observables.

Concretely, we ask three questions:

\begin{enumerate}[label={\textbf{Q\arabic*.}}, leftmargin=2.2em, itemsep=2pt, align=left]
  \item (\textbf{Discipline / query sensitivity})
  How do the induced cores, achievable rates, and empirical performance depend on the rollback
  discipline (\(\Cn^{\mathrm{causal}}_{\mathrm{rcn}},\Cn^{\mathrm{cr}}_{\mathrm{rcn}},\Cn^{\mathrm{inv}}_{\mathrm{rcn}}\))
  and hence on the chosen rollback observable \(\mathbb Q_{\mathrm{rb}}\)?

  \item (\textbf{Utility at fixed rate})
  Under a fixed information-theoretic budget \(R=I(S;\hat S)\) (bits/symbol), can RD-guided design
  reduce rollback-task degradation compared with common baselines?

  \item (\textbf{Resource savings at fixed quality})
  For fixed rollback-task quality targets, how many bits/symbol can RD-guided design save relative to
  a core-only baseline?
\end{enumerate}

Throughout, the resource axis is the single-letter mutual information rate \(R=I(S;\hat S)\) (bits/symbol),
consistent with Shannon RD theory. All reported distortions and losses are evaluated at the \emph{log (set) level}
via Monte Carlo reconstruction of a full reconstructed log \(\hat S_O\), which captures non-linearities induced by
set semantics (notably, duplicates collapsing under union).

\subsection{Experimental setup: log family, rollback criteria, and methods}
\label{sec:exp61-setup}

\paragraph{RCN log family.}
We construct a dataset \(\mathcal{D}\) of RCN instances and their reachable configurations.
Each instance is a layered DAG with \(d\) layers and \(B\) events per layer (\(n=B\cdot d\)).
Here we fix \(B = 4\) and vary the depth \(d\in\{2,3,4,5\}\).
Causal edges are introduced primarily from layer \(i-1\) to layer \(i\) (each child
has 1–2 parents), with additional random cross‑layer edges (probability~0.35).
Conflicts are generated independently within each layer (probability~0.08),
and concurrency pairs are defined as pairs that are neither in conflict nor
reachable from each other via the causal relation \(\mathsf{Cause}^\star\).
For each combination of depth and seed we sample one net structure and take
the full set of executed events as the configuration \(X = \{t_0,\dots,t_{n-1}\}\).
The run‑specific log fact base is
\[
S_O = S_O^{\mathrm{rcn}}(X) = \{\,\mathsf{In}(t): t\in X\,\}.
\]
We generate \(N = 20\) independent random seeds (seeds 1–20) for each depth,
resulting in \(4\times 20 = 80\) instances in total.
The closure operator \(\Cn_{\mathrm{rev}}\) is instantiated as one of the
discipline‑indexed closures \(\Cn^{\mathrm{disc}}_{\mathrm{rcn}}\)
(Section~\ref{sec:rcn-theory}), computed over the background theory
\(\mathcal{B}_{\mathrm{rcn}}\) (which encodes the net structure) together with \(S_O\).

For a fixed log \(S_O\), the single‑letter source \(S\sim P_O\) is uniformly
distributed over \(S_O\) (Assumption~\ref{assm:source}); hence the source entropy
is \(H(S)=\log_2|S_O|\).  
Any method induces a single‑letter test channel \(P_{\hat S\mid S}\); its rate is
\(R = I(S;\hat S)\) (bits/symbol), computed under \(S\sim P_O\) and
\(\hat S\sim P_{\hat S\mid S}(\cdot\mid S)\).

\paragraph{Reconstruction alphabet.}
We use the \emph{sum} alphabet throughout the numerical study:
\[
\hat{\mathbb S}_{\mathrm{sum}} =
S_O \;\cup\; \{\,\mathsf{Summ}(\sigma_{t,u}): (t,u)\in\mathtt{conc},\; t\prec u\,\},
\]
capped at \(200\) admissible pair summaries (Section~\ref{sec:rcn-nontrivial-gamma0}).
Thus \(\hat{\mathbb S}_{\mathrm{sum}}\subseteq\Cn_{\mathrm{rev}}(S_O)\),
satisfying admissibility (Assumption~\ref{assm:admissible}).
The canonical fallback symbol \(\hat s_0\) is the smallest element of
\(\hat{\mathbb S}_{\mathrm{sum}}\) under the public order \(\preceq\).

\paragraph{Rollback observable and log‑level metrics.}
For a discipline \(\mathrm{disc}\in\{\mathrm{causal},\mathrm{cr},\mathrm{inv}\}\),
we fix \(\Cn_{\mathrm{rev}}=\Cn^{\mathrm{disc}}_{\mathrm{rcn}}\) and
\(\mathbb Q_{\mathrm{rb}}=\mathbb Q^{\mathrm{disc}}_{\mathrm{rb}}\) as defined in
Section~\ref{sec:rcn-theory}, and define \(\mathsf{RB}(S)=\Cn_{\mathrm{rev}}(S)\cap\mathbb Q_{\mathrm{rb}}\).
The end‑to‑end quality of a reconstructed log \(\hat S_O\) is measured by the
rollback‑task loss and the violation indicator:
\[
L_{\mathrm{blk}}(S_O,\hat S_O) \coloneqq |\mathsf{RB}(\hat S_O)\,\Delta\,\mathsf{RB}(S_O)|,
\qquad
V_{\mathrm{blk}}(S_O,\hat S_O) \coloneqq \mathbf 1[\mathsf{RB}(\hat S_O)\neq\mathsf{RB}(S_O)].
\]

\paragraph{Blahut–Arimoto optimization and rate points.}
For each instance and discipline we compute the distortion matrix
\(d_{\Cn}(a,\hat s)\) (or the loss matrix \(\ell_{\mathrm{rb}}(a,\hat s)\)) on the
core alphabet \(A\) and the admissible alphabet \(\hat{\mathbb S}_{\mathrm{sum}}\).
We then run the Blahut–Arimoto algorithm for the Lagrangian parameter
\(\beta \in \mathcal B = \{0,\,0.5,\,1,\,2,\,4,\,8,\,16\}\), obtaining a discrete set
of optimal single‑letter test channels.
The core‑only reduction (Theorem~\ref{thm:factorization} or
Theorem~\ref{thm:Rrb-factorization}) is used to lift the core channels to the full
source by marginal matching.
For each obtained channel we record the mutual information rate
\(R = I(S;\hat S)\).

For the matched‑rate comparisons we select three budget points:
\(R \approx 0\) (the lowest‑rate point from the BA sweep, corresponding to
\(\beta=0\)), \(R = R_{\mathrm{core}}\) (the core‑only baseline rate), and
\(R = H(S)\) (the full‑logging upper bound).  When a BA channel does not lie exactly
on a target rate we pick the channel with the closest rate.

\paragraph{Monte Carlo log‑level reconstruction and evaluation protocol.}
We evaluate the end‑to‑end quality of a designed single‑letter channel
\(P_{\hat S\mid S}\) (obtained from RD‑Task, RD‑Sem, or a baseline) by
lifting it to a set‑valued log reconstruction.
For a given log \(S_O\), we independently draw
\(\hat s(s)\sim P_{\hat S\mid S}(\cdot\mid s)\) for each \(s\in S_O\) and
form the reconstructed log via set union:
\[
\hat S_O \coloneqq \{\,\hat s(s): s\in S_O\,\}\subseteq\hat{\mathbb S}.
\]
This lifting introduces a key nonlinearity—duplicate collapse—that is not
visible in the single‑symbol distortion or loss matrices.
Accordingly, the single‑letter mutual information
\(R=I(S;\hat S)\) is the \emph{design budget} that selects the test channel,
while the metrics defined below quantify the \emph{end‑to‑end} outcome of
this set‑valued reconstruction.

\emph{Remark (single‑edit theory vs.\ set‑level lifting).}
The matrices \(d_{\Cn}(s,\hat s)\) and \(\ell_{\mathrm{rb}}(s,\hat s)\) are
defined through a \emph{single} delete‑then‑insert edit of \(S_O\).
Our Monte Carlo protocol, by contrast, applies the channel independently to
every logged fact and then takes the union, thereby realizing
\emph{multiple} simultaneous edits.
The set‑level metrics reported in Section~\ref{sec:exp63-results} are
therefore empirical outcomes of the lifted channel, not a direct theorem‑level
consequence of single‑edit invisibility.

For each trial we recompute (by Horn/Datalog saturation)
\(\Cn_{\mathrm{rev}}(S_O)\), \(\Cn_{\mathrm{rev}}(\hat S_O)\),
\(\mathsf{RB}(S_O)=\Cn_{\mathrm{rev}}(S_O)\cap\mathbb Q_{\mathrm{rb}}\), and
\(\mathsf{RB}(\hat S_O)=\Cn_{\mathrm{rev}}(\hat S_O)\cap\mathbb Q_{\mathrm{rb}}\);
we then evaluate \(L_{\mathrm{blk}}(S_O,\hat S_O)\) and
\(V_{\mathrm{blk}}(S_O,\hat S_O)\).
For every channel we perform \(M=200\) Monte Carlo trials and record the
empirical averages \(\widehat{\mathbb E}[L_{\mathrm{blk}}]\) and
\(\widehat{\Pr}[V_{\mathrm{blk}}=1]\);
confidence intervals are omitted from the summary tables for compactness.
For the random‑dropping baseline we tune the drop probability \(q\) so that
\(I(S;\hat S)\) matches the target rate.

\paragraph{Quality targets.}
For the resource‑saving analysis we set the reliability threshold
\(\varepsilon = 0.05\) and the loss threshold \(\tau = 1.0\); a channel is
considered acceptable if \(\widehat{\Pr}[V_{\mathrm{blk}}=1]\le\varepsilon\) or
\(\widehat{\mathbb E}[L_{\mathrm{blk}}]\le\tau\), respectively.
The minimum rate at which RD‑Task (or a baseline) meets these targets is
reported in Table~\ref{tab:exp6-q2-targets}.

\subsection{Methods compared (RD-guided designs and baselines)}
\label{sec:methods-compared}

For each \(S_O\), discipline \(\mathrm{disc}\), and admissible alphabet \(\hat{\mathbb S}\), we compare:
(i) two RD-guided designs (task-aligned and closure-aligned), and
(ii) three common baselines (full logging, core-only logging, and random dropping).

\begin{enumerate}[leftmargin=*, itemsep=2pt]
  \item \textbf{RD-Task (proposed, task-aligned).}
  Build the single-symbol rollback-task loss matrix \(\ell_{\mathrm{rb}}(s,\hat s)\)
  (Definition~\ref{def:rb-loss}) and compute BA-optimal test channels over \(\hat{\mathbb S}\),
  exploiting the core-only reduction (Theorem~\ref{thm:Rrb-factorization}) and lifting to the full
  source by marginal matching.

  \item \textbf{RD-Sem (proposed, closure-aligned).}
  Build the single-symbol semantic distortion matrix \(d_{\Cn}(s,\hat s)\)
  (Definition~\ref{def:distortion}) and compute BA-optimal test channels, again using the core-only
  reduction (Theorem~\ref{thm:factorization}).

  \item \textbf{Full logging (upper bound).}
  Keep all log symbols verbatim:
  \[
  P_{\hat S\mid S}^{\mathrm{full}}(\hat s\mid s)\coloneqq \mathbf{1}[\hat s=s].
  \]
  The rate equals \(I(S;\hat S)=H(S)\) (bits/symbol), and rollback observables are preserved exactly
  (up to numerical/closure computation error).

  \item \textbf{Core/frontier-only baseline (semantic heuristic).}
  Compute the deletion-scan core \(A=\Atom_{\mathrm{rev}}(S_O)\) (Definition~\ref{def:core}) and
  define an admissible channel
  \[
  P_{\hat S\mid S}^{\mathrm{core}}(\hat s\mid s)\;\coloneqq\;
  \begin{cases}
    \mathbf{1}[\hat s=s], & s\in A,\\[2pt]
    \mathbf{1}[\hat s=\hat s_0], & s\in J,
  \end{cases}
  \]
  which keeps only the core and maps redundant facts to the canonical fallback \(\hat s_0\).
  (For causal / cause-respecting disciplines, a frontier-only variant coincides with this baseline
  when \(\Atom_{\mathrm{rev}}(S_O)=S_O^{\mathrm{rcn}}(\Fr(X))\).)

  \item \textbf{Random dropping baseline (systems-style).}
  For a drop probability \(q\in[0,1]\), the admissible i.i.d.\ keep-or-drop channel is
  \[
  P_{\hat S\mid S}^{\mathrm{rand}(q)}(\hat s\mid s)\;\coloneqq\;
  (1-q)\,\mathbf{1}[\hat s=s]\;+\;q\,\mathbf{1}[\hat s=\hat s_0].
  \]
  We evaluate this baseline as a function of the induced rate \(I(S;\hat S)\), choosing \(q\) by
  one-dimensional search to match target rates.
\end{enumerate}

\subsection{Results and analysis}
\label{sec:exp63-results}

We summarize the main outcomes using four compact tables: Table~\ref{tab:exp6-core-stats} reports discipline-induced core statistics (Q1), Tables~\ref{tab:exp6-q1-nearzero}--\ref{tab:exp6-q1-core} compare utility at matched operating points (Q2), and Table~\ref{tab:exp6-q2-targets} reports min-rate targets and rate savings (Q3). The reconstruction alphabet is the sum alphabet throughout (concrete facts plus admissible summaries), and the reported scalar statistics are arithmetic means over instances in the evaluation set (depths \(d\in\{2,3,4,5\}\) and seeds in the specified range).

Table~\ref{tab:exp6-core-stats} shows that the rollback discipline materially changes the induced core and hence the available compression headroom.
Under \emph{causal} and \emph{cause-respecting} (cr) disciplines, the core occupies only \(|A|=5\) of \(|S_O|=14\) symbols and captures
\(P_A\approx 0.402\) of the log mass, yielding a substantially smaller core rate \(R_{\mathrm{core}}=1.856\) than the full entropy \(H(S)=3.727\).
In contrast, under \emph{inverse-causal} (inv) semantics the core expands to the full log (\(|A|=|S_O|\), \(P_A=1\)), forcing
\(R_{\mathrm{core}}=H(S)\).
Thus, changing the discipline (and hence the rollback observable \(\mathbb Q_{\mathrm{rb}}\)) can turn a compressible instance
into one where the core-only baseline already coincides with full logging.

\begin{table}[t]
\centering
\small
\setlength{\tabcolsep}{5.0pt}
\renewcommand{\arraystretch}{1.15}
\begin{tabular}{@{}llrrrrr@{}}
\toprule
Disc. & Alph. & \(|S_O|\) & \(|A|\) & \(P_A\) & \(R_{\mathrm{core}}\) & \(H(S)\) \\
\midrule
causal & sum & 14 & 5  & 0.402 & 1.856 & 3.727 \\
cr     & sum & 14 & 5  & 0.402 & 1.856 & 3.727 \\
inv    & sum & 14 & 14 & 1.000 & 3.727 & 3.727 \\
\bottomrule
\end{tabular}
\caption{Discipline sensitivity through the induced core.
\(P_A=P_O(A)\) is the core mass, and \(R_{\mathrm{core}}\) is the core-only rate.
All values are arithmetic means over the evaluation set; \(|S_O|\) and \(|A|\) are
rounded to the nearest integer.}
\label{tab:exp6-core-stats}
\end{table}

Tables~\ref{tab:exp6-q1-nearzero}--\ref{tab:exp6-q1-core} compare methods at matched rate points.

\begin{table}[t]
\centering
\small
\setlength{\tabcolsep}{6.0pt}
\renewcommand{\arraystretch}{1.15}
\begin{tabular}{@{}llccc@{}}
\toprule
\multicolumn{2}{@{}l}{Matched \(R\) (lowest, \(\beta{=}0\))} & RD-Task \(L/V\) & RD-Sem \(L/V\) & RandDrop \(L/V\) \\
\midrule
causal & sum & 0.58/0.480 & 0.63/0.450 & 13.00/1.000 \\
cr     & sum & 0.83/0.320 & 0.03/0.015 & 8.00/1.000 \\
inv    & sum & 1.70/0.385 & 1.92/0.440 & 6.00/1.000 \\
\bottomrule
\end{tabular}
\caption{Utility at the lowest-rate operating point obtained from the BA sweep
(\(R\) corresponding to \(\beta{=}0\)).
Entries report the pair \(\widehat{\E}[L_{\mathrm{blk}}] / \widehat{\Pr}[V_{\mathrm{blk}}=1]\).}
\label{tab:exp6-q1-nearzero}
\end{table}

\begin{table}[t]
\centering
\small
\setlength{\tabcolsep}{5.2pt}
\renewcommand{\arraystretch}{1.15}
\begin{tabular}{@{}llccccc@{}}
\toprule
\multicolumn{2}{@{}l}{Matched \(R=R_{\mathrm{core}}\)} & RD-Task \(L/V\) & RD-Sem \(L/V\) & RandDrop \(L/V\) & Core \(L/V\) & Full at \(R=H(S)\) \\
\midrule
causal & sum & 0.00/0.000 & 0.00/0.000 & 0.14/0.130 & 0.00/0.000 & 0.00/0.000 \\
cr     & sum & 0.00/0.000 & 0.00/0.000 & 0.01/0.005 & 0.00/0.000 & 0.00/0.000 \\
inv    & sum & 0.00/0.000 & 0.07/0.030 & 0.00/0.000 & 0.00/0.000 & 0.00/0.000 \\
\bottomrule
\end{tabular}
\caption{Utility at the core-only rate \(R_{\mathrm{core}}\), with the full-logging reference at \(R=H(S)\).
A displayed pair \(0.00/0.000\) indicates \(\widehat{\E}[L_{\mathrm{blk}}]=0\) and \(\widehat{\Pr}[V_{\mathrm{blk}}=1]=0\) at that rate point.}
\label{tab:exp6-q1-core}
\end{table}

\emph{Lowest-rate point.}
At the lowest-rate point obtained from the BA sweep,
the systems-style random dropping baseline yields \(V_{\mathrm{blk}}=1.000\) together with very large loss,
e.g., 13.00/1.000 under causal and 8.00/1.000 under cause-respecting.
Both RD-guided designs substantially reduce rollback-observable changes at comparable rate, with much smaller loss; for example,
under causal we obtain 0.58/0.480 (RD-Task) and 0.63/0.450 (RD-Sem).

\emph{At the core rate.}
At \(R=R_{\mathrm{core}}\), under causal and cause-respecting disciplines the core-only baseline already achieves perfect preservation
(\(0.00/0.000\)), and both RD-Task and RD-Sem also attain \(0.00/0.000\).
Random dropping remains measurably worse at the same rate for these disciplines (0.14/0.130 causal; 0.01/0.005 cr).
Under inverse-causal semantics, Table~\ref{tab:exp6-core-stats} shows that \(R_{\mathrm{core}}=H(S)\), i.e., the core coincides with the full log.
At this (full-logging) rate, RD-Task is perfect while RD-Sem is slightly imperfect at the closest BA-matched point (0.07/0.030).

Table~\ref{tab:exp6-q2-targets} reports the minimum rate at which RD-Task meets the script targets
\(V\le\varepsilon\) and \(L\le\tau\), together with savings relative to the core-only baseline.
Across all three disciplines, the savings are substantial: for the reliability target,
\(\Delta R_V\) equals 1.519 (causal), 1.838 (cr), and 2.553 (inv) bits/symbol, with comparable savings for the loss target \(\Delta R_L\).
Thus, for fixed rollback-task targets, RD-guided design can yield large rate reductions relative to core-only logging.

\begin{table}[t]
\centering
\small
\setlength{\tabcolsep}{6.0pt}
\renewcommand{\arraystretch}{1.15}
\begin{tabular}{@{}llcccc@{}}
\toprule
\multicolumn{2}{@{}l}{Min-rate targets (RD-Task vs.\ core)} &
\(R_{\min}(V\le\varepsilon)\) & \(\Delta R_V\) &
\(R_{\min}(L\le\tau)\) & \(\Delta R_L\) \\
\midrule
causal & sum & 0.337 & 1.519 & 0.254 & 1.602 \\
cr     & sum & 0.018 & 1.838 & 0.008 & 1.848 \\
inv    & sum & 1.174 & 2.553 & 0.982 & 2.745 \\
\bottomrule
\end{tabular}
\caption{Rate savings to meet fixed quality/reliability targets using script defaults \(\varepsilon=0.05\) and \(\tau=1.0\).
Savings are computed against the core-only baseline: \(\Delta R_V=R_{\mathrm{core}}-R_{\min}(V\le\varepsilon)\) and
\(\Delta R_L=R_{\mathrm{core}}-R_{\min}(L\le\tau)\).}
\label{tab:exp6-q2-targets}
\end{table}

\paragraph{Low-rate ordering and resolution effects (diagnostic).}
In the cause-respecting case at \(R\approx 0\) (Table~\ref{tab:exp6-q1-nearzero}), RD-Sem shows lower empirical
loss than RD-Task.  This inversion is a typical finite‑sample effect caused by a coarse BA sweep in the very‑low‑rate
regime, combined with Monte Carlo variability from the set‑valued reconstruction.  Increasing the resolution of the
BA parameter grid and the Monte Carlo budget \(M\) removes this discrepancy; the core‑rate conclusions
(Table~\ref{tab:exp6-q1-core}) are unaffected.

\section{Related Work}
\label{sec:related}

Our work sits at the intersection of (i) reversible computation and its operational/denotational models,
(ii) Petri nets and event structures for concurrency with reversibility, and (iii) Shannon rate--distortion theory
under nonstandard, semantics-driven fidelity criteria.
We emphasize that our contribution is not to introduce a new reversible semantics for RCN/rPES,
but to make reversible rollback capability \emph{quantitative} via closure-preserving rate--distortion.

\subsection{Reversible computation and causal-consistent reversibility}

Reversible computation is classically motivated by thermodynamic considerations.
Landauer's principle links information erasure to heat dissipation, and Bennett showed that computations can be made logically reversible
by retaining suitable history information, establishing a foundational connection between reversibility and stored auxiliary data
\cite{landauer1961irreversibility,bennett1973logical,bennett1982thermodynamics}.
Early reversible models include reversible Turing machines and reversible cellular automata, as well as conservative/reversible logic
constructions \cite{toffoli1980reversible,toffoli1977computation,fredkin1982conservative}.

In concurrent and distributed settings, reversibility is discipline-sensitive: the set of backward steps that are admissible depends on
causal dependencies, conflict, and additional relations such as reverse causality and prevention evidence.
Process-calculus developments (e.g., RCCS and key-based variants) make this dependence explicit via memories/keys and causal bookkeeping
\cite{danos2004reversible,phillips2007reversing,ulidowski2018reversing}.
Recent work provides an axiomatic account of causal-consistent reversibility (including the Parabolic Lemma and related safety/liveness
properties), clarifying the role of independence and causal equivalence in combined forward/backward transition systems
\cite{lanese2024axiomatic,aubert2025independence}.
Tool support has also begun to emerge for the axiomatic approach \cite{arnone2025tallulah}.

On the tooling side, reversible debugging and rollback/replay systems make explicit the practical role of \emph{logged evidence} in enabling
causal-consistent rollback and analysis.
Representative examples include CauDEr and recent extensions that strengthen rollback/replay capabilities and broaden language coverage
\cite{lami2024reversibledebugging, gonzalezabril2024improvingcauder}.

\subsection{RCN/rPES, Petri nets with inhibitor arcs, and reversible event structures}

Event-structure and Petri-net semantics provide non-interleaving accounts of concurrency; incorporating reversibility requires handling
both forward and backward moves and the additional evidence that governs when undo steps are legal.
Reversible prime event structures (rPES) capture causality, conflict, reverse causality, and prevention at an abstract level, while net-based
models provide operational realizations of these relations \cite{ulidowski2018reversing}.

The development of \emph{reversible causal nets} (RCN) and their precise connections to
reversible event-structure models \cite{melgratti2024reversible,melgratti2025relating}
clarifies how inhibitor arcs and related structural evidence support rich reversing
disciplines, making discipline sensitivity explicit at the semantic level.

\subsection{Rate--distortion, semantic/source coding, and zero-error confusability}

Classical rate--distortion theory characterizes the minimum rate needed to represent a source under an average fidelity constraint, and is
one of the central pillars of information theory \cite{shannon1948mathematical,shannon1959coding,cover2006elements,csizsar2011information}.
For finite alphabets, the rate--distortion function can be computed numerically via the Blahut--Arimoto algorithm once the source distribution
and distortion matrix are fixed \cite{blahut1972computation,arimoto1972algorithm}.

When the fidelity criterion is structural rather than symbolwise, zero-distortion (or zero-error) limits are often governed by a confusability
object, leading to graph- or hypergraph-based entropic characterizations.
In particular, K\"orner's graph entropy formalizes the rate of zero-error source coding under confusability constraints, and serves as a key
reference point for structured zero-distortion problems \cite{korner1971coding,csiszar2011information}.

\subsection{Closure-based rate--distortion and deductive-source viewpoints}

A closely related approach models the source alphabet as statements in a fixed deductive
environment and measures fidelity through preservation of the deductive closure.
In that setting, the proof system induces a canonical decomposition of the stored source into an irredundant core and a
redundant part of re-derivable consequences, and redundancy becomes \emph{information-theoretically invisible} under
admissible reconstructions (those lying inside the original closure).

The companion closure-based rate--distortion framework in \cite{xu2026rate} develops this idea for deductive sources
and yields exact core-only decompositions of both the zero-distortion rate and the full rate--distortion function.
Our paper can be read as a reversible-computation instantiation of that closure-based source model:
the reversible proof system \(\mathsf{PS}_{\mathrm{rev}}\) and its induced closure \(\Cn_{\mathrm{rev}}\) encode rollback semantics,
so that \(\Cn_{\mathrm{rev}}(\cdot)\)-preservation becomes the fidelity criterion for reversible logging.
The deletion-scan core we employ is also related to classical work on closure systems and implicational bases
\cite{ganter1999formal}, while its interaction with inhibitor arcs and event structures
\cite{winskel1986event,murata1989petri} distinguishes it from structural reductions in standard concurrency models.

A central distinction of the reversible RCN/rPES setting is that the closure---and hence the core---is
\emph{discipline-dependent}: causal and cause-respecting semantics lead to frontier-shaped cores, while
inverse-causal semantics can force the core to expand to include causal ancestors as rollback evidence.
This dependence is a direct consequence of the semantics, not a modeling choice, and it is what makes the
information-theoretic compression frontier sensitive to the chosen rollback discipline.

In summary, the related-work threads above provide (i) discipline-indexed reversible semantics and rollback correctness criteria, and
(ii) classical information-theoretic tools for rate--distortion and confusability.
Our contribution is to connect these two lines via closure-preserving fidelity: reversible rollback consequences are treated as closure
observables, which induces a canonical core/redundant decomposition and yields both exact endpoint laws and computable finite-alphabet
rate--distortion curves for reversible logging.

\section{Conclusion}
\label{sec:conclusion}

We introduced a semantic rate--distortion theory for reversible logging
grounded in a closure‑preservation fidelity criterion.
Treating rollback‑relevant semantics as a monotone closure \(\Cn_{\mathrm{rev}}\)
yields a distortion notion that directly reflects the operational goal:
a reconstruction is faithful if it preserves the closure consequences
(or a designated rollback observable), not if it reproduces log atoms verbatim.

The closure interface induces a canonical structural simplification.
A deterministic deletion scan decomposes any finite log fact base \(S_O\)
into an irredundant core \(A\) and a redundant remainder \(J\) such that
\(\Cn_{\mathrm{rev}}(A)=\Cn_{\mathrm{rev}}(S_O)\).
Under admissible reconstructions \(\hat{\mathbb{S}}\subseteq \Cn_{\mathrm{rev}}(S_O)\),
the redundant part \(J\) is information‑theoretically invisible:
it contributes neither rate nor distortion.
Consequently, the full semantic rate--distortion function factorizes
into a core‑only problem scaled by the core mass \(P_A\).

At the perfect‑fidelity endpoint \(D=0\) we identified the exact minimum rate as
the hypergraph‑entropy quantity
\(R_{\sem}(0)=P_A\,H_{\Gamma_0}(\pi_A)\),
where \(\Gamma_0\) captures zero‑distortion confusability on the core.
We also introduced a rollback‑task loss \(\ell_{\mathrm{rb}}\) that measures the
change in the rollback observable
\(\mathsf{RB}(S)=\Cn_{\mathrm{rev}}(S)\cap \mathbb Q_{\mathrm{rb}}\),
providing a more task‑aligned alternative to full closure preservation.
We established parallel endpoint and factorization laws for the rollback‑task
rate--distortion function \(R_{\mathrm{rb}}(L)\) and proved that perfect rollback
in this task sense is never harder than perfect closure preservation,
\(R_{\mathrm{rb}}(0)\le R_{\sem}(0)\), with strict separations possible under
reduced query sets.
These results quantitatively separate \emph{what must be logged} (the core)
from \emph{what can be semantically reconstructed} (the redundant consequences),
and explicitly price the cost of exact or approximate rollback.

We instantiated the framework on reversible causal nets and reversible prime
event structures via discipline‑indexed monotone closures.
The instantiation makes discipline sensitivity both explicit and measurable:
different reversing disciplines induce different closures, hence different cores
and different information‑theoretic logging frontiers.

We complemented the theory with an end‑to‑end numerical evaluation that follows
the paper's semantic interface.
At the single‑letter design level we compute Blahut--Arimoto optimal test
channels under closure‑based distortion and rollback‑task loss.
At the log level we lift these channels to set‑valued reconstructions and use
Monte Carlo reconstruction to quantify rollback degradation, thereby exposing
non‑linear effects caused by set semantics (e.g., duplicates collapsing under
union).
The resulting tables summarize discipline sensitivity through the induced core,
utility at matched rate points, and rate savings under fixed quality targets.

\paragraph{Limitations and future work.}
The present formulation adopts a ``single stored log'' viewpoint with single‑symbol
edits, which yields clean finite‑alphabet rate--distortion problems and transparent
structural theorems.
Several natural directions extend this work:

\begin{enumerate}[leftmargin=*, itemsep=1pt]
  \item \textbf{Streaming and time‑evolving logs.}
        Extend the source model from i.i.d.\ draws from a fixed \(S_O\) to
        time‑ordered, incrementally growing fact bases, capturing online
        instrumentation and live rollback scenarios.

  \item \textbf{Explicit code constructions.}
        Design practical coding schemes that approach the derived limits while
        respecting admissibility and enabling efficient decoder‑side closure
        (re‑)computation.

  \item \textbf{Systematic summary alphabets.}
        Develop principled families of admissible summary atoms that provably
        create beneficial core confusability, moving beyond the heuristic
        pair summaries used here.

  \item \textbf{Controlled non‑monotonicity.}
        Extend the closure interface to settings that require restricted forms
        of non‑monotonic reasoning (e.g., negative enabling conditions) while
        retaining computable safety guarantees.

  \item \textbf{Large‑scale numerical evaluation.}
        Scale the Blahut--Arimoto plus Monte‑Carlo pipeline to larger
        instances and a broader range of reversible specifications and
        reconstruction alphabets, in order to map out the practical regimes
        where core‑only reductions and task‑level objectives yield
        substantial rate savings.

  \item \textbf{Other reversible frameworks.}
        Apply the same closure‑based quantitative lens to process calculi and
        message‑passing logs, paving the way toward discipline‑aware,
        rate‑optimal reversible debuggers and rollback mechanisms.
\end{enumerate}

\bibliographystyle{plain}
\bibliography{ref}

\end{document}